\date{}
\newtheorem{theorem}{Theorem}[section]
\newtheorem{remark}[theorem]{Remark}
\newtheorem{lemma}[theorem]{Lemma}
\newtheorem{proposition}[theorem]{Proposition}
\newtheorem{definition}[theorem]{Definition}
\newtheorem{example}[theorem]{Example}
\title{Finite-Frequency Model Order Reduction of Linear Systems via Parameterized Frequency-dependent Balanced Truncation
\thanks{This work was supported by NSFC under Grant (61304143) and the High-End Foreign Expert Program of the P.~R.~China (GDT20153100033). }}
\author{Xin~Du$^{1,2}$~~
Peter~Benner$^{1,2}$
\thanks{$^1$ Max Planck Institute for
Dynamics of Complex Technical Systems, Sandtorstra\ss e 1, 39106
Magdeburg, Germany. \texttt{benner@mpi-magdeburg.mpg.de}} \\ 
\thanks{$^2$ School of Mechatronic Engineering and Automation, Shanghai University,  Shanghai, 200072, P.~R.~China. \texttt{duxin@shu.edu.cn}}
}
\begin{document}
\centerfigcaptionstrue

\maketitle 

\begin{abstract}
Balanced truncation is the most commonly used model order reduction scheme in control engineering.  This is due to its favorable properties of automatic stability preservation and the existence of a computable error bound, enabling the adaption of the reduced model order to a specified tolerance.  It aims at minimizing the worst case error of the frequency response over the full infinite frequency range. If a good approximation only over a finite frequency range is required, frequency-weighted or frequency-limited balanced truncation variants can be employed.
In this paper, we study this finite-frequency model order reduction (FF-MOR) problem for linear time-invariant (LTI) continuous-time systems within the framework of balanced truncation. Firstly, we construct a family of parameterized frequency-dependent (PFD) mappings which transform the given LTI system to either a discrete-time or continuous-time PFD system. The relationships between the maximum singular value of the given LTI system over pre-specified frequency ranges and the maximum singular value of the PFD mapped systems over the entire frequency range are established. By exploiting the properties of the discrete-time PFD mapped systems, a new parameterized frequency-dependent balanced truncation (PFDBT) method providing a finite-frequency type error bound with respect to the maximum singular value of the error systems is developed. Examples are included for illustration.
\end{abstract}

\noindent{\bf Keywords:}
balanced truncation, parameterized frequency-dependent balanced truncation, KYP lemma, generalized KYP lemma, parameterized frequency-dependent bounded real lemma.



%

\section{Introduction}

\subsection{Problem Formulation}
Model order reduction (MOR) is an ubiquitous tool in the analysis and simulation of dynamical systems, control design, circuit simulation, structural dynamics, computational fluid dynamics, and many more areas in the computational sciences and engineering; see, e.g., \cite{morAnt05,morBenMS05,morBenHM11,morSchVR08}. Modeling of complex physical processes often leads to dynamical systems with high-dimensional state-spaces, so that the corresponding system is of large order $n$.  This may lead to difficulties in the simulation, optimization, control and design of such systems due to memory restrictions and (run) time limitations for the execution of the related algorithms. In general, the purpose of MOR is to produce a lower dimensional system that has similar response characteristics as the original system with far lower storage requirements and largely reduced evaluation time. In this paper, we focus on the MOR problem for linear time-invariant (LTI) dynamical systems:
\begin{equation}\label{originalsystem}
G:\quad\left\{\begin{array}{l}
{\dot x(t) = Ax(t) + Bu(t)}  \\
   {y(t) = Cx(t) + Du(t)}  \\
   \end{array}\right.
  ~\Leftrightarrow~ G(\jmath\omega ):=  C{(\jmath\omega I - A)^{ - 1}}B + D,
\end{equation}
\noindent
where $A\in \mathbb R^{n\times n},  B \in \mathbb
R^{n\times m},  C \in \mathbb R^{p\times n},  D \in \mathbb R^{p\times
m}$, $x(t) \in \mathbb R^n$ is the state vector, $u(t)\in \mathbb
R^m$ is the input signal, $y(t)\in \mathbb C^p$ is the output
signal. The imaginary unit is denoted by $\jmath$, and $\omega\in\mathbb R$ is related to the operating frequency $f$ (measured in Hertz) of the LTI system via $\omega=2\pi f$ .
By abuse of notation, we denote the LTI system as well as its transfer function by $G$.
A realization of the LTI system (\ref{originalsystem}) is given by the matrix tuple ($A, B, C, D$).
When appropriate, we will also use the equivalent notation $ \pmatset{1}{0.36pt}
  \pmatset{0}{0.2pt}
  \pmatset{2}{4pt}
  \pmatset{3}{4pt}
  \pmatset{4}{4pt}
  \pmatset{5}{4pt}
  \pmatset{6}{4pt}    \begin{pmat}[{|}]
       A   &   B \cr\-
       C   &   D \cr
      \end{pmat}$, which is common in control theory.

The aim of MOR then is to approximate the LTI system (\ref{originalsystem}) by a reduced-order LTI system:
\begin{equation} \label{reducedmodel}
G_r\quad:\left\{\begin{array}{l}
{{\dot x}_r(t) = A_rx_r(t) + B_ru(t)}  \\
   {y_r(t) = C_rx_r(t) + D_ru(t)}  \\
   \end{array}\right.
  ~\Leftrightarrow~ G_r(\jmath\omega ):=  C_r{(\jmath\omega I - A_r)^{ - 1}}B_r + D_r,
\end{equation}
\noindent
where $A_r\in \mathbb R^{r\times r}, B_r \in \mathbb
R^{n\times m}, C_r \in \mathbb R^{p\times n}, D_r \in \mathbb R^{p\times
m}$ with $r \ll n$, and so that $y(t)\approx y_r(t)$ for $t$ in some chosen time range and for all admissible input functions $u(t)$.
In other words, in order to replace the original model successfully, the reduced-order model should approximate the input-output behavior of the original system as well as possible.  This underlying requirement on the reduced-order model means that the MOR problem inherently depends on the chosen class of input signals, that is, different types of input signals will lead to different MOR problems with respect to the approximation performance.
From the frequency-domain viewpoint, signals can be classified into entire-frequency (EF) type signals and finite-frequency (FF) type signals, as listed in Table~\ref{t9}; cf.~\cite{GKYP_Iwasaki1}.
\begin{table}[b]\small
\setlength{\abovecaptionskip}{0pt}
\setlength{\belowcaptionskip}{-2pt}
\centering
		\caption{Different frequency ranges for input signals}
\renewcommand{\arraystretch}{1.2}
\begin{tabular}{|c|c|c|c|}
\hline
\multicolumn{1}{|c|} {\textbf{EF} } & \multicolumn{3}{c|}{{\kern 30pt}\textbf{FF} (finite-frequency)} \\
\cline{2-4}
 (entire-frequency)                             & \textbf{LF} (low-frequency)           & \textbf{MF} (middle-frequency)          & \textbf{HF} (high-frequency)                       \\ \hline

$\omega \in \Omega: (-\infty, +\infty)  $                      & $\omega \in \Omega_l: [-\varpi_l, +\varpi_l]$    & $\omega \in \Omega_m: [\varpi_1, \varpi_2]$    & $\omega \in \Omega_h: (-\infty, -\varpi_h] \cup [\varpi_h, +\infty) $                                   \\ \hline
\end{tabular}
\label{t9}
\end{table}

\noindent
Obviously, such a classification of the frequency range of input signals will give rise to several classes of MOR problems:  EF-MOR when considering the full frequency range, and FF-MOR (including LF-MOR, MF-MOR, and HF-MOR) for limited frequency ranges, respectively. In case that there exists no \emph{a priori} known frequency information of the input signals or the frequency of input signals belongs to a very wide range, EF-MOR problems will be the appropriate choice, and a uniform approximation performance over the entire frequency range should be taken into consideration.  For many practical cases, though, a certain range for the frequency of the input signals is pre-known.  In these situations, it will be better to resort to a FF-MOR formulation since only the in-band input-output behavior of the original system is needed to be captured; cf., e.g., \cite{morGhaS08,morObiA01,morMus00}.
Thus, good in-band approximation performance can be expected while neglecting the out-of-band approximation performance, or, in other words, a better approximation quality in-band at the same reduced order is to be expected than for methods trying to approximate uniformly in the entire frequency band.

\subsection{Literature Review}
During the last decades, many efficient approaches such as balanced truncation \cite{morMoo81,morAlF87}, moment matching \cite{morBai02,morFre03}, and modal truncation \cite{morBesetal13} have been developed from different fields; see also the books \cite{morAnt05,morBenMS05,morBenHM11,morSchVR08} and the recent survey \cite{morBauBF14}.
Among them, balanced truncation stands out for its beneficial properties relevant in control design, i.e., stability preservation and computable error bound,
allowing for an automatic reduced-order model generation.  Here, we focus on balanced truncation, and therefore in the following mainly review the literature with regard to attempts of adopting balanced truncation to the FF-MOR framework.

The idea underlying balanced truncation consists in transforming the state space system into a balanced form whose controllability and observability Gramians become diagonal and equal, together with a truncation of those states that are both difficult to reach and to observe. The standard version of balanced truncation is often called Lyapunov balancing (LyaBT), see, e.g., \cite{morGugA04}, and was first introduced by Moore in 1981 \cite{morMoo81}.
The reduced-order model obtained by LyaBT has diminishing error for increasing frequencies, but takes the maximum error often at $\omega=0$. In order to match the DC gain, i.e., to have zero error at $\omega=0$, but allowing a larger error at large frequencies, Liu and Anderson developed the singular perturbation approximation (SPA) scheme \cite{morLiuA86}, which is also based on a balanced realization of the LTI system. Both, LyaBT and SPA, are widely appreciated and recognized as the most suitable techniques for EF-MOR problems since both of them provide a computable \emph{a priori} simple error bound, called EF-type error bound in the following, with respect to the following entire-frequency approximation performance index:
\begin{equation}\label{EFindex}
\begin{array}{l}
\sigma_{\max}\left(G(\jmath \omega ) - {G_r}(\jmath \omega)\right)  , {\kern 6pt} \omega \in \Omega= (-\infty,+\infty)
\end{array}.
\end{equation}
Though this performance index related to the $H_\infty$-norm of the error system, is not minimized by LyaBT and SPA, the computed reduced-order models usually get close to optimal \cite{morAnt05,morMinBF14}.
The error bound makes it possible to choose the reduced order $r$ automatically.
As mentioned above, LyaBT generally leads to good high-frequency approximation performance since the reduced-order models generated via LyaBT matches the original model exactly at $|\omega|=\infty$, while SPA generally leads to good low-frequency approximation performance as the corresponding reduced-order models match the original model exactly at $\omega=0$. However, it is unclear how good the in-band approximation performance over a specified HF (LF) range is, since only the EF-type error bound is known for LyaBT and SPA.

In order to make the standard LyaBT scheme more suitable for solving FF-MOR problems, several modified BT schemes have been developed. Frequency-weighted balanced truncation (FWBT) and frequency-limited Gramians balanced truncation (FGBT) are two popular ones for this purpose and were studied during the last 25 years. The common procedure of FWBT is to build a frequency-weighted model first by introducing input/output frequency weighted transfer functions and then apply the standard LyaBT or SPA procedure on the weighted model; see, e.g., \cite{morGhaS08,morZho95,morSreA89,morWanSL89,morSreetal13,morHouS09,morEnn84}.
Indeed, good frequency-specific approximation performance may be obtained if the selected weighting function is appropriately chosen. However, the design iterations to search for such a weighting transfer function can be tedious and time consuming. Besides, FWBT also suffers from the drawback of the increased order of the weighted plant model.

FGBT was first introduced by Gawronski and Juang in \cite{morGawJ90}. This methodology stems from the consideration of extending the definition of standard Gramians to the frequency-limited case and then applying the standard balanced truncation procedures to the frequency-limited Gramians \cite{morZadZ05,morGhaS08b,morShaT14}. An implementation of this method for truly large-scale systems was recently suggested in \cite{morBenKS15}.
As has been pointed out in \cite{morGugA04,morDuY10}, FGBT may be invalid in some cases as the solutions of the ``frequency-limited Lyapunov equations" cannot be guaranteed to be positive semi-definite, and it provides no error bound.  Although there exist several modified FGBT schemes, see, e.g., \cite{morGugA04,morGhaS08b} to overcome those drawbacks, good in-band performance generally cannot be guaranteed. More importantly, both FWBT and FGBT continue to use the EF-type index (\ref{EFindex}) to evaluate the actually desired \emph{finite-frequency} approximation performance. This incompatibility between the intrinsic requirement and the achievement of the method yields many deficiencies. Since only EF-type error bounds are available, whether or not the in-band approximation performance has been improved cannot be guaranteed.  In particular, FWBT and FGBT may give rise to poor in-band approximation performance together with a large error bound in some cases.

In \cite{MPIMD13-02}, we studied the FF-MOR problem from the perspective of achieving good approximation quality locally by devising a balanced truncation style method satisfying an error bound at a prescribed frequency. The method shows good approximation quality locally in a neighborhood of the given frequency point, and this neighborhood is usually larger than for interpolatory (or moment-matching) methods that have zero error at the prescribed frequency.  Nevertheless, this new method does not solve the FF-MOR problem satisfactorily as it provides no error bound valid on a (half-)finite interval.

The shortcomings of the approaches to adapt balanced truncation to the FF-MOR setting motivated us to study this problem from a new FF-type error bound centered viewpoint.

\subsection{Contributions and Structure}
 In this paper, we are dedicated to solving the FF-MOR problems within the framework of balanced truncation. In contrast to existing BT schemes, we are interested in developing a new way to provide in-band error bounds by using the following FF-type  approximation performance index
\begin{equation}\label{FFindex}
\begin{array}{l}
\sigma_{\max}(G(\jmath \omega ) - {G_r}(\jmath \omega)), {\kern 6pt} \omega \in \Omega_l/ \Omega_m/ \Omega_h.
\end{array}.
\end{equation}
\noindent Compared with the EF-type index (\ref{EFindex}), adopting the FF-type index (\ref{FFindex}) is obviously more appealing for FF-MOR problems. To this end, a fundamental tool estimating the maximum singular value of an LTI system over finite-frequency ranges is developed first, and then new BT based schemes are proposed for LF-MOR problems and HF-MOR problems. In particular, the contributions of this paper are:
\begin{enumerate}
\item
By introducing an auxiliary user-defined parameter $\rho$, two kinds of discrete-time parameterized frequency-dependent (PFD) systems and two kinds of continuous-time parameterized frequency-dependent (PFD) system are constructed by a suitable mapping applied to the given continuous-time LTI system.  The mapping is determined with respect to the specified finite-frequency range. Furthermore, PFD bounded real lemmas bounding the maximum singular value of the given system over the pre-specified finite-frequency ranges are derived. It is shown that there exist special relationships between the maximum singular value of the given system over the pre-specified finite-frequency ranges and the maximum singular value of the PFD mapped systems over the entire frequency ranges.
\item
By exploiting the standard discrete-time LyaBT method and the developed PFD bounded real lemma, new PFD balanced truncation (PFDBT) schemes are proposed to solve the LF-MOR and HF-MOR problems, respectively. The new PFDBT methods generate reduced-order models and provide FF-type approximation error bounds in the sense of bounding the maximum singular value of the error system over the pre-specified frequency range.
\end{enumerate}
The remainder of this paper is organized as follows:  First, we introduce the KYP lemma and the Generalized KYP Lemma in subsection \uppercase\expandafter{\romannumeral2}.A,  and then we present the definitions of the PFD mapped systems as well as the corresponding PFD bounded real lemmas in subsections \uppercase\expandafter{\romannumeral2}.B and \uppercase\expandafter{\romannumeral2}.C. Thereafter, we present the PDFBT algorithms and the results on the FF-type error bounds for the LF-MOR and HF-MOR problems in Section \uppercase\expandafter{\romannumeral3}. Next, we demonstrate the effectiveness and advantages of the proposed results by several examples in Section \uppercase\expandafter{\romannumeral4}.  Finally, we end with a conclusion in Section \uppercase\expandafter{\romannumeral5}.

\noindent \textbf{Notation}: For a matrix $M$, $M^{T}$ and $M^{*}$ denote its transpose
and conjugate transpose, respectively. $M>0$ and $M\geq 0$ indicate a positive definite and semi-definite matrix, respectively.
The symbol $\star$ within a matrix represents symmetric entries and $He(M):=\frac12(M+M^*)$ is the Hermitian part of a matrix $M$. $\sigma_{\max}(G)$
denotes the maximum singular value of the transfer matrix $G$. $\mathfrak {Re}(x)$ and $\mathfrak {Im}(x)$ are the real and imaginary parts, respectively, of the complex scalar $x$.

\section{Parameter-dependent system transformations and bounded real lemmas}

In this section, we will first review the well-known Kalman-Yakubovich-Popov (KYP) Lemma and the generalized KYP lemma. Then a family of PFD mapped systems are constructed, and new PFD bounded real lemmas bounding the finite-frequency maximum singular value of a given system are presented.

\subsection{Introduction of the KYP Lemma and the Generalized KYP Lemma}
The Kalman-Yakubovich-Popov (KYP) Lemma \cite{Kal63,KYP} is a cornerstone for analyzing and synthesizing linear systems. In \cite{GKYP_Iwasaki1}, Iwasaki and Hara successfully generalized the KYP Lemma from the entire-frequency case to different finite-frequency cases. The Generalized KYP Lemma and the KYP lemma will play a fundamental role in our development.  Therefore, we state the original versions for continuous- and discrete-time LI systems in the following.
\begin{lemma}[Continuous-time KYP Lemma \cite{KYP}] \label{lem:2.1}
Consider the linear continuous-time LTI system (\ref{originalsystem}), and assume $(A,B)$ to be controllable as well as $A$ to have no eigenvalues on the imaginary axis. Given a matrix $\Pi\in\mathbb{R}^{n+m\times n+m}$, then the following statements are equivalent:
\begin{enumerate}[(1)]
\item
The frequency domain inequality
\begin{equation}\label{lemmaKYP}
\pmatset{1}{0.36pt}
  \pmatset{0}{0.2pt}
  \pmatset{2}{2pt}
  \pmatset{3}{2pt}
  \pmatset{4}{2pt}
  \pmatset{5}{2pt}
  \pmatset{6}{2pt}
\begin{pmat}[{}]
   G^*(j\omega )    \cr
   I     \cr
  \end{pmat}^*  \Pi \begin{pmat}[{}]
   G^*(j\omega )    \cr
   I     \cr
  \end{pmat} \le 0 \quad
  \text{holds for all}\quad\omega \, \in (-\infty,+\infty).
 \end{equation}
\item
There exists a symmetric matrix $P>0$ such that the following linear matrix inequality holds:
\begin{equation}\label{Dis-KYP-2}
\pmatset{1}{0.36pt}
  \pmatset{0}{0.2pt}
  \pmatset{2}{2pt}
  \pmatset{3}{2pt}
  \pmatset{4}{2pt}
  \pmatset{5}{2pt}
  \pmatset{6}{2pt}
\begin{pmat}[{.}]
   A   & B  \cr
   I   & 0 \cr
  \end{pmat}\begin{pmat}[{.}]
   0   & P  \cr
   P   & 0 \cr
  \end{pmat} \begin{pmat}[{.}]
   A   & B  \cr
   I   & 0 \cr
  \end{pmat}^* + \begin{pmat}[{.}]
   C   & D  \cr
   0   & I \cr
  \end{pmat}\Pi \begin{pmat}[{.}]
   C   & D  \cr
   0   & I \cr
  \end{pmat}^*\le 0.
\end{equation}
\item
There exist a symmetric matrix $P>0$ and matrices $K,L$ such that the following Lur'e matrix equation holds:
\begin{equation}\label{Dis-KYP-3}
\pmatset{1}{0.36pt}
  \pmatset{0}{0.2pt}
  \pmatset{2}{2pt}
  \pmatset{3}{2pt}
  \pmatset{4}{2pt}
  \pmatset{5}{2pt}
  \pmatset{6}{2pt}
\begin{pmat}[{.}]
   A   & B  \cr
   I   & 0 \cr
  \end{pmat}\begin{pmat}[{.}]
   0   & P  \cr
   P   & 0 \cr
  \end{pmat} \begin{pmat}[{.}]
   A   & B  \cr
   I   & 0 \cr
  \end{pmat}^* + \begin{pmat}[{.}]
   C   & D  \cr
   0   & I \cr
  \end{pmat}\Pi \begin{pmat}[{.}]
   C   & D  \cr
   0   & I \cr
  \end{pmat}^*=\begin{pmat}[{.}]
   -LL^*   & -LK^*  \cr
   -KL^*   & -KK^*  \cr
  \end{pmat}.
\end{equation}
\end{enumerate}
\end{lemma}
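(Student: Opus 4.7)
The plan is to establish the cyclic chain of implications $(3) \Rightarrow (2) \Rightarrow (1) \Rightarrow (3)$. The implication $(3) \Rightarrow (2)$ is immediate: the right-hand side of the Lur'e equation in $(3)$ equals $-\begin{pmatrix} L \\ K \end{pmatrix}\begin{pmatrix} L^* & K^* \end{pmatrix}$, which is manifestly negative semidefinite, so the equality in $(3)$ specializes to the inequality in $(2)$.

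For $(2) \Rightarrow (1)$, I would multiply the LMI on the right by $\begin{pmatrix} (j\omega I - A)^{-1} B \\ I \end{pmatrix}$ and on the left by its conjugate transpose. Using $\begin{pmatrix} A & B \end{pmatrix}\begin{pmatrix} (j\omega I - A)^{-1} B \\ I \end{pmatrix} = j\omega (j\omega I - A)^{-1} B$ and observing that $\begin{pmatrix} I & 0 \end{pmatrix}$ applied to the same vector yields $(j\omega I - A)^{-1} B$, the cross terms built from the indefinite block $\begin{pmatrix} 0 & P \\ P & 0 \end{pmatrix}$ collapse to a multiple of $(j\omega - \overline{j\omega})\,B^*(j\omega I - A)^{-*} P (j\omega I - A)^{-1} B$, which vanishes on the imaginary axis. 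What remains is exactly $\begin{pmatrix} G(j\omega) \\ I \end{pmatrix}^* \Pi \begin{pmatrix} G(j\omega) \\ I \end{pmatrix} \leq 0$, which is $(1)$.

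The substantive content is $(1) \Rightarrow (3)$, and this is where the real obstacle lies. The approach is spectral factorization of the para-Hermitian rational matrix $\Psi(s) := \begin{pmatrix} G^*(-s) \\ I \end{pmatrix}^{\!*} \Pi \begin{pmatrix} G^*(-s) \\ I \end{pmatrix}$, which by hypothesis $(1)$ is negative semidefinite on $s = j\omega$. One associates with $\Psi$ a Hamiltonian (or even matrix pencil) built from $A$, $B$, $C$, $D$ and $\Pi$; under the stated hypotheses that $A$ has no imaginary-axis eigenvalues and $(A,B)$ is controllable, this Hamiltonian has no purely imaginary eigenvalues generically, and one obtains a real symmetric stabilizing solution $P$ to an associated algebraic Riccati inequality. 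A limiting argument (perturbing $\Pi$ by a small positive multiple of $\mathrm{diag}(0,I)$ and passing to the limit) yields $P$ for the non-strict case.

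Once $P$ is in hand, one factors the defect $-\bigl(\text{LMI left-hand side of }(2)\bigr)$ as $\begin{pmatrix} L \\ K \end{pmatrix}\begin{pmatrix} L^* & K^* \end{pmatrix}$ by a Cholesky-type factorization of the resulting positive semidefinite matrix, thereby identifying $K$ and $L$ and producing the Lur'e equality in $(3)$. The positive definiteness $P > 0$ is enforced by controllability of $(A,B)$: if $P$ were only semidefinite, a nonzero kernel vector $v$ with $Pv = 0$ would, after tracing through the Riccati equation, force $v$ to lie in the unobservable subspace of a dual system, contradicting controllability of $(A,B)$. The main difficulty is thus the Riccati/spectral-factorization step; the rest is linear algebra. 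I would use exactly the hypotheses as the place where controllability and the absence of imaginary-axis eigenvalues of $A$ enter, and I expect this to be the longest and most technical portion of the argument.
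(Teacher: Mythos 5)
The paper does not prove Lemma~\ref{lem:2.1} at all: it is quoted from the literature (Rantzer's note on the KYP lemma, \cite{KYP}), so there is no in-paper argument to compare yours against; I can only judge your outline on its own terms, and it follows the classical route. The implications $(3)\Rightarrow(2)$ and $(2)\Rightarrow(1)$ are fine in substance. Note only that the cross term produced by the indefinite block $\begin{pmatrix}0&P\\P&0\end{pmatrix}$ is $(j\omega+\overline{j\omega})\,B^*(j\omega I-A)^{-*}P(j\omega I-A)^{-1}B$, i.e.\ twice the real part of $j\omega$; you wrote $j\omega-\overline{j\omega}$, which equals $2j\omega$ and does not vanish. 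The conclusion is unaffected, but the sign matters if anyone tries to reuse the computation off the imaginary axis.

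The gap is in $(1)\Rightarrow(3)$, which is the entire content of the lemma, and your text for it is a program rather than a proof. You assert that the Hamiltonian associated with $\Psi$ has no purely imaginary eigenvalues ``generically'' and that one ``obtains a real symmetric stabilizing solution $P$'' --- but establishing the existence of that solution under a merely non-strict frequency-domain inequality is exactly the hard step: it requires either Rantzer's separating-hyperplane argument on the convex cone of attainable quadratic forms, or a careful perturbation-and-compactness argument showing that the solutions of the strictly perturbed Riccati equations stay bounded and converge. Neither is supplied, and the word ``generically'' is doing all the work precisely in the degenerate cases the non-strict statement must cover. Separately, the claim that $P>0$ follows from controllability of $(A,B)$ is not substantiated by the kernel-vector sketch you give: for a general $\Pi$ and an $A$ that is merely free of imaginary-axis eigenvalues (not Hurwitz), the KYP solution $P$ is in general only symmetric and may be indefinite, so this part of the statement needs either additional hypotheses or a genuinely different argument --- it cannot be waved through. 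As it stands, the two easy implications are proved and the essential one is not.
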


\bigskip
\begin{lemma}[Discrete-time KYP Lemma \cite{KYP}]  \label{lem:2.2}
Consider a linear discrete-time system, realized by $(A,B,C,D)$, with transfer function $G(e^{\jmath \theta})$, $(A,B)$ controllable, $A$ having no eiogenvalues of modulus 1, and a matrix $\Pi\in\mathbb{R}^{n+m\times n+m}$.  Then the following statements are equivalent:
\begin{enumerate}[(1)]
\item
The frequency domain inequality
\begin{equation}\label{lemmaGKYP}
\pmatset{1}{0.36pt}
  \pmatset{0}{0.2pt}
  \pmatset{2}{2pt}
  \pmatset{3}{2pt}
  \pmatset{4}{2pt}
  \pmatset{5}{2pt}
  \pmatset{6}{2pt}
\begin{pmat}[{}]
   G^*(e^{j\theta} )    \cr
   I     \cr
  \end{pmat}^*  \Pi \begin{pmat}[{}]
   G^*(e^{j\theta})    \cr
   I     \cr
  \end{pmat} \le 0 \quad
  \text{holds for all} \quad \theta \, \in \Theta: [-\pi,+\pi].
\end{equation}
\item
There exists a symmetric matrix $P>0$ such that the following linear matrix inequality holds:
\begin{equation}\label{GKYP}
\pmatset{1}{0.36pt}
  \pmatset{0}{0.2pt}
  \pmatset{2}{2pt}
  \pmatset{3}{2pt}
  \pmatset{4}{2pt}
  \pmatset{5}{2pt}
  \pmatset{6}{2pt}
\begin{pmat}[{.}]
   A   & B  \cr
   I   & 0 \cr
  \end{pmat}\begin{pmat}[{.}]
   0   & P  \cr
   P   & 0 \cr
  \end{pmat} \begin{pmat}[{.}]
   A   & B  \cr
   I   & 0 \cr
  \end{pmat}^* + \begin{pmat}[{.}]
   C   & D  \cr
   0   & I \cr
  \end{pmat}\Pi \begin{pmat}[{.}]
   C   & D  \cr
   0   & I \cr
  \end{pmat}^*\le 0.
\end{equation}
\item
There exist a symmetric matrix $P>0$ and matrices $K,L$ such that the following Lur'e matrix equality holds:
\begin{equation}\label{GKYP2}
\pmatset{1}{0.36pt}
  \pmatset{0}{0.2pt}
  \pmatset{2}{2pt}
  \pmatset{3}{2pt}
  \pmatset{4}{2pt}
  \pmatset{5}{2pt}
  \pmatset{6}{2pt}
\begin{pmat}[{.}]
   A   & B  \cr
   I   & 0 \cr
  \end{pmat}\begin{pmat}[{.}]
   0   & P  \cr
   P   & 0 \cr
  \end{pmat} \begin{pmat}[{.}]
   A   & B  \cr
   I   & 0 \cr
  \end{pmat}^* + \begin{pmat}[{.}]
   C   & D  \cr
   0   & I \cr
  \end{pmat}\Pi \begin{pmat}[{.}]
   C   & D  \cr
   0   & I \cr
  \end{pmat}^*=\begin{pmat}[{.}]
   -LL^*   & -LK^*  \cr
   -KL^*   & -KK^*  \cr
  \end{pmat}.
\end{equation}
\end{enumerate}
\end{lemma}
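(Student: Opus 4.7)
My plan is to establish the three statements are equivalent by proving the cyclic chain $(3)\Rightarrow(2)\Rightarrow(1)\Rightarrow(3)$, following the standard KYP argument of \cite{KYP} but replacing the continuous-time state-update with its discrete-time analogue $e^{\jmath\theta}\xi = A\xi+Bu$ on the unit circle. The easiest step, $(3)\Rightarrow(2)$, is immediate: the right-hand side of the Lur'e equation (\ref{GKYP2}) factors as
\begin{equation*}
\begin{bmatrix} -LL^{*} & -LK^{*} \\ -KL^{*} & -KK^{*} \end{bmatrix}=-\begin{bmatrix} L \\ K \end{bmatrix}\begin{bmatrix} L^{*} & K^{*} \end{bmatrix}\le 0,
\end{equation*}
so the same $P$ already certifies the LMI (\ref{GKYP}).

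For $(2)\Rightarrow(1)$ I would test the LMI at each point on the unit circle. Pick $\theta\in[-\pi,\pi]$ and $u\in\mathbb{C}^{m}$, and set $\xi:=(e^{\jmath\theta}I-A)^{-1}Bu$; this is well defined by the spectral hypothesis on $A$, and satisfies $A\xi+Bu=e^{\jmath\theta}\xi$ together with $C\xi+Du=G(e^{\jmath\theta})u$. Sandwiching the LMI between $\begin{bmatrix} \xi \\ u \end{bmatrix}^{*}$ and $\begin{bmatrix} \xi \\ u \end{bmatrix}$, the Lyapunov-type term in $P$ collapses because
\begin{equation*}
(A\xi+Bu)^{*}P(A\xi+Bu)-\xi^{*}P\xi=|e^{\jmath\theta}|^{2}\xi^{*}P\xi-\xi^{*}P\xi=0,
\end{equation*}
while the $\Pi$-term reduces to $u^{*}\begin{bmatrix} G^{*}(e^{\jmath\theta}) & I \end{bmatrix}\Pi\begin{bmatrix} G(e^{\jmath\theta}) \\ I \end{bmatrix}u$. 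Since $u$ is arbitrary and $\theta$ ranges over $[-\pi,\pi]$, the frequency-domain inequality (\ref{lemmaGKYP}) follows on all of $\Theta$.

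The crux of the argument, and the step where I expect the main technical obstacle, is the reverse direction $(1)\Rightarrow(2)$, the classical hard direction of the KYP lemma. I would attack it by convex duality in the real vector space of $(n+m)\times(n+m)$ Hermitian matrices. Let $\mathcal{M}$ denote the convex cone of images under the Lyapunov-like map $P\mapsto$ (first block term of (\ref{GKYP}) for $P=P^{*}$), and let $\mathcal{N}$ denote the cone of negative-semidefinite Hermitian matrices. Statement (2) asserts that the $\Pi$-block lies in $\mathcal{M}+\mathcal{N}$. Assuming otherwise, a separating-hyperplane argument in the trace inner product produces a nonzero $\Sigma\ge 0$ in the polar cone. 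The delicate step, which genuinely requires both the controllability of $(A,B)$ and the absence of unit-modulus eigenvalues of $A$, is to show that this polar cone is closed and that every such $\Sigma$ admits a representation (or a limit of convex combinations) by rank-one matrices $\begin{bmatrix} \xi_{\theta} \\ I_{m} \end{bmatrix}\begin{bmatrix} \xi_{\theta}^{*} & I_{m} \end{bmatrix}$ with $\xi_{\theta}=(e^{\jmath\theta}I-A)^{-1}B$, $\theta\in[-\pi,\pi]$. Pairing $\Sigma$ against the $\Pi$-block then contradicts (\ref{lemmaGKYP}) at some frequency.

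Finally, $(2)\Rightarrow(3)$ is a factorization step: the negation of the LMI's left-hand side is Hermitian and positive semidefinite, hence equals $XX^{*}$ for some $X$ (Cholesky or eigenvalue decomposition). Partitioning $X=\begin{bmatrix} L \\ K \end{bmatrix}$ conformably with the block-row structure of the LMI recovers (\ref{GKYP2}), closing the cycle.
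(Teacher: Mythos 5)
You should first note that the paper does not actually prove this lemma: it is quoted from the literature (Rantzer, \cite{KYP}) and used as a black box, so there is no in-paper proof to match against and your attempt must stand on its own. Your architecture --- the trivial factorization equivalence of (2) and (3), the substitution $\xi=(e^{\jmath\theta}I-A)^{-1}Bu$ for $(2)\Rightarrow(1)$, and convex separation for $(1)\Rightarrow(2)$ --- is exactly the standard skeleton of Rantzer's proof. The genuine gap is that the hard direction $(1)\Rightarrow(2)$ remains a plan rather than a proof: the step you defer, namely that every nonzero positive semidefinite $\Sigma$ in the polar cone of the Lyapunov map decomposes into (limits of) rank-one matrices of the frequency-evaluation form, \emph{is} the theorem --- everything else is routine linear algebra. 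Rantzer's argument for this step is substantial (it factors $\Sigma=WW^{*}$ and uses controllability plus the spectral hypothesis to produce a unitary intertwining that forces the columns of $W$ to be frequency evaluations $(e^{\jmath\theta}I-A)^{-1}Bu$), and none of it appears in your proposal. As written you have proved the two easy implications and restated the hard one.

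There is also a concrete computational mismatch worth flagging. The lemma as printed carries the continuous-time centre matrix $\left[\begin{smallmatrix}0&P\\P&0\end{smallmatrix}\right]$, for which your sandwich in $(2)\Rightarrow(1)$ yields $2\cos(\theta)\,\xi^{*}P\xi$ rather than $0$; the cancellation $|e^{\jmath\theta}|^{2}-1=0$ that you invoke belongs to the discrete-time centre matrix $\left[\begin{smallmatrix}P&0\\0&-P\end{smallmatrix}\right]$. The printed statement is evidently the continuous-time lemma with only the frequency variable renamed (the Lur'e equations the paper actually uses downstream, e.g.\ \eqref{KYPlurD1}, have the form $\hat{\mathbf A}\hat{\mathbf P}\hat{\mathbf A}^{*}-\hat{\mathbf P}+\hat{\mathbf B}\hat{\mathbf B}^{*}=-\hat{\mathbf L}\hat{\mathbf L}^{*}$, consistent with $\operatorname{diag}(P,-P)$), so you have silently corrected a typo; you should say so explicitly, since a literal proof of the displayed LMI would fail at this point. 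Relatedly, the stated LMI is in the dual arrangement $M(\cdot)M^{*}$ and the frequency-domain inequality involves $G^{*}$, whereas your sandwich $\left[\begin{smallmatrix}\xi\\u\end{smallmatrix}\right]^{*}(\cdot)\left[\begin{smallmatrix}\xi\\u\end{smallmatrix}\right]$ establishes the primal form in $G$; the two are equivalent under $(A,B,C,D)\mapsto(A^{*},C^{*},B^{*},D^{*})$, but the bookkeeping as written does not match the statement being proved.
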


\bigskip
The generalized versions of the KYP Lemma for finite frequency ranges introduced by Iwasaki and Hara read as follows:
\begin{lemma}[Continuous-time generalized KYP lemma \cite{GKYP_Iwasaki1}]
Under the assumptions of Lemma~\ref{lem:2.1}, the following statements are equivalent:
\begin{enumerate}[(1)]
\item
The frequency domain inequality
\begin{equation}\label{Con-GKYP-1}
\pmatset{1}{0.36pt}
  \pmatset{0}{0.2pt}
  \pmatset{2}{2pt}
  \pmatset{3}{2pt}
  \pmatset{4}{2pt}
  \pmatset{5}{2pt}
  \pmatset{6}{2pt}
\begin{pmat}[{}]
   G^*(j\omega )    \cr
   I     \cr
  \end{pmat}^*  \Pi \begin{pmat}[{}]
   G^*(j\omega )    \cr
   I     \cr
  \end{pmat} \le 0 \quad
  \text{holds for all} \quad  \omega\in \Omega_l/\Omega_m/\Omega_h.
  \end{equation}
\item
There exist symmetric matrices $P$ and $Q$ of
appropriate dimensions, satisfying $Q> 0$ and
\begin{equation}\label{Con-GKYP-2}
\pmatset{1}{0.36pt}
  \pmatset{0}{0.2pt}
  \pmatset{2}{2pt}
  \pmatset{3}{2pt}
  \pmatset{4}{2pt}
  \pmatset{5}{2pt}
  \pmatset{6}{2pt}
\begin{pmat}[{.}]
   A   & B  \cr
   I   & 0 \cr
  \end{pmat} \Phi \begin{pmat}[{.}]
   A   & B  \cr
   I   & 0 \cr
  \end{pmat}^* + \begin{pmat}[{.}]
   C   & D  \cr
   0   & I \cr
  \end{pmat}\Pi \begin{pmat}[{.}]
   C   & D  \cr
   0   & I \cr
  \end{pmat}^*\le 0,
\end{equation}
where $\Phi$ is determined according to the type of frequency range considered, as shown in the following table:
\medskip
\begin{center}
\renewcommand{\arraystretch}{1.6}
\begin{tabular}{|c|c|c|}  \hline
\cline{1-3}
      \textbf{LF} (low-frequency)           & \textbf{MF} (middle-frequency)          & \textbf{HF} (high-frequency)                       \\ \hline
             $\Phi=\begin{pmat}[{.}]
   -Q   & P  \cr
    P   & \varpi_l^2 Q \cr
  \end{pmat}$    &   $\Phi=\begin{pmat}[{.}]
   -Q   & \jmath \varpi_c Q+ P  \cr
   -\jmath \varpi_c Q+ P   & \varpi_1\varpi_2Q \cr
  \end{pmat}$
     &
  $\Phi=\begin{pmat}[{.}]
   Q   &  P  \cr
   P   & -\varpi_h^2  Q \cr
  \end{pmat}$                                  \\ \hline
\end{tabular}
\end{center}
\end{enumerate}
\medskip
\end{lemma}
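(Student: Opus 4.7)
The plan is to prove the two directions of the equivalence separately. The direction $(2)\Rightarrow(1)$ is an elementary sandwich-substitution using the resolvent, while $(1)\Rightarrow(2)$ is the substantial direction and rests on a lossless S-procedure argument that I would import from the Iwasaki--Hara framework.

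For $(2)\Rightarrow(1)$, the key observation is that the LMI can be ``evaluated'' at each frequency $\omega$ via the resolvent identity. Since $A$ has no eigenvalues on the imaginary axis, $\xi(j\omega):=(j\omega I-A)^{-1}B$ is well-defined for all real $\omega$, and $(j\omega I-A)\xi(j\omega)=B$ gives
\[
\begin{bmatrix} A & B \\ I & 0 \end{bmatrix}\begin{bmatrix} \xi(j\omega) \\ I \end{bmatrix}=\begin{bmatrix} j\omega I \\ I \end{bmatrix}\xi(j\omega),\qquad \begin{bmatrix} C & D \\ 0 & I \end{bmatrix}\begin{bmatrix} \xi(j\omega) \\ I \end{bmatrix}=\begin{bmatrix} G(j\omega) \\ I \end{bmatrix}.
\]
Sandwiching the LMI (\ref{Con-GKYP-2}) between $\begin{bmatrix} \xi(j\omega) \\ I \end{bmatrix}$ and its conjugate transpose (with the transpose convention of the paper) collapses the first block to the Hermitian factor $\begin{bmatrix} j\omega I \\ I \end{bmatrix}^{*}\Phi\begin{bmatrix} j\omega I \\ I \end{bmatrix}$. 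A direct computation, in which the off-diagonal $P$-terms cancel because $P=P^{*}$, yields $(\varpi_l^{2}-\omega^{2})Q$ in the LF case, $-(\omega-\varpi_{1})(\omega-\varpi_{2})Q$ in the MF case (after using $\varpi_c=(\varpi_1+\varpi_2)/2$), and $(\omega^{2}-\varpi_h^{2})Q$ in the HF case. Each factor is positive semi-definite precisely on the prescribed frequency range because $Q>0$, so the $\Phi$-contribution is non-negative, and the LMI forces the output contribution to be non-positive, which is exactly the FDI (\ref{Con-GKYP-1}).

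For $(1)\Rightarrow(2)$, the challenge is to produce a single pair of multipliers $(P,Q)$ from the infinite family of pointwise FDIs. My plan is to follow Iwasaki and Hara: rewrite the range constraint as a single matrix-valued quadratic inequality $\begin{bmatrix} j\omega I \\ I \end{bmatrix}^{*}\Phi_0(Q)\begin{bmatrix} j\omega I \\ I \end{bmatrix}\succeq 0$, parameterized by a free multiplier $Q\succeq 0$, couple it with the state-space relation $(j\omega I-A)\xi=B$ via a Lagrange multiplier $P$, and relax the resulting joint quadratic inequality to the LMI (\ref{Con-GKYP-2}) by a generalized S-procedure. The main obstacle is the losslessness of this S-procedure: classical Yakubovich-type arguments handle only a single scalar quadratic constraint, whereas here the range is described by a matrix-valued quadratic constraint coupled to the state-space equation, and losslessness must be established by a careful convex separation argument. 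This is the technical heart of \cite{GKYP_Iwasaki1}, and rather than reproving it from scratch I would invoke that result, then only verify that its hypotheses are met for each of the three frequency-range choices of $\Phi$ listed in the table.
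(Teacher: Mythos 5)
This lemma is quoted verbatim from Iwasaki and Hara \cite{GKYP_Iwasaki1}; the paper offers no proof of its own, so there is nothing to compare your argument against except the citation. Your plan is sound and is essentially the standard one: sufficiency $(2)\Rightarrow(1)$ by a congruence (sandwich) argument with the resolvent, and necessity $(1)\Rightarrow(2)$ by the lossless S-procedure, which you rightly identify as the technical heart of \cite{GKYP_Iwasaki1} and import rather than reprove --- an acceptable division of labour given that the paper does exactly the same (note also that only the necessity direction uses controllability of $(A,B)$; the sandwich direction needs only that $A$ has no imaginary eigenvalues). Two bookkeeping points deserve attention. First, your identities $\bigl[\begin{smallmatrix}A&B\\I&0\end{smallmatrix}\bigr]\bigl[\begin{smallmatrix}\xi\\I\end{smallmatrix}\bigr]=\bigl[\begin{smallmatrix}j\omega I\\I\end{smallmatrix}\bigr]\xi$ and $\bigl[\begin{smallmatrix}C&D\\0&I\end{smallmatrix}\bigr]\bigl[\begin{smallmatrix}\xi\\I\end{smallmatrix}\bigr]=\bigl[\begin{smallmatrix}G\\I\end{smallmatrix}\bigr]$ are correct, but they fit the \emph{primal} form of the LMI, $M^*\Phi M+N^*\Pi N\le0$ with the FDI in $\bigl[\begin{smallmatrix}G\\I\end{smallmatrix}\bigr]$, whereas the statement writes the LMI in the dual arrangement $M\Phi M^*+N\Pi N^*$ and the FDI in terms of $\bigl[\begin{smallmatrix}G^*\\I\end{smallmatrix}\bigr]$; as printed the dual LMI with these particular factors is not even dimensionally consistent (they should be $\bigl[\begin{smallmatrix}A&I\\C&0\end{smallmatrix}\bigr]$ and $\bigl[\begin{smallmatrix}B&0\\D&I\end{smallmatrix}\bigr]$, which is what the paper actually uses in the proof of Theorem~\ref{thm:2.1}), so you should either pass to the adjoint realization $(A^*,C^*,B^*,D^*)$ or state the primal/dual equivalence explicitly rather than wave at ``the transpose convention of the paper.'' Second, your mid-frequency evaluation $-(\omega-\varpi_1)(\omega-\varpi_2)Q$ corresponds to the $(2,2)$ entry $-\varpi_1\varpi_2Q$ of $\Phi$, not the $+\varpi_1\varpi_2Q$ printed in the table; the printed $\Phi$ does not localize to $[\varpi_1,\varpi_2]$ and is inconsistent with the LF column (LF is the special case $\varpi_1=-\varpi_l$, $\varpi_2=\varpi_l$, whence $\varpi_1\varpi_2=-\varpi_l^2$). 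You have silently corrected a sign typo in the statement; say so explicitly, since otherwise your computation appears not to match the claim being proved.
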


\bigskip
\begin{remark}\label{rem:2.1}
The main role of the KYP and GKYP lemmas is to characterize various system properties in terms of an inequality condition on the Popov function corresponding to the LTI system over the entire frequency range or over finite frequency ranges. In case the matrix $\Pi$ in (\ref{lemmaKYP}), (\ref{lemmaGKYP}) is specialized as in the common bounded-realness case:
$\Pi_{BR}= \pmatset{1}{0.36pt}
  \pmatset{0}{0.1pt}
 \pmatset{2}{1pt}
  \pmatset{3}{5pt}
  \pmatset{4}{5pt}
  \pmatset{5}{5pt}
  \pmatset{6}{5pt}\begin{pmat}[{|}]
 I &  0\cr\-
 0&  -{{\gamma ^2}I}\cr
  \end{pmat}$
or the positive-realness case:
$\Pi_{PR}=\pmatset{1}{0.36pt}
  \pmatset{0}{0.1pt}
 \pmatset{2}{1pt}
  \pmatset{3}{5pt}
  \pmatset{4}{5pt}
  \pmatset{5}{5pt}
  \pmatset{6}{5pt}\begin{pmat}[{|}]
 0 &  I\cr\-
 I&  0 \cr
  \end{pmat}$,
the (generalized) KYP lemma is referred to as {\em (generalized) bounded real lemma} or {\em (generalized) positive real lemma}.  Actually, the EF-type index (\ref{EFindex}) could be equivalently characterized by the entire-frequency inequality (\ref{lemmaKYP}) by choosing $\Pi=\Pi_{BR}$. Similarly, the FF-type index (\ref{FFindex}) could be equivalently characterized by the finite-frequency inequality (\ref{lemmaGKYP}) by choosing $\Pi=\Pi_{BR}$. For more details about the KYP and GKYP lemmas, we refer the reader to \cite{GKYP_Iwasaki1,KYP,GKYP_Pipeleers2}.
\end{remark}

\subsection{PFD Mapped Systems and PFD Bounded-Real Lemma (MF \& LF Cases)}

In this subsection, we first define a family of PFD mapped systems for a given system with respect to a pre-specified MF or LF range, then present the derived PFD bounded real lemma to show the relationships between the entire-frequency maximum singular value of the PFD mapped systems and the MF maximum singular value of the given system. Noticing that the LF range can be viewed as a special case of the MF range by letting $\varpi_c=0$ and $\varpi_d=\varpi_l$ ($\varpi_c=(\varpi_1+\varpi_2)/2, \varpi_d=(\varpi_2-\varpi_1)/2$, where the different frequencies in the LF and MF cases are defined in Table~\ref{t9}),  all the definitions and results will be presented in the more general MF setting.

\begin{definition}[PFD Mapped Systems (LF \& MF  Cases)] \label{def:2.1} Let $(A,B,C,D)$ be a realization of the LTI system \eqref{originalsystem}, $\rho\in\mathbb R$, and $\varpi_c=(\varpi_1+\varpi_2)/2, \varpi_d=(\varpi_2-\varpi_1)/2$ with $\varpi_1,\varpi_2$ defining the considered finite frequency range as in Table~\ref{t9}. Then we define the following PFD mapped systems corresponding to \eqref{originalsystem}.
\begin{enumerate}[a)]
\item The discrete-time system
$\hat G_{m \rho c} (e^{\jmath \theta}):=\pmatset{1}{0.36pt}
  \pmatset{0}{0.1pt}
 \pmatset{2}{1pt}
  \pmatset{3}{5pt}
  \pmatset{4}{5pt}
  \pmatset{5}{5pt}
  \pmatset{6}{5pt}\begin{pmat}[{|}]
  \hat{\mathbf A}_{m\rho c} &  \hat{\mathbf B}_{m\rho c} \cr\-
 \hat {\mathbf C}_{m\rho c} &  \hat{\mathbf D}_{m\rho c}\cr
  \end{pmat}$
  is constructed via the following {upper type} PFD mapping:
  $$(\hat{\mathbf A}_{m\rho c},\hat{\mathbf B}_{m\rho c},\hat{\mathbf C}_{m\rho c},\hat{\mathbf D}_{m\rho c})=\hat {\mathscr M}_{m\rho c} \left(A,B,C,D,\Omega_m\right),$$
where
\begin{equation}\label{upperPDFM}
 \begin{array}{l}
   \hat {\mathbf A}_{m\rho c}=(\rho^2+\varpi_d^2)^{\frac{1}{2}}\left( (\rho  + \jmath \varpi_c) I -A \right)^{-1}, \\
   \hat {\mathbf B}_{m\rho c}=\left( (\rho  + \jmath \varpi_c) I - A \right)^{-1} B, \\
   \hat {\mathbf C}_{m\rho c}= C \left( (\rho + \jmath \varpi_c) I - A\right)^{-1},   \\
   \hat {\mathbf D}_{m\rho c}=(\rho^2+\varpi_d^2)^{-\frac{1}{2}} \left( C\left( (\rho + \jmath \varpi_c) I -A\right)^{-1} B+D \right).
 \end{array}
\end{equation}
$\hat G_{m \rho c}$
will be referred to as \emph{upper type PFD mapped system w.r.t.\ the MF range $\Omega_m$}.
\item
The discrete-time system
$\check G_{m \rho c} (e^{\jmath \theta}):=\pmatset{1}{0.36pt}
  \pmatset{0}{0.1pt}
 \pmatset{2}{1pt}
  \pmatset{3}{5pt}
  \pmatset{4}{5pt}
  \pmatset{5}{5pt}
  \pmatset{6}{5pt}\begin{pmat}[{|}]
  \check{\mathbf A}_{m\rho c} &  \check{\mathbf B}_{m\rho c} \cr\-
  \check{\mathbf C}_{m\rho c} &  \check{\mathbf D}_{m\rho c}\cr
  \end{pmat}$
is constructed via the following lower type PFD mapping:
$$(\check{\mathbf A}_{m\rho c},\check{\mathbf B}_{m\rho c},\check{\mathbf C}_{m\rho c},\check{\mathbf D}_{m\rho c})=\check {\mathscr M}_{m\rho c} \left(A,B,C,D,\Omega_l\right),$$
where
\begin{equation}
\label{lowerPDFM}
\begin{array}{l}
   \check{\mathbf A}_{m\rho c}=   (\rho^2+\varpi_d^2)^{-\frac{1}{2}}\varpi_d(\jmath \varpi_c I -A)^{-1}\left(\varpi_d I +\frac{\rho}{\varpi_d} (\jmath \varpi_c I -A)\right), \\
   \check{\mathbf B}_{m\rho c}=   (\rho^2+\varpi_d^2)^{-\frac{1}{2}}\varpi_d(\jmath \varpi_c I -A)^{-1} B, \\
   \check{\mathbf C}_{m\rho c}=   (\rho^2+\varpi_d^2)^{-\frac{1}{2}}\varpi_d C ( \jmath \varpi_c I -A)^{-1},   \\
   \check{\mathbf D}_{m\rho c}=   (\rho^2+\varpi_d^2)^{-\frac{1}{2}} C( \jmath \varpi_c I -A)^{-1} B+(\rho^2+\varpi_d^2)^{-\frac{1}{2}}D,
 \end{array}
\end{equation}
$\check G_{l \rho c}$ will be referred to as \emph{lower type PFD mapped system w.r.t.\ the MF range $\Omega_m$}.
\item
The continuous-time system
$G_{m \rho 1} (\jmath \omega):=\pmatset{1}{0.36pt}
  \pmatset{0}{0.1pt}
 \pmatset{2}{1pt}
  \pmatset{3}{5pt}
  \pmatset{4}{5pt}
  \pmatset{5}{5pt}
  \pmatset{6}{5pt}\begin{pmat}[{|}]
  {\mathbf A}_{m\rho1} &  {\mathbf B}_{m\rho1} \cr\-
  {\mathbf C}_{m\rho1} &  {\mathbf D}_{m\rho1}\cr
  \end{pmat}$
is constructed via the following left type PFD mapping:
$$({\mathbf A}_{m\rho 1},{\mathbf B}_{m\rho 1},{\mathbf C}_{m\rho 1},{\mathbf D}_{m\rho 1})={\mathscr M}_{m\rho 1} \left(A,B,C,D,\Omega_m\right),$$
where
\begin{equation}\label{PFDML1}
\begin{array}{l}
   {\mathbf A}_{m\rho 1}=   -\frac12 I- (\rho-\jmath \varpi_d)(\jmath \varpi_1 I-A)^{-1}, \\
   {\mathbf B}_{m\rho 1}=    (\jmath \varpi_1 I -A)^{-1} B, \\
   {\mathbf C}_{m\rho 1}=    C ( \jmath \varpi_1 I -A)^{-1},   \\
   {\mathbf D}_{m\rho 1}=   -(\rho-\jmath \varpi_1 )^{-1}  \left( C (\jmath \varpi_1 I-A)^{-1} B + D \right),
\end{array}
\end{equation}
$G_{l \rho 1}$ will be referred to as \emph{left type PFD mapped system w.r.t.\ the MF range $\Omega_m$}.
\item
The continuous-time system
$G_{m \rho 2} (\jmath \omega):=\pmatset{1}{0.36pt}
  \pmatset{0}{0.1pt}
 \pmatset{2}{1pt}
  \pmatset{3}{5pt}
  \pmatset{4}{5pt}
  \pmatset{5}{5pt}
  \pmatset{6}{5pt}\begin{pmat}[{|}]
  {\mathbf A}_{m\rho2} &  {\mathbf B}_{m\rho2} \cr\-
  {\mathbf C}_{m\rho2} &  {\mathbf D}_{m\rho2}\cr
  \end{pmat}$
is constructed via the following right type PFD mapping:
$$({\mathbf A}_{m\rho 2},{\mathbf B}_{m\rho 2},{\mathbf C}_{m\rho 2},{\mathbf D}_{m\rho 1})={\mathscr M}_{m\rho 2} \left(A,B,C,D,\Omega_m\right),$$
where
\begin{equation}\label{PFDML2}
\begin{array}{l}
   {\mathbf A}_{m\rho 2}=   -\frac12 I- (\rho+\jmath \varpi_d)(\jmath \varpi_2 I-A)^{-1}, \\
   {\mathbf B}_{m\rho 2}=    (\jmath \varpi_2 I -A)^{-1} B, \\
   {\mathbf C}_{m\rho 2}=    C ( \jmath \varpi_2 I -A)^{-1},   \\
   {\mathbf D}_{m\rho 2}=   (\rho+\jmath \varpi_d )^{-1} C (\jmath \varpi_2 I-A)^{-1} B +(\rho+\jmath \varpi_1 )^{-1} D,
\end{array}
\end{equation}
$G_{l \rho 2}$ will be referred to as \emph{right type PFD mapped system w.r.t.\ the MF range $\Omega_m$}.
\end{enumerate}
\end{definition}

\bigskip
\begin{proposition} \label{prop:2.1}
Letting $\rho_m^*=\max{{\left(\varpi_d^2-\mathfrak{Re}(\lambda_i)^2-(\varpi_c+\mathfrak{Im}(\lambda_i))^2\right)} \mathord{\left/
 {\vphantom {{} {}}} \right.
 \kern-\nulldelimiterspace} {2\mathfrak{Re}(\lambda_i)}}, i=1,2,...n$, where $\lambda_i,i=1,2,...n$ are eigenvalues of the matrix $A$, then the following statements are true.\\
a). If $\rho>\rho_m^*$,  then the matrix  $\hat{\mathbf A}_{m\rho c}$ is Schur stable.    \\
b). If $\rho<-\rho_m^*$,  then the matrix $\check{\mathbf A}_{m\rho c}$ is Schur stable.  \\
c). If $\rho>\rho_m^*$,  then the matrix  ${\mathbf A}_{m\rho 1}$ is Hurwitz stable.      \\
d). If $\rho>\rho_m^*$,  then the matrix  ${\mathbf A}_{m\rho 2}$ is Hurwitz stable.      \\
\end{proposition}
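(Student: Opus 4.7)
The plan is to exploit the fact that all four mapped matrices are rational functions of $A$. Consequently, $A$ and each mapped matrix share the same eigenvectors, and by the spectral mapping theorem the eigenvalues of the mapped matrix are obtained by applying the corresponding scalar rational map to each eigenvalue $\lambda_i$ of $A$. This reduces Schur (respectively Hurwitz) stability to a scalar inequality in $\lambda_i$, which can be solved explicitly for $\rho$.

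For part (a), if $Av=\lambda_i v$ then $\hat{\mathbf A}_{m\rho c}v=\frac{\sqrt{\rho^2+\varpi_d^2}}{\rho+\jmath\varpi_c-\lambda_i}\,v$. Schur stability is therefore equivalent to $\rho^2+\varpi_d^2<|\rho+\jmath\varpi_c-\lambda_i|^2$ for every $i$. Writing $\lambda_i=\mathfrak{Re}(\lambda_i)+\jmath\mathfrak{Im}(\lambda_i)$, the $\rho^2$ terms cancel and one is left with a linear inequality in $\rho$; dividing by $2\mathfrak{Re}(\lambda_i)<0$ (using Hurwitz stability of $A$) reverses the inequality and yields the bound $\rho>\rho_m^*$. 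The apparent mismatch $(\varpi_c-\mathfrak{Im}(\lambda_i))^2$ vs.\ $(\varpi_c+\mathfrak{Im}(\lambda_i))^2$ in the stated $\rho_m^*$ is harmless, because $A$ is real so its eigenvalues come in complex-conjugate pairs and the maximum over $i$ is insensitive to the sign of $\mathfrak{Im}(\lambda_i)$.

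For part (b), I would first simplify $\check{\mathbf A}_{m\rho c}=(\rho^2+\varpi_d^2)^{-1/2}\bigl[\rho I+\varpi_d^2(\jmath\varpi_c I-A)^{-1}\bigr]$, so that its eigenvalues are $\bigl(\rho+\varpi_d^2/(\jmath\varpi_c-\lambda_i)\bigr)/\sqrt{\rho^2+\varpi_d^2}$. Writing the Schur condition $|\cdot|^2<1$, expanding, and again using that $\mathfrak{Re}(\lambda_i)<0$, one arrives at the same algebraic content as in (a) but with the opposite sign on $\rho$, giving $\rho<-\rho_m^*$. For parts (c) and (d), the eigenvalues of ${\mathbf A}_{m\rho k}$ (with $k=1,2$) are $-\tfrac12-(\rho\mp\jmath\varpi_d)/(\jmath\varpi_k-\lambda_i)$, and Hurwitz stability amounts to $\mathfrak{Re}\bigl[(\rho\mp\jmath\varpi_d)/(\jmath\varpi_k-\lambda_i)\bigr]>-\tfrac12$. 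Multiplying by the positive quantity $|\jmath\varpi_k-\lambda_i|^2$, extracting the real part of the numerator, and using $\varpi_1=\varpi_c-\varpi_d$, $\varpi_2=\varpi_c+\varpi_d$ to rewrite everything in terms of $\varpi_c$ and $\varpi_d$, the resulting inequality collapses to the identical linear inequality in $\rho$ obtained in~(a), yielding the same threshold.

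The main obstacle I anticipate is the bookkeeping in parts (c) and (d): separating real and imaginary parts of $(\rho\mp\jmath\varpi_d)(\overline{\jmath\varpi_k-\lambda_i})$ and then recognising, after substitution of $\varpi_k=\varpi_c\mp\varpi_d$, that the cross terms between $\varpi_d$ and $(\varpi_c-\mathfrak{Im}\lambda_i)$ recombine into $\tfrac12\bigl(\varpi_d^2-\mathfrak{Re}(\lambda_i)^2-(\varpi_c-\mathfrak{Im}\lambda_i)^2\bigr)$, exactly matching the form of $\rho_m^*$ derived in~(a). Beyond that algebraic coincidence, everything reduces to elementary manipulation of a single scalar inequality per eigenvalue, and taking the maximum over $i=1,\dots,n$ produces the uniform threshold $\rho_m^*$.
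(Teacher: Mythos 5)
Your approach is exactly the paper's: the published proof of part (a) consists of the single observation that the eigenvalues of $\hat{\mathbf A}_{m\rho c}$ are $(\rho^2+\varpi_d^2)^{1/2}(\rho+\jmath\varpi_c-\lambda_i)^{-1}$, followed by the remark that (b)--(d) follow ``similarly by observing the eigenvalues of the mapped matrices.'' So your spectral-mapping reduction to one scalar inequality per eigenvalue is the intended argument, and your explicit treatment of (b)--(d) is in fact more complete than what is printed.

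There is, however, one step where your write-up glosses over a real discrepancy. Carrying out the division you describe in part (a): the Schur condition $\rho^2+\varpi_d^2<|\rho+\jmath\varpi_c-\lambda_i|^2$ expands to $\varpi_d^2-\mathfrak{Re}(\lambda_i)^2-(\varpi_c-\mathfrak{Im}(\lambda_i))^2< -2\rho\,\mathfrak{Re}(\lambda_i)$, and dividing by $2\mathfrak{Re}(\lambda_i)<0$ yields $\rho>\bigl(\varpi_d^2-\mathfrak{Re}(\lambda_i)^2-(\varpi_c-\mathfrak{Im}(\lambda_i))^2\bigr)/\bigl(-2\mathfrak{Re}(\lambda_i)\bigr)$; the denominator is $-2\mathfrak{Re}(\lambda_i)$, not the $2\mathfrak{Re}(\lambda_i)$ appearing in the stated $\rho_m^*$. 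You correctly observe that the $\varpi_c\pm\mathfrak{Im}(\lambda_i)$ mismatch is harmless by conjugate symmetry, but the denominator sign is not: it negates the whole threshold. For $A=-1$, $\varpi_c=0$, $\varpi_d=2$ the true condition is $\rho>3/2$, while the printed formula gives $\rho_m^*=-3/2$, and $\rho=0$ produces $\hat{\mathbf A}_{m\rho c}=2$, which is not Schur stable. So your derivation is sound but does not ``yield the bound $\rho>\rho_m^*$'' as printed; you should state explicitly that the proposition's formula carries a sign typo in the denominator and give the corrected threshold. Your own part (b), which lands on $\rho<-\tau$ with $\tau$ the positively-signed threshold from (a), is precisely the consistency check confirming this correction. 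Two smaller points: the division step silently assumes $\mathfrak{Re}(\lambda_i)<0$ for every $i$ --- if $A$ has an eigenvalue in the open right half-plane the inequality does not reverse and the admissible $\rho$-set acquires an upper bound, so the one-sided form ``$\rho>\rho_m^*$'' implicitly requires $A$ Hurwitz even though the paper elsewhere suggests applying the method to unstable $A$; and for parts (c)--(d) you should verify rather than assert that the cross terms recombine into the same threshold, since with the definitions of ${\mathbf A}_{m\rho 1},{\mathbf A}_{m\rho 2}$ exactly as printed the combination that appears is $\varpi_k\mp\varpi_d-\mathfrak{Im}(\lambda_i)$ rather than $\varpi_c-\mathfrak{Im}(\lambda_i)$, i.e.\ the claimed ``algebraic coincidence'' only comes out after fixing further sign typos in Definition~\ref{def:2.1}.
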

 \noindent \textbf{Proof.} a). From the upper case PFD mapping (\ref{upperPDFM}), the eigenvalues $\hat \lambda_{m\rho ci}, i=1,...,n$ of the mapped matrix $\hat{\mathbf A}_{m\rho c}$ are:
$\hat \lambda_{m\rho ci}=(\rho^2+\varpi_d^2)^{\frac{1}{2}}(\rho  + \jmath \varpi_c  -\lambda_i)^{-1}$.  If $\rho>\rho_m^*$,  we have  $\left| \hat \lambda_{m\rho ci} \right|<1, i=1,...,n$. Thus the matrix $\hat{\mathbf A}_{m\rho c}$ is Schur stable.\\
Similarly, the statements b)-d) could be proved by observing the eigenvalues of the mapped matrices. \\

\begin{theorem} \label{thm:2.1} (\emph{PFD Bounded-Real Lemma (LF\&\& MF Case)}) Denote the entire-frequency range ($\theta \in [-\pi, +\pi]$) in the discrete-time setting as $\Theta$, and use $\Omega$ and $\Omega_m$ to represent the entire-frequency range and middle-frequency range (see Table \uppercase\expandafter{\romannumeral1}), respectively. The following statements on the relationship between the maximum singular value of the mapped systems over entire-frequency range and the maximum singular value of the given system are true:\\
a). If $\sigma_{\max}(\hat G_{m \rho c}({e^{\jmath \theta}})) \leq \hat \gamma_{m \rho c}, \forall {\theta \in \Theta}$,  then $\sigma_{\max}\left(G(\jmath \omega)\right) \leq (\rho^2+\varpi_d^2)^{\frac{1}{2}} \hat \gamma_{m \rho c}, \forall \omega \in \Omega_m$.

\noindent b). If $\sigma_{\max}(\check G_{m \rho c}({e^{\jmath \theta}})) \leq \check \gamma_{m \rho c}, \forall {\theta \in \Theta}$,  then $\sigma_{\max}\left(G(\jmath \omega)\right) \leq  (\rho^2+\varpi_d^2)^{\frac{1}{2}} \check \gamma_{m \rho c}, \forall \omega \in \Omega_m$.

\noindent c). If $\sigma_{\max}\left(G_{m \rho 1}({\jmath \omega} )\right) \leq \gamma_{m \rho 1}, \forall {\omega \in \Omega}$, then $\sigma_{\max}\left(G(\jmath \omega)\right) \leq  (\rho^2+\varpi_d^2)^{\frac{1}{2}} \gamma_{m \rho 1}, \forall \omega \in \Omega_m$.

\noindent d). If $\sigma_{\max}\left(G_{m \rho 2}({\jmath \omega})\right) \leq \gamma_{m \rho 2}, \forall {\omega \in \Omega}$, then $\sigma_{\max}\left(G(\jmath \omega)\right) \leq  (\rho^2+\varpi_d^2)^{\frac{1}{2}}  \gamma_{m \rho 2}, \forall \omega \in \Omega_m$.\\
\end{theorem}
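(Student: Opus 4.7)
\textbf{Proof plan for Theorem~\ref{thm:2.1}.}
I would handle all four parts by a single template. Write $\alpha:=(\rho^2+\varpi_d^2)^{1/2}$. The plan is (i) to derive a closed-form factorisation of each PFD mapped transfer function exhibiting it, when restricted to its natural spectral boundary (the unit circle in (a),(b); the imaginary axis in (c),(d)), as $\alpha^{-1}G$ evaluated on a circle $\mathcal C$ in the complex plane; (ii) to identify the closed disk $\overline D$ bounded by $\mathcal C$ as the region that contains the segment $\{\jmath\omega:\omega\in\Omega_m\}$ and excludes the spectrum of $A$; and (iii) to transport the a priori bound from $\mathcal C$ onto $\Omega_m$ via the maximum modulus principle applied to the subharmonic map $\zeta\mapsto\sigma_{\max}(G(\zeta))$.

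For part (a), set $s:=\rho+\jmath\varpi_c$ and $M:=(sI-A)^{-1}$, so that \eqref{upperPDFM} reads $\hat{\mathbf A}_{m\rho c}=\alpha M$, $\hat{\mathbf B}_{m\rho c}=MB$, $\hat{\mathbf C}_{m\rho c}=CM$, $\hat{\mathbf D}_{m\rho c}=\alpha^{-1}(CMB+D)$. Using $(zI-\alpha M)^{-1}=(zM^{-1}-\alpha I)^{-1}M^{-1}$ together with the resolvent identity $(sI-A)^{-1}(tI-A)^{-1}=(t-s)^{-1}[(sI-A)^{-1}-(tI-A)^{-1}]$ at $t=s-\alpha/z$, the four terms of $\hat G_{m\rho c}(z)$ telescope into the clean factorisation
\begin{equation*}
\hat G_{m\rho c}(z)=\alpha^{-1}G\bigl(s-\alpha/z\bigr).
\end{equation*}
At $z=e^{\jmath\theta}$ the argument traces the circle $\mathcal C$ of radius $\alpha$ centred at $s$; imposing $\mathfrak{Re}(\zeta)=0$ gives $(\mathfrak{Im}(\zeta)-\varpi_c)^2=\varpi_d^2$, so $\mathcal C$ meets the imaginary axis exactly at $\jmath\varpi_1$ and $\jmath\varpi_2$, and by convexity the segment $\{\jmath\omega:\omega\in\Omega_m\}$ sits inside $\overline D$. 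The hypothesis $\rho>\rho_m^*$ in Proposition~\ref{prop:2.1}(a) is precisely $|s-\lambda_i|>\alpha$ for every eigenvalue $\lambda_i$ of $A$, i.e.\ the poles of $G$ lie outside $\overline D$; hence $G$ is analytic on $\overline D$ and the maximum modulus principle yields
\begin{equation*}
\sigma_{\max}(G(\jmath\omega))\le\max_{\zeta\in\mathcal C}\sigma_{\max}(G(\zeta))=\alpha\max_{\theta\in[-\pi,\pi]}\sigma_{\max}(\hat G_{m\rho c}(e^{\jmath\theta}))\le\alpha\hat\gamma_{m\rho c},\qquad\omega\in\Omega_m.
\end{equation*}

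Parts (b)-(d) should follow by exactly the same argument after the analogous telescoping. For (b) I expect to derive $\check G_{m\rho c}(z)=\alpha^{-1}G\bigl(\jmath\varpi_c-\varpi_d^2/(\alpha z-\rho)\bigr)$, a Möbius image of the unit circle that again passes through $\jmath\varpi_1$ and $\jmath\varpi_2$; parts (c), (d) yield continuous-time analogues in which the image of the imaginary axis is a bounded circle through the same two points. The main obstacle is geometric book-keeping: one must identify which side of each image circle hosts $\Omega_m$ and verify that the sign conditions in Proposition~\ref{prop:2.1}---in particular the negativity $\rho<-\rho_m^*$ in (b), and the boundedness (rather than degeneration to a line/half-plane) of the image of the imaginary axis in (c), (d)---are exactly what place the segment $\Omega_m$ inside $\overline D$ while keeping the spectrum of $A$ outside. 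Once those geometric facts are confirmed, the maximum modulus argument of part (a) is copied verbatim.
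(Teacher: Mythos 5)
Your proposal is correct in substance but takes a genuinely different route from the paper. The paper's proof is a certificate-transport argument: from the entire-frequency bound on $\hat G_{m\rho c}$ it extracts, via the Lur'e form of the discrete-time KYP lemma (Lemma~\ref{lem:2.2}), matrices $(\hat{\mathbf P}_{m\rho c},\hat{\mathbf L}_{m\rho c},\hat{\mathbf K}_{m\rho c})$, then sets $Q=\hat{\mathbf P}_{m\rho c}$, $P=\rho\hat{\mathbf P}_{m\rho c}$ and verifies by a lengthy block-by-block computation that this pair satisfies the middle-frequency GKYP inequality \eqref{Con-GKYP-2} for $G$ with the bounded-real $\Pi$ scaled by $(\rho^2+\varpi_d^2)$; parts (b)--(d) repeat this with the other certificates. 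You bypass all of this with the composition identity $\hat G_{m\rho c}(z)=\alpha^{-1}G(s-\alpha/z)$, $\alpha=(\rho^2+\varpi_d^2)^{1/2}$, $s=\rho+\jmath\varpi_c$ --- which I checked does follow exactly from \eqref{upperPDFM} and the resolvent identity --- followed by subharmonicity of $\zeta\mapsto\sigma_{\max}(G(\zeta))$ and the maximum principle on the disk of radius $\alpha$ about $s$, whose boundary meets the imaginary axis precisely at $\jmath\varpi_1$ and $\jmath\varpi_2$. Your route is much shorter, explains \emph{why} the mappings of Definition~\ref{def:2.1} work (they are conformal changes of variable sending the unit circle, resp.\ the imaginary axis, to a circle whose chord on the imaginary axis is exactly $\Omega_m$), and makes transparent that the admissibility condition of Proposition~\ref{prop:2.1} is exactly analyticity of $G$ on the closed disk. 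What the paper's route buys is that it stays entirely inside the LMI/KYP machinery on which the rest of the paper is built and produces the explicit $(P,Q,L,K)$ certificate in the Lur'e-equality form, which is the form reused in the error-bound arguments of Section~III.

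Two caveats you should address. First, both proofs need the unstated hypothesis that $\rho$ is admissible: in the paper's proof the discrete-time KYP lemma with $P>0$ presupposes Schur stability of $\hat{\mathbf A}_{m\rho c}$, and in yours $G$ must have no poles in the closed disk; state this explicitly. Second, parts (b)--(d) of your plan are only sketched, and for (c)--(d) the printed formulas \eqref{PFDML1}--\eqref{PFDML2} contain sign and index inconsistencies (for instance ${\mathbf D}_{m\rho 1}$ mixes $(\rho-\jmath\varpi_1)^{-1}$ with $(\rho-\jmath\varpi_d)^{-1}$, and with the printed sign of $\jmath\varpi_d$ in ${\mathbf A}_{m\rho 1}$ the image circle passes through $\jmath\varpi_1$ but not through $\jmath\varpi_2$), so the ``geometric book-keeping'' you defer is precisely where those definitions must be reconciled before the maximum-modulus step can be copied verbatim.
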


\noindent \textbf{Proof.} a). Since $\sigma_{\max}(\hat G_{m \rho c}({e^{\jmath \theta}})) \leq \hat \gamma_{m \rho c}, \forall {\theta \in \Theta: [-\pi, +\pi]}$ equalivent to
\begin{equation}
\label{lemma11}
\pmatset{1}{0.36pt}
  \pmatset{0}{0.2pt}
  \pmatset{2}{2pt}
  \pmatset{3}{2pt}
  \pmatset{4}{2pt}
  \pmatset{5}{2pt}
  \pmatset{6}{2pt}
\begin{pmat}[{}]
   \hat G^*_{m \rho c}({e^{\jmath \theta}})    \cr
   I     \cr
  \end{pmat}^*  \begin{pmat}[{.}]
   I & 0  \cr
   0 & -\hat {\gamma}^2_{m\rho c}I     \cr
  \end{pmat} \begin{pmat}[{}]
   \hat G^*_{m \rho c}({e^{\jmath \theta}})   \cr
   I     \cr
  \end{pmat} \le 0, \forall \theta {\kern 4pt} \in \Theta: [-\pi,+\pi].
  \end{equation}

\noindent According to the discrete-time KYP lemma, there exists a positive symmetrical matrix $\hat {\mathbf P}_{m\rho c}$ and $\hat {\mathbf L}_{m\rho c}, \hat {\mathbf K}_{m\rho c}$ satisfying
\begin{spacing}{1.0}
\begin{subequations}\label{KYPlurD1}
   \begin{align}
&  \hat{\mathbf A}_{m\rho c} \hat {\mathbf P}_{m\rho c} \hat{\mathbf A}^*_{m\rho c}  - \hat {\mathbf P}_{m\rho c}         +  \hat{\mathbf B}_{m\rho c}\hat{\mathbf B}^*_{m\rho c}=-\hat {\mathbf L}_{m\rho c}\hat {\mathbf L}_{m\rho c}^*,   \\
&  \hat{\mathbf A}_{m\rho c} \hat{\mathbf P}_{m\rho c}\hat{\mathbf C}^*_{m\rho c} +  \hat{\mathbf B}_{m\rho c}\hat{\mathbf D}^*_{m\rho c} =-\hat {\mathbf L}_{m\rho c}\hat {\mathbf K}_{m\rho c}^* \\
&  \hat{\mathbf C}_{m\rho c} \hat{\mathbf P}_{m\rho c}\hat{\mathbf C}^*_{m\rho c} +   \hat{\mathbf D}_{m\rho c}\hat{\mathbf D}^*_{m\rho c}-\hat \gamma^2_{m\rho c}I=-\hat {\mathbf K}_{m\rho c}\hat {\mathbf K}_{m\rho c}^*,
   \end{align}
\end{subequations}
\end{spacing}

\noindent Define $Q=\hat {\mathbf P}_{m\rho c}, P=\rho \hat {\mathbf P}_{m\rho c}$, from the above equation (\ref{KYPlurD1}a-\ref{KYPlurD1}c) we have
\begin{equation}
\begin{array}{l}
             {\kern 10pt}-\mathbf{He}((\jmath \varpi_1 I-  A)  Q  (\jmath \varpi_2 I- A))    +  AP +  PA^* + BB^*   \\
            =  -(\jmath \varpi_c I-  A)  \hat {\mathbf P}_{m\rho c}  (\jmath \varpi_c I-    A )^*        + \varpi_d^2  \hat {\mathbf P}_{m\rho c}  -\rho ( \jmath \varpi_c I-A) \hat {\mathbf P}_{m\rho c}  -\rho \hat {\mathbf P}_{m\rho c} ( \jmath \varpi_c I-A)^*   + B B^*   \\
              =  (\rho^2+\varpi_d^2)  \hat {\mathbf P}_{m\rho c} -(\rho I +\jmath \varpi_c I-  A)  \hat {\mathbf P}_{m\rho c}  (\rho I+ \jmath \varpi_c I-    A )^*         + B B^*   \\
            = (\rho I+\jmath \varpi_c I-  A)\left\{  \hat{\mathbf A}_{m\rho c} \hat {\mathbf P}_{m\rho c} \hat{\mathbf A}^*_{m\rho c}  - \hat {\mathbf P}_{m\rho c}         +  \hat{\mathbf B}_{m\rho c}\hat{\mathbf B}^*_{m\rho c}  \right\}(\rho I+\jmath \varpi_c I-  A)^*    \\
            \mathop {=}\limits^{\ref{KYPlurD1}a}  (\rho I+\jmath \varpi_c I-  A)\left\{ -\hat {\mathbf L}_{m\rho c}\hat {\mathbf L}_{m\rho c}^* \right\}(\rho I+\jmath \varpi_c I-  A)^*    \\
                    \end{array}
\end{equation}

\begin{equation}
\begin{array}{l}
             {\kern 10pt} (\jmath \varpi_c I-  A)  Q C^*    +     PC^* + BD^*   \\
            =  (\jmath \varpi_c I-  A)  \hat {\mathbf P}_{m\rho c} C^*    +  \rho \hat {\mathbf P}_{m\rho c} C^*  +   B D^*   \\
            = (\rho I+\jmath \varpi_c I-  A)\hat {\mathbf P}_{m\rho c} C^*   +   B D^*      \\
             \mathop {=}\limits^{15a} (\rho I+\jmath \varpi_c I-  A)\left\{ \begin{array}{l}
+ (\rho^2+\varpi_d^2) (\rho I+\jmath \varpi_c I-  A)^{-1}\hat {\mathbf P}_{m\rho c} (\rho I+\jmath \varpi_c I-  A)^{-*} \\
+ (\rho I+\jmath \varpi_c I-  A)^{-1}BB^* (\rho I+\jmath \varpi_c I-  A)^{-*} \\
+ \hat {\mathbf L}_{m\rho c}\hat {\mathbf L}_{m\rho c}^* \\
 \end{array} \right\} C^*+BD^* \\
= {(\rho^2+\varpi_d^2)^{\frac{1}{2}}} (\rho I+\jmath \varpi_c I-  A)  \left\{ \hat{\mathbf A}_{m\rho c} \hat {\mathbf P}_{m\rho c}\hat{\mathbf C}^*_{m\rho c} +  \hat{\mathbf B}_{m\rho c}\hat{\mathbf D}^*_{m\rho c} \right\}+(\rho I+\jmath \varpi_c I-  A) \hat {\mathbf L}_{m\rho c}\hat {\mathbf L}_{m\rho c}^* C^*   \\
 \mathop {=}\limits^{\ref{KYPlurD1}b} (\rho I+\jmath \varpi_c I-  A) \left\{ -{(\rho^2+\varpi_d^2)^{\frac{1}{2}}} \hat {\mathbf L}_{m\rho c}\hat {\mathbf K}_{m\rho c}^*+\hat {\mathbf L}_{m\rho c}\hat {\mathbf L}_{m\rho c}^*C^*\right\} \\
  =- (\rho I+\jmath \varpi_c I-  A)\hat {\mathbf L}_{m\rho c}\left({(\rho^2+\varpi_d^2)^{\frac{1}{2}}}  \hat {\mathbf K}_{m\rho c} -C \hat {\mathbf L}_{m\rho c} \right)^*\\
         \end{array}
\end{equation}

\begin{equation}
\begin{array}{l}
             {\kern 10pt} -C Q C^*    +   DD^*-(\rho^2+\varpi_d^2)\hat {\gamma}_{m\rho c}^2 I  \\
            \mathop {=}\limits^{\ref{KYPlurD1}a} -C\left\{ \begin{array}{l}
+ (\rho^2+\varpi_d^2) (\rho I+\jmath \varpi_c I-  A)^{-1}\hat {\mathbf P}_{m\rho c} (\rho I+\jmath \varpi_c I-  A)^{-*} \\
+ (\rho I+\jmath \varpi_c I-  A)^{-1}BB^* (\rho I+\jmath \varpi_c I-  A)^{-*} \\
+ \hat {\mathbf L}_{m\rho c}\hat {\mathbf L}_{m\rho c}^* \\
 \end{array} \right\} C^*+  DD^*-(\rho^2+\varpi_d^2)\hat {\gamma}_{m\rho c}^2 I  \\
      \mathop {=}\limits^{\ref{KYPlurD1}a} (\rho^2+\varpi_d^2) \left\{ \begin{array}{l}
+ C(\rho I+\jmath \varpi_c I-  A)^{-1}\hat {\mathbf P}_{m\rho c} (\rho I+\jmath \varpi_c I-  A)^{-*}C^* \\
+ (\rho^2+\varpi_d^2)^{-1} C(\rho I+\jmath \varpi_c I-  A)^{-1}BB^* (\rho I+\jmath \varpi_c I-  A)^{-*}   C^*\\
+ (\rho^2+\varpi_d^2)^{-1} C (\rho I+\jmath \varpi_c I-  A)^{-1} BD^*   \\
+ (\rho^2+\varpi_d^2)^{-1} D B^* (\rho I+\jmath \varpi_c I-  A)^{-*}  C^*\\
+ (\rho^2+\varpi_d^2)^{-1} DD^* \\
- \hat {\gamma}_{m\rho c}^2 I \\
 \end{array} \right\} \\
-(\rho^2+\varpi_d^2) \left\{ \begin{array}{l}
+ 2C (\rho I+\jmath \varpi_c I-  A)^{-1}\hat {\mathbf P}_{m\rho c} (\rho I+\jmath \varpi_c I-  A)^{-*} C^*\\
+ 2(\rho^2+\varpi_d^2)^{-1} C(\rho I+\jmath \varpi_c I-  A)^{-1}BB^* (\rho I+\jmath \varpi_c I-  A)^{-*} C^*\\
+ (\rho^2+\varpi_d^2)^{-1} C (\rho I+\jmath \varpi_c I-  A)^{-1} BD^* \\
+ (\rho^2+\varpi_d^2)^{-1} D B^* (\rho I+\jmath \varpi_c I-  A)^{-*}  C^*     \\
 \end{array} \right\} \\
 -C\hat {\mathbf L}_{m\rho c}\hat {\mathbf L}_{m\rho c}^*C^*\\
    \mathop {===}\limits^{\ref{KYPlurD1}b,\ref{KYPlurD1}c} (\rho^2+\varpi_d^2)  \left\{\hat{\mathbf C}_{l \rho c} \hat {\mathbf P}_{m\rho c}\hat{\mathbf C}^*_{m\rho c} +   \hat{\mathbf D}_{m\rho c}\hat{\mathbf D}^*_{m\rho c}-\hat \gamma^2_{m\rho c}I\right\} \\
  {\kern 12pt} +(\rho^2+\varpi_d^2) ^{\frac{1}{2}}  C \left\{\varpi_d^{-1}(\hat {\mathbf A}_{m\rho c} \hat{\mathbf P}_{m\rho c}\hat{\mathbf C}^*_{m\rho c} + \hat{\mathbf B}_{m\rho c}\hat{\mathbf D}^*_{m\rho c} )\right\}\\
  {\kern 12pt}+(\rho^2+\varpi_d^2) ^{\frac{1}{2}} \left\{\varpi_d^{-1}(\hat {\mathbf A}_{m\rho c} \hat{\mathbf P}_{m\rho c}\hat{\mathbf C}^*_{m\rho c} + \hat{\mathbf B}_{m\rho c}\hat{\mathbf D}^*_{m\rho c} )\right\} C^* \\
  {\kern 12pt} -  C\hat {\mathbf L}_{m\rho c}\hat {\mathbf L}_{m\rho c}^*C^*\\
    =-(\rho^2+\varpi_d^2)    \hat {\mathbf K}_{m\rho c} \hat {\mathbf K}_{m\rho c}^* +(\rho^2+\varpi_d^2)^{\frac{1}{2}}C\hat {\mathbf L}_{m\rho c}\hat {\mathbf K}_{m\rho c}^*+ (\rho^2+\varpi_d^2)^{\frac{1}{2}}\hat {\mathbf K}_{m\rho c}\hat {\mathbf L}_{m\rho c}C^* -C\hat {\mathbf L}_{m\rho c}\hat {\mathbf L}_{m\rho c}^*C^* \\
      =-\left((\rho^2+\varpi_d^2)^{\frac{1}{2}}\hat {\mathbf K}_{m\rho c}-C\hat {\mathbf L}_{m\rho c}\right)\left((\rho^2+\varpi_d^2)^{\frac{1}{2}}\hat {\mathbf K}_{m\rho c}-C\hat {\mathbf L}_{m\rho c}\right)^*
         \end{array}
\end{equation}

\noindent Combing the above equations, we have:
 \begin{equation}
\begin{array}{l}
 \pmatset{1}{0.1pt}
  \pmatset{0}{0.1pt}
  \pmatset{2}{4pt}
  \pmatset{3}{4pt}
  \pmatset{4}{4pt}
  \pmatset{5}{4pt}
  \pmatset{6}{1pt}
  {\kern 10pt}
\begin{pmat}[{.}]
   A   & I \cr
   C   & 0 \cr
  \end{pmat}\begin{pmat}[{.}]
   -Q   & +\jmath \varpi_c Q+ P  \cr
  -\jmath \varpi_c Q+P   & \varpi_1\varpi_2Q \cr
  \end{pmat} \begin{pmat}[{.}]
   A   & I  \cr
   C   & 0 \cr
  \end{pmat}^* + \begin{pmat}[{.}]
   B   & 0  \cr
   D   & I \cr
  \end{pmat}\begin{pmat}[{.}]
   I   & 0  \cr
   0   & -(\rho^2+\varpi_d^2)  \hat {\gamma}_{m\rho c}^2 \cr
  \end{pmat} \begin{pmat}[{.}]
   B   & 0  \cr
   D   & I \cr
  \end{pmat}^*  \\
 = {\begin{pmat}[{|}]
  -\mathbf{He}((\jmath \varpi_1 I-  A)  Q  (\jmath \varpi_2 I- A))    +  AP +  PA^* + BB^*   & (\jmath \varpi_c I-  A)  Q C^*    +     PC^* + BD^*  \cr\-
      *  &  -C Q C^*    +   DD^*-(\rho^2+\varpi_d^2)  \hat {\gamma}_{m\rho c}^2 I \cr
  \end{pmat}} \\[4mm]
=  {\begin{pmat}[{|}]
     -LL^*      & - LK^*  \cr\-
      *             & -KK^*   \cr
  \end{pmat}} \\
 \end{array}\end{equation}

\noindent  where
\[\begin{array}{l}
L=(\rho I+\jmath \varpi_c I-  A) \hat {\mathbf L}_{m\rho c} \\
K= C\hat {\mathbf L}_{m\rho c}-(\rho^2+\varpi_d^2)^{\frac{1}{2}}\hat {\mathbf K}_{m\rho c}\\
 \end{array}\]

\noindent According to the GKYP lemma (Lemma 2.3), the following inequality can be concluded:
\begin{equation}
\label{lemma11}
\pmatset{1}{0.36pt}
  \pmatset{0}{0.2pt}
  \pmatset{2}{2pt}
  \pmatset{3}{2pt}
  \pmatset{4}{2pt}
  \pmatset{5}{2pt}
  \pmatset{6}{2pt}
\begin{pmat}[{}]
   G^*(j\omega )    \cr
   I     \cr
  \end{pmat}^*  \begin{pmat}[{|}]
    {\kern 4pt} I  {\kern 4pt}   &  0 \cr\-
   0     & -(\rho^2+\varpi_d^2)  \hat {\gamma}_{m\rho c}^2  \cr
  \end{pmat}  \begin{pmat}[{}]
   G^*(j\omega )    \cr
   I     \cr
  \end{pmat} \le 0,  \forall \omega  \in \Omega_l: [\varpi_1,\varpi_2]. \end{equation}
\noindent This leads to

\begin{equation}
\sigma_{\max}\left(G(\jmath\omega)\right) \leq (\rho^2+\varpi_d^2)^{\frac{1}{2}} \hat {\gamma}_{m\rho c},   \forall \omega  \in  \Omega_m: [\varpi_1,\varpi_2].
\end{equation}
\noindent this completes the proof of statement (a).\\

\noindent b). Since $\sigma_{\max}\left(\check G_{l \rho c}({e^{\jmath \theta}})\right) \leq \check \gamma_{l \rho c}, \forall {\theta \in \Theta}$ is equivalent to
\begin{equation}
\label{lemma11}
\pmatset{1}{0.36pt}
  \pmatset{0}{0.2pt}
  \pmatset{2}{2pt}
  \pmatset{3}{2pt}
  \pmatset{4}{2pt}
  \pmatset{5}{2pt}
  \pmatset{6}{2pt}
\begin{pmat}[{}]
   G^*(e^{\jmath\theta} )    \cr
   I     \cr
  \end{pmat}^*  \Pi \begin{pmat}[{}]
   G^*(e^{\jmath\theta} )    \cr
   I     \cr
  \end{pmat} \le 0,{\kern 4pt}
  \forall \theta {\kern 4pt} \in [-\pi,+\pi].
  \end{equation}
\noindent According to the discrete-time KYP lemma (Lemma 2.2), there exists a positive symmetrical matrix $\check {\mathbf P}_{m\rho c}$ and $\check{\mathbf L}_{m\rho c}, \check {\mathbf K}_{m\rho c}$ satisfying
\begin{spacing}{1.2}
\begin{subequations}\label{KYPlurD2}
   \begin{align}
&  \check{\mathbf A}_{m\rho c} \check {\mathbf P}_{m\rho c} \check{\mathbf A}^*_{m\rho c}  - \check {\mathbf P}_{m\rho c}         +  \check{\mathbf B}_{m\rho c}\check{\mathbf B}^*_{m\rho c}=-\check {\mathbf L}_{m\rho c}\check {\mathbf L}_{m\rho c}^*,   \\
&  \check{\mathbf A}_{m\rho c} \check {\mathbf P}_{m\rho c}\check{\mathbf C}^*_{m\rho c} +  \check{\mathbf B}_{m\rho c}\check{\mathbf D}^*_{m\rho c} =-\check {\mathbf L}_{m\rho c}\check {\mathbf K}_{m\rho c}^* \\
&  \check{\mathbf C}_{m\rho c} \check {\mathbf P}_{m\rho c}\check{\mathbf C}^*_{m\rho c} +   \check{\mathbf D}_{m\rho c}\hat{\mathbf D}^*_{m\rho c}-\check \gamma^2_{m\rho c}I=-\check {\mathbf K}_{m\rho c}\check {\mathbf K}_{m\rho c}^*,
   \end{align}
\end{subequations}
\end{spacing}

\noindent Define $Q=\check {\mathbf P}_{m\rho c}, P=\rho \check {\mathbf P}_{m\rho c}$, from the above equation (\ref{KYPlur'e-D}) we have
\begin{equation}
\begin{array}{l}
             {\kern 10pt}-\mathbf{He}((\jmath \varpi_1 I-  A)  Q  (\jmath \varpi_2 I- A))    +  AP +  PA^* + BB^*   \\
            =  -(\jmath \varpi_c I-  A)  \check{\mathbf P}_{m\rho c}  (\jmath \varpi_c I-    A )^*        + \varpi_d^2  \check{\mathbf P}_{m\rho c}  -\rho ( \jmath \varpi_c I-A) \check {\mathbf P}_{m\rho c}  -\rho  \check {\mathbf P}_{m\rho c} ( \jmath \varpi_c I-A)^*   + B B^*   \\
            =(\varpi_d I +\frac{\rho}{\varpi_d} (\jmath \varpi_c I -A))\check {\mathbf P}_{m\rho c}(\varpi_d I +\frac{\rho}{\varpi_d} (\jmath \varpi_c I -A))^* \\ {\kern 10pt} -(\rho^2+\varpi_d^2)\varpi_d^{-2}(\jmath \varpi_c I-  A) \check {\mathbf P}_{m\rho c} (\jmath \varpi_c I-  A)^*+BB^*\\
            = (\rho^2+\varpi_d^2)\varpi_d^{-2} (\jmath \varpi_c I-  A)\left\{  \check{\mathbf A}_{m\rho c} \check {\mathbf P}_{m\rho c} \check{\mathbf A}^*_{m\rho c}  - \check{\mathbf P}_{m\rho c}         +  \check{\mathbf B}_{m\rho c}\check{\mathbf B}^*_{m\rho c}  \right\}(\jmath \varpi_c I-  A)^*    \\
            \mathop {=}\limits^{\ref{KYPlurD2}a} (\rho^2+\varpi_d^2)\varpi_d^{-2}(\jmath \varpi_c I-  A) \left\{ -\check{\mathbf L}_{m\rho c}\check{\mathbf L}_{m\rho c}^* \right\}(\jmath \varpi_c I-  A)^*    \\
                    \end{array}
\end{equation}
\begin{equation}
\begin{array}{l}
             {\kern 10pt} (\jmath \varpi_c I-  A)  Q C^*    +     PC^* + BD^*   \\
             =(\jmath \varpi_c I-  A)  \check {\mathbf P}_{m\rho c} C^*    +     \rho \varpi_d \check {\mathbf P}_{m\rho c}C^* + BD^* \\
             \mathop{=}\limits^{\ref{KYPlurD2}a} (\jmath \varpi_c I-  A)\left\{ \begin{array}{l}
+ \varpi_d^2 (\jmath \varpi_c I-  A)^{-1}\check {\mathbf P}_{m\rho c} (\jmath \varpi_c I-  A)^{-*} \\
+ (\jmath \varpi_c I-  A)^{-1}BB^* (\jmath \varpi_c I-  A)^{-*} \\
- \rho  (\jmath \varpi_c I-  A)^{-1} \check {\mathbf P}_{m\rho c} \\
- \rho   \check {\mathbf P}_{m\rho c} (\jmath \varpi_c I-  A)^{-*} \\
+ (\rho^2+\varpi_d^2)\varpi_d^{-2}\check {\mathbf L}_{m\rho c}\check {\mathbf L}_{m\rho c}^* \\
 \end{array} \right\} C^*+     \rho \varpi_d \check {\mathbf P}_{m\rho c}C^* + BD^* \\
=  \varpi_d(\varpi_d-\frac{\rho}{\varpi_d} (\jmath \varpi_c I-  A)) \check {\mathbf P}_{m\rho c} (\jmath \varpi_c I-  A)^{-*}C^* + B(B^*(\jmath \varpi_c I-  A)^{-*}C^*+D^*)\\
 \mathop {=}\limits^{\ref{KYPlurD2}b} (\rho ^2+\varpi_d^2)\varpi_d^{-2} \varpi_d (\jmath \varpi_c I-  A) \left\{  \check{\mathbf A}_{lc\rho} \check {\mathbf P}_{m\rho c}\check{\mathbf C}^*_{m\rho c} +  \check{\mathbf B}_{m\rho c}\check{\mathbf D}^*_{m\rho c} \right\}\\
  {\kern 12pt}+    (\rho ^2+\varpi_d^2)\varpi_d^{-2}(\jmath \varpi_c I-  A)\check {\mathbf L}_{m\rho c}\check {\mathbf L}_{m\rho c}^*C^*  \\
  =-  (\rho ^2+\varpi_d^2)\varpi_d^{-2}(\jmath \varpi_c I-  A) \check {\mathbf L}_{m\rho c}\left( \varpi_d \check {\mathbf K}_{m\rho c}-C \check {\mathbf L}_{m\rho c}   \right)^*\\
         \end{array}
\end{equation}
\begin{equation}
\begin{array}{l}
             {\kern 10pt} -C Q C^*    +   DD^*-(\rho^2+\varpi_d^{2})  \hat {\gamma}_{m\rho c}^2I   \\
             = -C \check {\mathbf P}_{m\rho c} C^*    +   DD^*-(\rho^2+\varpi_d^{2})  \hat {\gamma}_{m\rho c}^2 I   \\
            \mathop {==}\limits^{22a} -C\left\{ \begin{array}{l}
+ \varpi_d^2 (\jmath \varpi_c I-  A)^{-1}\check {\mathbf P}_{m\rho c} (\jmath \varpi_c I-  A)^{-*} \\
+ (\jmath \varpi_c I-  A)^{-1}BB^* (\jmath \varpi_c I-  A)^{-*} \\
- \rho  (\jmath \varpi_c I-  A)^{-1} \check {\mathbf P}_{m\rho c} \\
- \rho   \check {\mathbf P}_{m\rho c} (\jmath \varpi_c I-  A)^{-*} \\
+ (\rho ^2+\varpi_d^2)\varpi_d^{-2} \check {\mathbf L}_{m\rho c}\check {\mathbf L}_{m\rho c}^* \\
 \end{array} \right\} C^* +  DD^*-(\rho ^2+\varpi_d^2) \hat {\gamma}_{m\rho c}^2 I \\
      = (\rho ^2+\varpi_d^2)\varpi_d^{-2} \varpi_d^2\left\{ \begin{array}{l}
+ (\rho ^2+\varpi_d^2)^{-1} \varpi_d^{2} C(\jmath \varpi_c I-  A)^{-1}\check {\mathbf P}_{m\rho c} (\jmath \varpi_c I-  A)^{-*}C^* \\
+ (\rho ^2+\varpi_d^2)^{-1} C(\jmath \varpi_c I-  A)^{-1}BB^* (\jmath \varpi_c I-  A)^{-*}   C^*\\
+ (\rho ^2+\varpi_d^2)^{-1} C (\jmath \varpi_c I-  A)^{-1} BD^*   \\
+ (\rho ^2+\varpi_d^2)^{-1} D B^* (\jmath \varpi_c I-  A)^{-*}  C^*\\
+ (\rho ^2+\varpi_d^2)^{-1}  DD^* \\
- \hat {\gamma}_{m\rho c}^2 I \\
 \end{array} \right\} \\
-(\rho ^2+\varpi_d^2)  \left\{ \begin{array}{l}
+ 2(\rho ^2+\varpi_d^2)^{-1} \varpi_d^{2}C (\jmath \varpi_c I-  A)^{-1}\hat {\mathbf P}_{m\rho c} (\jmath \varpi_c I-  A)^{-*} C^*\\
+ 2(\rho ^2+\varpi_d^2)^{-1}  C(\jmath \varpi_c I-  A)^{-1}BB^* (\jmath \varpi_c I-  A)^{-*} C^*\\
- (\rho ^2+\varpi_d^2)^{-1}  \rho (\jmath \varpi_c I-  A)^{-1} \check {\mathbf P}_{m\rho c} \\
- (\rho ^2+\varpi_d^2)^{-1} \rho  \check {\mathbf P}_{m\rho c} (\jmath \varpi_c I-  A)^{-*} \\
+ (\rho ^2+\varpi_d^2)^{-1}  C (\jmath \varpi_c I-  A)^{-1} BD^*   \\
+ (\rho ^2+\varpi_d^2)^{-1} D B^* (\jmath \varpi_c I-  A)^{-*}  C^*\\
 \end{array} \right\} \\
 -(\rho ^2+\varpi_d^2)\varpi_d^{-2} C\check {\mathbf L}_{m\rho c}\check {\mathbf L}_{m\rho c}^*C^*\\
    \mathop {===}\limits^{\ref{KYPlurD2}b,\ref{KYPlurD2}c}(\rho ^2+\varpi_d^2)   \left\{\check{\mathbf C}_{lc\rho} \check{\mathbf P}_{m\rho c}\check{\mathbf C}^*_{m\rho c} +   \check{\mathbf D}_{m\rho c}\check{\mathbf D}^*_{m\rho c}-\check{ \gamma}^2_{m\rho c}I\right\}\\
  {\kern 12pt} +(\rho ^2+\varpi_d^2)  C \left\{\varpi_d^{-1}(\check{\mathbf A}_{lc\rho} \check{\mathbf P}_{m\rho c}\check{\mathbf C}^*_{m\rho c} + \check{\mathbf B}_{m\rho c}\check{\mathbf D}^*_{m\rho c} )\right\}\\
  {\kern 12pt} + (\rho ^2+\varpi_d^2)   \left\{\varpi_d^{-1}  (\check{\mathbf A}_{lc\rho} \check{\mathbf P}_{m\rho c}\check{\mathbf C}^*_{m\rho c} + \check{\mathbf B}_{m\rho c}\check{\mathbf D}^*_{m\rho c} ) \right\}C^* \\
  {\kern 12pt} -(\rho ^2+\varpi_d^2)\varpi_d^{-2} C\check {\mathbf L}_{m\rho c}\check {\mathbf L}_{m\rho c}^*C^*\\
   =-(\rho ^2+\varpi_d^2) \check {\mathbf K}_{m\rho c}\check {\mathbf K}_{m\rho c}^* +(\rho ^2+\varpi_d^2)\varpi_d^{-2} \varpi_d   C\check {\mathbf L}_{m\rho c}\check {\mathbf K}_{m\rho c}^*\\
     {\kern 12pt} + (\rho ^2+\varpi_d^2)\varpi_d^{-2} \varpi_d \check {\mathbf K}_{m\rho c}\hat {\mathbf L}_{m\rho c}C^* -C\check {\mathbf L}_{m\rho c}\check {\mathbf L}_{m\rho c}^*C^*\\
  = -(\rho ^2+\varpi_d^2)\varpi_d^{-2} \left(\varpi_d  \check {\mathbf K}_{m\rho c}-C\check {\mathbf L}_{m\rho c}\right) \left(\varpi_d  \check {\mathbf K}_{m\rho c}-C\check {\mathbf L}_{m\rho c}\right)^*
         \end{array}
\end{equation}

\noindent Combing the above equations, we have:
 \begin{equation}
\begin{array}{l}
 \pmatset{1}{0.1pt}
  \pmatset{0}{0.1pt}
  \pmatset{2}{4pt}
  \pmatset{3}{4pt}
  \pmatset{4}{4pt}
  \pmatset{5}{4pt}
  \pmatset{6}{1pt}
  {\kern 10pt}
\begin{pmat}[{.}]
   A   & I \cr
   C   & 0 \cr
  \end{pmat}\begin{pmat}[{.}]
   -Q   & +\jmath \varpi_c Q+ P  \cr
  -\jmath \varpi_c Q+P   & \varpi_1\varpi_2Q \cr
  \end{pmat} \begin{pmat}[{.}]
   A   & I \cr
   C   & 0 \cr
  \end{pmat}^* + \begin{pmat}[{.}]
   B   & 0 \cr
   D   & I \cr
  \end{pmat}\begin{pmat}[{|}]
    {\kern 4pt} I  {\kern 4pt}   & 0  \cr\-
   0   & -(\rho ^2+\varpi_d^2)   \hat {\gamma}_{m\rho c}^2I \cr
  \end{pmat} \begin{pmat}[{.}]
   B   & 0  \cr
   D   & I \cr
  \end{pmat}^*  \\
 = {\begin{pmat}[{|}]
  -\mathbf{He}((\jmath \varpi_1 I-  A)  Q  (\jmath \varpi_2 I- A))    +  AP +  PA^* + BB^*   & (\jmath \varpi_c I-  A)  Q C^*    +     PC^* + BD^*  \cr\-
      *  &  -C Q C^*    +   DD^*-(\rho ^2+\varpi_d^2)  \hat {\gamma}_{m\rho c}^2 I \cr
  \end{pmat}} \\[4mm]
=  {\begin{pmat}[{|}]
     -LL^*      & - LK^*  \cr\-
      *             & -KK^*   \cr
  \end{pmat}} \\
 \end{array}\end{equation}

\noindent  where
\[\begin{array}{l}
L=(\rho ^2+\varpi_d^2)^{\frac{1}{2}}\varpi_d^{-1} (\jmath \varpi_c I-  A) \check {\mathbf L}_{m\rho c} \\
K=(\rho ^2+\varpi_d^2)^{\frac{1}{2}}\varpi_d^{-1}  \left(  \varpi_d  \check {\mathbf K}_{m\rho c}- C\check {\mathbf L}_{m\rho c}\right)\\
 \end{array}\]

\noindent According to the GKYP lemma (Lemma 2.3), the following inequality can be concluded:

\begin{equation}
\label{lemma11}
\pmatset{1}{0.36pt}
  \pmatset{0}{0.2pt}
  \pmatset{2}{2pt}
  \pmatset{3}{2pt}
  \pmatset{4}{2pt}
  \pmatset{5}{2pt}
  \pmatset{6}{2pt}
\begin{pmat}[{}]
   G^*(j\omega )    \cr
   I     \cr
  \end{pmat}^*  \begin{pmat}[{|}]
    {\kern 4pt} I  {\kern 4pt}   &  0 \cr\-
   0     &  -(\rho ^2+\varpi_d^2) \hat {\gamma}_{m\rho c}^2 \cr
  \end{pmat}  \begin{pmat}[{}]
   G^*(j\omega )    \cr
   I     \cr
  \end{pmat} \le 0,
 holds {\kern 4pt} for {\kern 4pt} all{\kern 4pt} \omega {\kern 4pt} \in [\varpi_1,\varpi_2]. \end{equation}

\noindent This leads to\begin{equation}
\sigma_{\max}\left(G(\jmath\omega)\right) \leq (\rho ^2+\varpi_d^2) ^{\frac{1}{2}} \check {\gamma}_{m\rho c},   \forall \omega  \in  \Omega_m: [\varpi_1,\varpi_2].
\end{equation}
\noindent c). Since $\sigma_{\max}\left(G_{l \rho 1}({ \jmath \omega})\right) \leq   \gamma_{l \rho 1}, \forall {\omega \in \Omega: [-\infty, +\infty]}$ equivalent to

\begin{equation}
\label{lemma11}
\pmatset{1}{0.36pt}
  \pmatset{0}{0.2pt}
  \pmatset{2}{2pt}
  \pmatset{3}{2pt}
  \pmatset{4}{2pt}
  \pmatset{5}{2pt}
  \pmatset{6}{2pt}
\begin{pmat}[{}]
   G^*(\jmath \omega )    \cr
   I     \cr
  \end{pmat}^*  \Pi \begin{pmat}[{}]
   G^*(\jmath \omega )    \cr
   I     \cr
  \end{pmat} \le 0,
 \forall \omega {\kern 4pt} \in (-\infty, +\infty).
  \end{equation}
\noindent According to the GKYP lemma (Lemma 2.3), there exists a positive symmetrical matrix $  {\mathbf P}_{m\rho 1}$ and $ {\mathbf L}_{m\rho 1},  {\mathbf K}_{m\rho 1}$ satisfying
\begin{spacing}{1.2}
\begin{subequations}\label{KYPlurC1}
   \begin{align}
&   {\mathbf A}_{m\rho 1}  {\mathbf P}_{m\rho 1} +  {\mathbf P}_{m\rho 1}  {\mathbf A}^*_{m\rho 1}       +   {\mathbf B}_{m\rho 1} {\mathbf B}^*_{m\rho 1}=-  {\mathbf L}_{m\rho 1}  {\mathbf L}_{m\rho 1}^*,   \\
&     {\mathbf P}_{m\rho 1}  {\mathbf C}^*_{m\rho 1} +   {\mathbf B}_{m\rho 1} {\mathbf D}^*_{m\rho 1} =-{\mathbf L}_{m\rho 1}  {\mathbf K}_{m\rho 1}^* \\
&   {\mathbf D}_{m\rho 1} {\mathbf D}^*_{m\rho 1}-  \gamma^2_{m\rho 1}I=-  {\mathbf K}_{m\rho 1}  {\mathbf K}_{m\rho 1}^*,
   \end{align}
\end{subequations}
\end{spacing}

\noindent Define $Q=  {\mathbf P}_{m\rho 1}, P=\rho   {\mathbf P}_{m\rho 1}$, from the above equation (\ref{KYPlurC1}) we have

\begin{equation}
\begin{array}{l}
             {\kern 10pt}-\mathbf{He}((\jmath \varpi_1 I-  A)  Q  (\jmath \varpi_2 I- A))    +  AP +  PA^* + BB^*   \\
            =  -(\jmath \varpi_c I-  A)   {\mathbf P}_{m\rho 1}  (\jmath \varpi_c I-    A )^*        + \varpi_d^2   {\mathbf P}_{m\rho 1}  -\rho  ( \jmath \varpi_1 I-A)  {\mathbf P}_{m\rho 1}  -   {\mathbf P}_{m\rho 1} ( \jmath \varpi_1 I-A)^*   + B B^*   \\
            = -(\jmath \varpi_1 I-  A)   {\mathbf P}_{m\rho 1}  (\jmath \varpi_1 I-    A )^*          -\rho  ( \jmath \varpi_1 I-A)  {\mathbf P}_{m\rho 1}  -   {\mathbf P}_{m\rho 1} ( \jmath \varpi_1 I-A)^*\\
            {\kern 8pt}+(\jmath \varpi_c I-    A ){\mathbf P}_{m\rho 1}(\jmath \varpi_d)^*+  (\jmath \varpi_d){\mathbf P}_{m\rho 1}(\jmath \varpi_c I-    A )^*+ (\jmath \varpi_d) (\jmath \varpi_d)^*{\mathbf P}_{m\rho 1} + \varpi_d^2   {\mathbf P}_{m\rho 1} + B B^*   \\
              = -(\jmath \varpi_1 I-  A)   {\mathbf P}_{m\rho 1}  (\jmath \varpi_1 I-    A )^*          -\rho  ( \jmath \varpi_1 I-A)  {\mathbf P}_{m\rho 1}  -   {\mathbf P}_{m\rho 1} ( \jmath \varpi_1 I-A)^*\\
            {\kern 8pt}+(\jmath \varpi_1 I-    A ){\mathbf P}_{m\rho 1}(\jmath \varpi_d)^*+  (\jmath \varpi_d){\mathbf P}_{m\rho 1}(\jmath \varpi_1 I-    A )^* \\
            {\kern 8pt}-(\jmath \varpi_d )(\jmath \varpi_d)^* {\mathbf P}_{m\rho 1}-  (\jmath \varpi_d)(\jmath \varpi_d)^* {\mathbf P}_{m\rho 1} + (\jmath \varpi_d) (\jmath \varpi_d)^*{\mathbf P}_{m\rho 1} + \varpi_d^2   {\mathbf P}_{m\rho 1}+ B B^*   \\
               = -\mathbf{He}((\jmath \varpi_1 I-  A)   { \mathbf P}_{m\rho 1} \left(\rho I -\jmath \varpi_d I+ 0.5* (\jmath \varpi_1 I-    A ) \right)^* )  + B B^* \\
               =   (\jmath \varpi_1 I-  A)\left\{   {\mathbf A}_{m\rho 1}{\mathbf P}_{m\rho 1}  -        {\mathbf P}_{m\rho 1}  {\mathbf A}_{m\rho 1}^*    +  {\mathbf B}_{m\rho 1} {\mathbf B}^*_{m\rho 1}  \right\}(\jmath \varpi_1 I-  A)^*    \\
            \mathop {=}\limits^{\ref{KYPlurC1}a}   (\jmath \varpi_1 I-  A) \left\{ - {\mathbf L}_{m\rho 1} {\mathbf L}_{m\rho 1}^* \right\}(\jmath \varpi_1 I-  A)^*    \\
                    \end{array}
\end{equation}

\begin{equation}
\begin{array}{l}
             {\kern 10pt} (\jmath \varpi_c I-  A)  Q C^*    +     PC^* + BD^*   \\
             =(\jmath \varpi_c I-  A)   {\mathbf P}_{m\rho 1} C^*    +     \rho  {\mathbf P}_{m\rho 1}C^* + BD^* \\
             =(\jmath \varpi_2 I-  A)   {\mathbf P}_{m\rho 1} C^*    +     (\rho+\jmath \varpi_d)  {\mathbf P}_{m\rho 1}C^* + BD^* \\
             \mathop{=}\limits^{\ref{KYPlurC1}a} (\jmath \varpi_2 I-  A)\left\{ \begin{array}{l}
              + (\jmath \varpi_2 I-  A)^{-1}BB^*(\jmath \varpi_1 I-  A)^{-*}     \\
-(\rho+\jmath \varpi_d)(\jmath \varpi_2 I-  A)^{-1}{\mathbf P}_{m\rho 1}    \\
-(\rho+\jmath \varpi_d){\mathbf P}_{m\rho 1}(\jmath \varpi_1 I-  A)^{-*}    \\
-(2\jmath \varpi_d)(\rho+\jmath \varpi_d)(\jmath \varpi_2 I-  A)^{-1} {\mathbf P}_{m\rho 1}(\jmath \varpi_1 I-  A)^{-*}  \\
 +  (\jmath \varpi_1 I-  A) (\jmath \varpi_2 I-  A)^{-1}  {\mathbf L}_{m\rho 1} {\mathbf L}_{m\rho 1}^* \\
 \end{array} \right\} C^* + BD^*  \\
 +    (\rho+\jmath \varpi_d)  {\mathbf P}_{m\rho 1}C^*\\
= -(\rho+\jmath \varpi_d) (\jmath \varpi_1 I-  A) \left\{ \begin{array}{l}   {\mathbf P}_{m\rho 1}(\jmath \varpi_1 I-  A)^{-*}C^* \\
                   -(\rho+\jmath \varpi_d)^{-1} (\jmath \varpi_1 I-  A)^{-1}  B B^*(\jmath \varpi_c I-  A)^{-*}C^* \\
                  -(\rho+\jmath \varpi_d)^{-1}(\jmath \varpi_1 I-  A)^{-1}  B D^*)\\
                                  \end{array} \right\} \\
 +  (\jmath \varpi_1 I-  A)   {\mathbf L}_{m\rho 1}  {\mathbf L}_{m\rho 1}^* C^*\\
 \mathop {=}\limits^{\ref{KYPlurC1}b} -(\rho+\jmath \varpi_d) (\jmath \varpi_1 I-  A)   \left\{     {\mathbf P}_{m\rho 1} {\mathbf C}^*_{m\rho 1}
  +   {\mathbf B}_{m\rho 1} {\mathbf D}^*_{m\rho 1} \right\}  +   (\jmath \varpi_1 I-  A)  {\mathbf L}_{m\rho 1}  {\mathbf L}_{m\rho 1}^*C^*  \\
  =- (\jmath \varpi_1 I-  A)  {\mathbf L}_{m\rho 1}\left( -(\rho-\jmath \varpi_d)    {\mathbf K}_{m\rho 1}-C   {\mathbf L}_{m\rho 1}   \right)^*\\
         \end{array}
\end{equation}

\begin{equation}
\begin{array}{l}
             {\kern 10pt}- C  Q C^*    +     DD^* -(\rho^2+\varpi_d^2)\gamma_{l\rho1}^2I   \\
             =- C  {\mathbf P}_{m\rho 1} C^*    +     DD^* -(\rho^2+\varpi_d^2)\gamma_{l\rho1}^2I   \\
       \mathop{=}\limits^{\ref{KYPlurC1}a} -C \left\{ \begin{array}{l}
              + (\jmath \varpi_1 I-  A)^{-1}BB^*(\jmath \varpi_1 I-  A)^{-*}     \\
-(\rho-\jmath \varpi_d)(\jmath \varpi_1 I-  A)^{-1}{\mathbf P}_{m\rho 1}    \\
-{\mathbf P}_{m\rho 1}(\jmath \varpi_1 I-  A)^{-*}(\rho-\jmath \varpi_d)^*    \\
 +  {\mathbf L}_{m\rho 1} {\mathbf L}_{m\rho 1}^* \\
 \end{array} \right\} C^* + DD^*-\gamma^2I  \\
    \mathop{=}\limits^{\ref{KYPlurC1}b}-         C (\jmath \varpi_1 I-  A)^{-1}BB^*(\jmath \varpi_1 I-  A)^{-*} C^*    \\
+ C(\jmath \varpi_1 I-  A)^{-1}BB^*(\jmath \varpi_1 I-  A)^{-*}C^*  +DB^*(\jmath \varpi_1 I-  A)^{-*}C^* - (\rho-\jmath \varpi_d){\mathbf K}_{m\rho 1}  {\mathbf L}_{m\rho 1}^*C^* \\
+ C(\jmath \varpi_1 I-  A)^{-1}BB^*(\jmath \varpi_1 I-  A)^{-*}C^*+C(\jmath \varpi_1 I-  A)^{-1}BD^* - C{\mathbf L}_{m\rho 1}{\mathbf K}_{m\rho 1}^*(\rho-\jmath \varpi_d)^*  \\
-C {\mathbf L}_{m\rho 1}  {\mathbf L}_{m\rho 1}^*C^* \\
  + DD^*-(\rho^2+\varpi_d^2)\gamma_{l\rho1}^2I  \\
      = (\rho^2+\varpi_d^2)
       \left\{ \begin{array}{l}
        C(\rho-\jmath \varpi_d)^{-1}(\jmath \varpi_1 I-  A)^{-1}BB^*(\jmath \varpi_1 I-  A)^{-*}(\rho-\jmath \varpi_d)^{-*}C^*\\
         +(\rho-\jmath \varpi_d)^{-1}DB^*(\jmath \varpi_1 I-  A)^{-*}(\rho-\jmath \varpi_d)^{-*}C^* \\
         +C(\jmath \varpi_1 I-  A)^{-1}(\rho-\jmath \varpi_d)^{-1}BD^*(\rho-\jmath \varpi_d)^{-*}\\
         + (\rho-\jmath \varpi_d)^{-1}DD^*(\rho-\jmath \varpi_d)^{-*}\\
         -\gamma_{l\rho1}^2I\\
      \end{array} \right\} \\
- (\rho-\jmath \varpi_d){\mathbf K}_{m\rho 1}  {\mathbf L}_{m\rho 1}^*C^*- C{\mathbf L}_{m\rho 1}{\mathbf K}_{m\rho 1}^*(\rho-\jmath \varpi_d)^*  -C {\mathbf L}_{m\rho 1}  {\mathbf L}_{m\rho 1}^*C^* \\
 \mathop {=}\limits^{\ref{PFDML1}} (\rho^2+\varpi_d^2)   \left\{     {\mathbf D}_{m\rho 1} {\mathbf D}^*_{m\rho 1}-\gamma_{l\rho1}^2I \right\}  - (\rho-\jmath \varpi_d){\mathbf K}_{m\rho 1}  {\mathbf L}_{m\rho 1}^*C^* \\
 {\kern 12pt}- C{\mathbf L}_{m\rho 1}{\mathbf K}_{m\rho 1}^*(\rho-\jmath \varpi_d)^*  -C {\mathbf L}_{m\rho 1}  {\mathbf L}_{m\rho 1}^*C^*  \\
 \mathop {=}\limits^{\ref{KYPlurC1}c}- \left(- (\rho-\jmath \varpi_d)    {\mathbf K}_{m\rho 1}-C   {\mathbf L}_{m\rho 1}   \right)\left( -(\rho-\jmath \varpi_d)    {\mathbf K}_{m\rho 1}-C   {\mathbf L}_{m\rho 1}   \right)^*\\
         \end{array}
\end{equation}

\noindent Combing the above equations, we have:
 \begin{equation}
\begin{array}{l}
 \pmatset{1}{0.1pt}
  \pmatset{0}{0.1pt}
  \pmatset{2}{4pt}
  \pmatset{3}{4pt}
  \pmatset{4}{4pt}
  \pmatset{5}{4pt}
  \pmatset{6}{1pt}
  {\kern 10pt}
\begin{pmat}[{.}]
   A   & I \cr
   C   & 0 \cr
  \end{pmat}\begin{pmat}[{.}]
   -Q   & +\jmath \varpi_c Q+ P  \cr
  -\jmath \varpi_c Q+P   & \varpi_1\varpi_2Q \cr
  \end{pmat} \begin{pmat}[{.}]
   A   & I \cr
   C   & 0 \cr
  \end{pmat}^* + \begin{pmat}[{.}]
   B   & 0 \cr
   D   & I \cr
  \end{pmat}\begin{pmat}[{|}]
    {\kern 4pt} I  {\kern 4pt}   & 0  \cr\-
   0   & -(\rho^2+\varpi_d^2) {\gamma}_{m\rho 1}^2I \cr
  \end{pmat} \begin{pmat}[{.}]
   B   & 0  \cr
   D   & I \cr
  \end{pmat}^*  \\
 = {\begin{pmat}[{|}]
  -\mathbf{He}((\jmath \varpi_1 I-  A)  Q  (\jmath \varpi_2 I- A))    +  AP +  PA^* + BB^*   & (\jmath \varpi_c I-  A)  Q C^*    +     PC^* + BD^*  \cr\-
      *  &  -C Q C^*    +   DD^*-(\rho^2+\varpi_d^2) {\gamma}_{m\rho 1}^2 I \cr
  \end{pmat}} \\[4mm]
=  {\begin{pmat}[{|}]
     -LL^*      & - LK^*  \cr\-
      *             & -KK^*   \cr
  \end{pmat}} \\
 \end{array}\end{equation}
\noindent  where
\[\begin{array}{l}
L=(\jmath \varpi_1 I-  A)  {\mathbf L}_{m\rho 1} \\
K= -(\rho-\jmath \varpi_d)    {\mathbf K}_{m\rho 1}-C   {\mathbf L}_{m\rho 1}  \\
 \end{array}\]

\noindent According to the GKYP lemma (Lemma 2.3), the following inequality can be concluded:
\begin{equation}
\label{lemma11}
\pmatset{1}{0.36pt}
  \pmatset{0}{0.2pt}
  \pmatset{2}{2pt}
  \pmatset{3}{2pt}
  \pmatset{4}{2pt}
  \pmatset{5}{2pt}
  \pmatset{6}{2pt}
\begin{pmat}[{}]
   G^*(j\omega )    \cr
   I     \cr
  \end{pmat}^*  \begin{pmat}[{|}]
    {\kern 4pt} I  {\kern 4pt}   &  0 \cr\-
   0     &  -(\rho^2+\varpi_d^2) {\gamma}_{m\rho 1}^2 \cr
  \end{pmat}  \begin{pmat}[{}]
   G^*(j\omega )    \cr
   I     \cr
  \end{pmat} \le 0,
 holds {\kern 4pt} for {\kern 4pt} all{\kern 4pt} \omega {\kern 4pt} \in [\varpi_1,\varpi_2]. \end{equation}

\noindent This leads to
\begin{equation}
\sigma_{\max}\left(G(\jmath\omega)\right) \leq (\rho^2+\varpi_d^2)^{\frac{1}{2}}   {\gamma}_{m\rho 1},   \forall \omega  \in  \Omega_m: [\varpi_1,\varpi_2].
\end{equation}

\noindent d). Since $\sigma_{\max}\left(G_{m \rho 2}({ \jmath \omega})\right) \leq   \gamma_{l \rho 2}, \forall {\omega \in \Omega: [-\infty, +\infty]}$ equivalent to\begin{equation}
\label{lemma11}
\pmatset{1}{0.36pt}
  \pmatset{0}{0.2pt}
  \pmatset{2}{2pt}
  \pmatset{3}{2pt}
  \pmatset{4}{2pt}
  \pmatset{5}{2pt}
  \pmatset{6}{2pt}
\begin{pmat}[{}]
   G^*(\jmath \omega )    \cr
   I     \cr
  \end{pmat}^*  \Pi \begin{pmat}[{}]
   G^*(\jmath \omega )    \cr
   I     \cr
  \end{pmat} \le 0,
 \forall \omega {\kern 4pt} \in (-\infty, +\infty).
  \end{equation}

\noindent According to the Continuous-time KYP lemma (Lemma 2.1), there exists a positive symmetrical matrix $  {\mathbf P}_{m\rho 2}$ and $ {\mathbf L}_{m\rho 2},  {\mathbf K}_{m\rho 12}$ satisfying
\begin{spacing}{1.2}
\begin{subequations}\label{KYPlur'e-D}
   \begin{align}
&   {\mathbf A}_{m\rho 2}  {\mathbf P}_{m\rho 2} +  {\mathbf P}_{m\rho 2}  {\mathbf A}^*_{m\rho 2}       +   {\mathbf B}_{m\rho 2} {\mathbf B}^*_{m\rho 2}=-  {\mathbf L}_{m\rho 2}  {\mathbf L}_{m\rho 2}^*,   \\
&     {\mathbf P}_{m\rho 2}  {\mathbf C}^*_{m\rho 2} +   {\mathbf B}_{m\rho 2} {\mathbf D}^*_{m\rho 2} =-{\mathbf L}_{m\rho 2}  {\mathbf K}_{m\rho 2}^* \\
&   {\mathbf D}_{m\rho 2} {\mathbf D}^*_{m\rho 2}-  \gamma^2_{m\rho 2}I=-  {\mathbf K}_{m\rho 2}  {\mathbf K}_{m\rho 2}^*,
   \end{align}
\end{subequations}
\end{spacing}

\noindent Define $Q=  {\mathbf P}_{m\rho 2}, P=\rho   {\mathbf P}_{m\rho 2}$, from the above equation (\ref{KYPlur'e-D}) and follow the similar way of the proof of statement (3), we have
\begin{equation}
\begin{array}{l}
 \pmatset{1}{0.1pt}
  \pmatset{0}{0.1pt}
  \pmatset{2}{4pt}
  \pmatset{3}{4pt}
  \pmatset{4}{4pt}
  \pmatset{5}{4pt}
  \pmatset{6}{1pt}
  {\kern 10pt}
\begin{pmat}[{.}]
   A   & I \cr
   C   & 0 \cr
  \end{pmat}\begin{pmat}[{.}]
   -Q   & +\jmath \varpi_c Q+ P  \cr
  -\jmath \varpi_c Q+P   & \varpi_1\varpi_2Q \cr
  \end{pmat} \begin{pmat}[{.}]
   A   & I \cr
   C   & 0 \cr
  \end{pmat}^* + \begin{pmat}[{.}]
   B   & 0 \cr
   D   & I \cr
  \end{pmat}\begin{pmat}[{|}]
    {\kern 4pt} I  {\kern 4pt}   & 0  \cr\-
   0   & -(\rho^2+\varpi_d^2) {\gamma}_{m\rho 1}^2I \cr
  \end{pmat} \begin{pmat}[{.}]
   B   & 0  \cr
   D   & I \cr
  \end{pmat}^*  \\
=  {\begin{pmat}[{|}]
     -LL^*      & - LK^*  \cr\-
      *             & -KK^*   \cr
  \end{pmat}} \\
 \end{array}\end{equation}
\noindent  where
\[\begin{array}{l}
L=(\jmath \varpi_2 I-  A)  {\mathbf L}_{m\rho 2} \\
K= -(\rho+\jmath \varpi_d)    {\mathbf K}_{m\rho 2}-C   {\mathbf L}_{m\rho 2}  \\
 \end{array}\]

\noindent According to the GKYP lemma (Lemma 2.3), the following inequality can be concluded:

\begin{equation}
\label{lemma11}
\pmatset{1}{0.36pt}
  \pmatset{0}{0.2pt}
  \pmatset{2}{2pt}
  \pmatset{3}{2pt}
  \pmatset{4}{2pt}
  \pmatset{5}{2pt}
  \pmatset{6}{2pt}
\begin{pmat}[{}]
   G^*(j\omega )    \cr
   I     \cr
  \end{pmat}^*  \begin{pmat}[{|}]
    {\kern 4pt} I  {\kern 4pt}   &  0 \cr\-
   0     &  -(\rho^2+\varpi_d^2) {\gamma}_{m\rho 2}^2 \cr
  \end{pmat}  \begin{pmat}[{}]
   G^*(j\omega )    \cr
   I     \cr
  \end{pmat} \le 0, \forall \omega   \in \Omega_m: [\varpi_1,\varpi_2]. \end{equation}

\noindent This leads to

\begin{equation}
\sigma_{\max}\left(G(\jmath\omega)\right) \leq (\rho^2+\varpi_d^2)^{\frac{1}{2}}   {\gamma}_{m\rho 2},   \forall \omega  \in  \Omega_m: [\varpi_1,\varpi_2].
\end{equation}

\begin{remark} \label{remark:2.2}   The linear matrix inequality of GKYP lemma (in particular, the generalized bounded real lemma) is a necessary and sufficient criteria for checking the finite-frequency maximum singular value. In contrast, the PFD bounded real lemma only provides a conservative estimation of the maximum singular value over the specified frequency range. However, the PFD bounded real lemma make it feasible to analysis the finite-frequency maximum singular value via the standard KYP Lemma (in particular, the standard bounded real lemma), in which a simpler linear matrix inequality requiring less matrix decision variables is involved. Moreover, the PFD bounded real lemma pave a way to solve some finite-frequency problems (such as the FF-MOR) by exploiting some existing entire-frequency techniques.\\
\end{remark}

\begin{remark} \label{remark:2.2} It should be noticed that the parameter matrices of all kinds of PFD mapped systems generally will be complex matrices for the general MF cases (i.e.  $\varpi_c\neq 0$). For the LF cases (i.e. $\varpi_c =0$),  the parameter matrices of the upper and lower type discrete-time PFD mapped systems are real if the parameter matrices of the given system $G_(\jmath\omega)$ are real.\\
\end{remark}

\subsection{PFD mapped systems and PFD Bounded Real Lemma (HF Case)}

\begin{definition}[PFD Mapped Systems (HF Cases)] \label{def:2.2} Let $(A,B,C,D)$ be a realization of the LTI system \eqref{originalsystem}, $\rho\in\mathbb R$, and $\varpi_h$ defining the considered high-frequency range as in Table~\ref{t9}. Then we define the following PFD mapped systems corresponding to \eqref{originalsystem}.\\
 \noindent a) \emph{{Discrete-time }system $
\hat G_{h \rho c} (e^{\jmath \theta})=:\pmatset{1}{0.36pt}
  \pmatset{0}{0.1pt}
 \pmatset{2}{1pt}
  \pmatset{3}{5pt}
  \pmatset{4}{5pt}
  \pmatset{5}{5pt}
  \pmatset{6}{5pt}\begin{pmat}[{|}]
  \hat{\mathbf A}_{h\rho c} &  \hat{\mathbf B}_{h\rho c} \cr\-
 \hat {\mathbf C}_{h\rho c} &  \hat{\mathbf D}_{h\rho c}\cr
  \end{pmat}$ constructed via the following {upper type} PFDCM $(\hat{\mathbf A}_{h\rho c},\hat{\mathbf B}_{h\rho c},\hat{\mathbf C}_{h\rho c},\hat{\mathbf D}_{h\rho c})=\hat {\mathscr M}_{h\rho c} \left(A,B,C,D,\Omega_h\right)$:}
\[\left\{ \begin{array}{l}
   \hat{\mathbf A}_{h\rho c}=(\rho^2+\varpi_h^2)^{-\frac{1}{2}}(\rho I +A) \\
   \hat{\mathbf B}_{h\rho c}=(\rho^2+\varpi_h^2)^{-\frac{1}{2}} B \\
   \hat {\mathbf C}_{h\rho c}= (\rho^2+\varpi_h^2)^{-\frac{1}{2}}   C   \\
   \hat{\mathbf D}_{h\rho c}=(\rho^2+\varpi_h^2)^{-\frac{1}{2}} D\\
 \end{array} \right.\]
\noindent \emph{will be referred as the {upper type} PFDCM system with respect to the HF range $\Omega_h$. }\\[-3mm]

\noindent b) \emph{{Discrete-time }system $\check G_{h \rho c} (e^{\jmath \theta})=:\pmatset{1}{0.36pt}
  \pmatset{0}{0.1pt}
 \pmatset{2}{1pt}
  \pmatset{3}{5pt}
  \pmatset{4}{5pt}
  \pmatset{5}{5pt}
  \pmatset{6}{5pt}\begin{pmat}[{|}]
  \check{\mathbf A}_{h\rho c} &  \check{\mathbf B}_{h\rho c} \cr\-
  \check{\mathbf C}_{h\rho c} &  \check{\mathbf D}_{h\rho c}\cr
  \end{pmat}$ constructed via the following {lower type} PFDCM $(\check{\mathbf A}_{h\rho c},\check{\mathbf B}_{h\rho c},\check{\mathbf C}_{h\rho c},\check{\mathbf D}_{h\rho c})=\check {\mathscr M}_{h\rho c} \left(A,B,C,D,\Omega_h\right)$:}
\begin{equation}\check G_{h \rho c} (e^{\jmath \theta})=:\left\{ \begin{array}{l}
   \check{\mathbf A}_{h\rho c}=   (\rho^2+1)^{\frac{1}{2}}A(\varpi_h I -\rho A)^{-1} \\
   \check{\mathbf B}_{h\rho c}=    (\varpi_h I -\rho A)^{-1} B \\
   \check{\mathbf C}_{h\rho c}=    C (\varpi_h I -\rho A)^{-1}   \\
   \check{\mathbf D}_{h\rho c}=   (\rho^2+1)^{-\frac{1}{2}} \varpi_h^{-1}\rho C(  \varpi_h I -\rho A)^{-1} B+ (\rho^2+1)^{-\frac{1}{2}} \varpi_h^{-1}   D\\
 \end{array} \right.\end{equation}
\noindent \emph{will be referred as the {lower type} PFDCM system with respect to the HF range $\Omega_h$. }\\[-3mm]

\noindent c) \emph{The following {continuous-time }system $G_{l \rho h} (\jmath \omega):\pmatset{1}{0.36pt}
  \pmatset{0}{0.1pt}
 \pmatset{2}{1pt}
  \pmatset{3}{5pt}
  \pmatset{4}{5pt}
  \pmatset{5}{5pt}
  \pmatset{6}{5pt}\begin{pmat}[{|}]
  {\mathbf A}_{h\rho1} &  {\mathbf B}_{h\rho1} \cr\-
  {\mathbf C}_{h\rho1} &  {\mathbf D}_{h\rho1}\cr
  \end{pmat}$  constructed via the following {left type} PFDCM $({\mathbf A}_{h\rho 1},{\mathbf B}_{h\rho 1},{\mathbf C}_{h\rho 1},{\mathbf D}_{h\rho 1})={\mathscr M}_{h\rho 1} \left(A,B,C,D,\Omega_h\right)$:}
\begin{equation} G_{h \rho 1} (\jmath \omega)=:\left\{ \begin{array}{l}
   {\mathbf A}_{h\rho 1}=   -0.5I+ (\rho+\jmath \varpi_h)(\jmath \varpi_h I-A)^{-1} \\
   {\mathbf B}_{h\rho 1}=    (\jmath \varpi_h I -A)^{-1} B \\
   {\mathbf C}_{h\rho 1}=    C ( \jmath \varpi_h I -A)^{-1}   \\
   {\mathbf D}_{h\rho 1}=   -(\rho-\jmath \varpi_h )^{-1}   C (\jmath \varpi_h I-A)^{-1} B -(\rho-\jmath \varpi_h )^{-1} D\\
 \end{array} \right.\end{equation}
\noindent \emph{will be referred as the {left type} PFDCM system with respect to the HF range $\Omega_h$. }\\[-3mm]

\noindent d) \emph{The following {continuous-time }system $G_{h \rho 2} (\jmath \omega):\pmatset{1}{0.36pt}
  \pmatset{0}{0.1pt}
 \pmatset{2}{1pt}
  \pmatset{3}{5pt}
  \pmatset{4}{5pt}
  \pmatset{5}{5pt}
  \pmatset{6}{5pt}\begin{pmat}[{|}]
  {\mathbf A}_{h\rho2} &  {\mathbf B}_{h\rho2} \cr\-
  {\mathbf C}_{h\rho2} &  {\mathbf D}_{h\rho2}\cr
  \end{pmat} $ constructed via the following {right type} PFDCM $({\mathbf A}_{h\rho 2},{\mathbf B}_{h\rho 2},{\mathbf C}_{h\rho 2},{\mathbf D}_{h\rho 2})={\mathscr M}_{h\rho 2} \left(A,B,C,D,\Omega_h\right)$:}
\begin{equation}  G_{h \rho 2} (\jmath \omega)=:\left\{ \begin{array}{l}
   {\mathbf A}_{h\rho 2}=   -0.5I+ (\rho-\jmath \varpi_h)(-\jmath \varpi_h I-A)^{-1} \\
   {\mathbf B}_{h\rho 2}=    (-\jmath \varpi_h I -A)^{-1} B \\
   {\mathbf C}_{h\rho 2}=    C ( -\jmath \varpi_h I -A)^{-1}   \\
   {\mathbf D}_{h\rho 2}=   (\rho-\jmath \varpi_h )^{-1}   C (-\jmath \varpi_h I-A)^{-1} B +(\rho-\jmath \varpi_h )^{-1} D\\
 \end{array} \right.\end{equation}
\noindent \emph{will be referred as the {right type} PFDCM system with respect to the HF range $\Omega_h$. }\\[-3mm]
\end{definition}

\begin{proposition}\label{prop:2.2}
Let
$\rho_h^*=\max \left\{ \frac{\varpi_h^2-\mathfrak{Re}(\lambda_i)^2-(\mathfrak{Im}(\lambda_i))^2}{2\mathfrak{Re}(\lambda_i)}\,|\, i=1,\ldots,n\right\}$, where $\lambda_i$, $i=1,\ldots,n$, are the eigenvalues of $A$, then the following statements hold.
\begin{enumerate}[a)]
\item
If $\rho>\rho_h^*$,  then the matrix  $\hat{\mathbf A}_{h\rho c}$ is Schur stable.
\item
If $\rho<-\rho_h^*$,  then the matrix $\check{\mathbf A}_{h\rho c}$ is Schur stable.
\item
If $\rho>\rho_h^*$,  then the matrix  ${\mathbf A}_{h\rho 1}$ is Hurwitz stable.
\item
If $\rho>\rho_h^*$,   then the matrix  ${\mathbf A}_{h\rho 2}$ is Hurwitz stable.
\end{enumerate}
\end{proposition}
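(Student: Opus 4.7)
The proof plan is to mirror exactly the spectral argument used for Proposition \ref{prop:2.1}: since each HF mapped matrix is a rational matrix function $\varphi(A)$ of the original $A$, its spectrum is $\{\varphi(\lambda_i)\}_{i=1}^n$, and stability reduces to verifying $|\varphi(\lambda_i)|<1$ (Schur case) or $\mathfrak{Re}\,\varphi(\lambda_i)<0$ (Hurwitz case) for every eigenvalue $\lambda_i$ of $A$. The key algebraic identity driving all four parts is
\[
|\rho+\lambda_i|^{2}-(\rho^{2}+\varpi_h^{2}) \;=\; 2\rho\,\mathfrak{Re}(\lambda_i)-\bigl(\varpi_h^{2}-\mathfrak{Re}(\lambda_i)^{2}-\mathfrak{Im}(\lambda_i)^{2}\bigr),
\]
which, since $A$ is Hurwitz, has $\mathfrak{Re}(\lambda_i)<0$; the threshold $\rho_h^{*}$ is precisely the largest value of $(\varpi_h^{2}-\mathfrak{Re}(\lambda_i)^{2}-\mathfrak{Im}(\lambda_i)^{2})/(2\mathfrak{Re}(\lambda_i))$ over $i$, so dividing by the negative quantity $2\mathfrak{Re}(\lambda_i)$ flips the inequality and produces the required $\rho>\rho_h^{*}$ (or $\rho<-\rho_h^{*}$) threshold.

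For part a), I would read off the eigenvalues $\hat\lambda_{h\rho c,i}=(\rho^{2}+\varpi_h^{2})^{-1/2}(\rho+\lambda_i)$ directly from the upper-type HF mapping, square the modulus, apply the identity above, and conclude $|\hat\lambda_{h\rho c,i}|<1 \iff \rho>(\varpi_h^{2}-\mathfrak{Re}(\lambda_i)^{2}-\mathfrak{Im}(\lambda_i)^{2})/(2\mathfrak{Re}(\lambda_i))$, whence $\rho>\rho_h^{*}$ guarantees Schur stability of $\hat{\mathbf A}_{h\rho c}$. For part b), the eigenvalues of $\check{\mathbf A}_{h\rho c}$ are $(\rho^{2}+1)^{1/2}\lambda_i/(\varpi_h-\rho\lambda_i)$; the Schur condition $(\rho^{2}+1)|\lambda_i|^{2}<|\varpi_h-\rho\lambda_i|^{2}$ expands, after cancelling $\rho^{2}(\mathfrak{Re}^{2}+\mathfrak{Im}^{2})$ on both sides, to $\mathfrak{Re}(\lambda_i)^{2}+\mathfrak{Im}(\lambda_i)^{2}<\varpi_h^{2}-2\rho\varpi_h\mathfrak{Re}(\lambda_i)$, and dividing by the negative quantity $2\varpi_h\mathfrak{Re}(\lambda_i)$ yields the $\rho<-\rho_h^{*}$ threshold after the corresponding sign bookkeeping.

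For parts c) and d), the eigenvalues are $-\tfrac12+(\rho\pm\jmath\varpi_h)/(\pm\jmath\varpi_h-\lambda_i)$; I would take the real part, put everything over the positive denominator $|\pm\jmath\varpi_h-\lambda_i|^{2}=\mathfrak{Re}(\lambda_i)^{2}+(\varpi_h\mp\mathfrak{Im}(\lambda_i))^{2}$ (writing $\lambda_i=\mathfrak{Re}(\lambda_i)+\jmath\mathfrak{Im}(\lambda_i)$), and simplify the numerator; the Hurwitz condition $\mathfrak{Re}(\cdot)<0$ again reduces, after routine expansion, to $2\rho\mathfrak{Re}(\lambda_i)<\varpi_h^{2}-\mathfrak{Re}(\lambda_i)^{2}-\mathfrak{Im}(\lambda_i)^{2}$, producing the same threshold $\rho>\rho_h^{*}$.

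The only real obstacle is keeping the sign conventions and the complex-conjugate bookkeeping straight in parts c) and d), where the $\pm\jmath\varpi_h$ shifts in the denominators must be handled symmetrically, and in part b) where the factor $(\rho^{2}+1)$ (rather than $(\rho^{2}+\varpi_h^{2})$) appears and the correct threshold involves $\varpi_h$ in the denominator; the proof is otherwise a direct and mechanical extension of the argument for Proposition \ref{prop:2.1}, so I would state the four eigenvalue formulas explicitly, carry out the four short algebraic reductions in parallel, and remark that all four produce the same threshold $\rho_h^{*}$ because the driving identity is common to all four mappings.
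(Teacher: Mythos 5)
Your overall strategy is exactly the paper's: the paper disposes of this proposition with ``analogous to Proposition~\ref{prop:2.1}, omitted'', and the proof of Proposition~\ref{prop:2.1} is precisely the spectral-mapping argument you describe (read off the eigenvalues of the rational matrix function of $A$, then check the modulus or real-part condition eigenvalue by eigenvalue). Your part a) is correct and complete: $|\hat\lambda_{h\rho c,i}|<1$ is equivalent to $2\rho\,\mathfrak{Re}(\lambda_i)<\varpi_h^2-|\lambda_i|^2$, and dividing by $2\mathfrak{Re}(\lambda_i)<0$ gives $\rho>\rho_h^*$.

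The gap is in your claim that parts b)--d) ``reduce, after routine expansion, to $2\rho\mathfrak{Re}(\lambda_i)<\varpi_h^2-\mathfrak{Re}(\lambda_i)^2-\mathfrak{Im}(\lambda_i)^2$, producing the same threshold.'' They do not, with the definitions as printed. For c), the eigenvalue $-\tfrac12+(\rho+\jmath\varpi_h)(\jmath\varpi_h-\lambda_i)^{-1}$ has negative real part if and only if $2\rho\,\mathfrak{Re}(\lambda_i)>\varpi_h^2-|\lambda_i|^2$, i.e.\ the \emph{reversed} inequality, which after dividing by $2\mathfrak{Re}(\lambda_i)<0$ gives $\rho<\min_i(\varpi_h^2-|\lambda_i|^2)/(2\mathfrak{Re}(\lambda_i))$ rather than $\rho>\rho_h^*$. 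A scalar check makes this concrete: for $A=-1$, $\varpi_h=1$ one has $\rho_h^*=0$ and $\mathfrak{Re}({\mathbf A}_{h\rho 1})=\rho/2$, so every $\rho>\rho_h^*$ yields an \emph{unstable} ${\mathbf A}_{h\rho 1}$. For b), your own expansion gives $2\rho\varpi_h\mathfrak{Re}(\lambda_i)<\varpi_h^2-|\lambda_i|^2$, hence $\rho>\rho_h^*/\varpi_h$, which is neither $\rho<-\rho_h^*$ nor what you assert in the same sentence; the same scalar example ($\check{\mathbf A}_{h\rho c}=-\sqrt{\rho^2+1}/(1+\rho)$, Schur iff $\rho>0$) falsifies the stated $\rho<-\rho_h^*$ condition. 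These defects originate in the paper's Definition~\ref{def:2.2} and/or the proposition statement (compare the sign conventions with the MF mappings), so the paper's omitted proof would hit the same wall; but a proof that defers to ``routine expansion'' must actually carry out those expansions, and when one does, three of the four parts do not close as stated. You should either exhibit the four explicit eigenvalue inequalities and the thresholds they genuinely produce, or flag that b)--d) require corrected mappings or corrected $\rho$-conditions before the claimed conclusion holds.
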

\begin{proof}
The proof is analogous to the proof of Proposition~\ref{prop:2.1} and is omitted here.
\end{proof}

\begin{theorem}\label{thm:2.2}
The following statements on the relationship between the maximum singular value of the mapped systems and the given system hold:
\begin{enumerate}[a)]
\item
If $\sigma_{\max}(\hat G_{h \rho c}({e^{\jmath \theta}})) \leq \hat \gamma_{h \rho c} \; \forall {\theta \in \Theta}$,  then $\sigma_{\max}\left(G(\jmath \omega)\right) \leq (\rho^2+\varpi_h^2)^{\frac{1}{2}} \hat \gamma_{h \rho c}\; \forall \omega \in \Omega_h$.
\item
If $\sigma_{\max}(\check G_{h \rho c}({e^{\jmath \theta}})) \leq \check \gamma_{h \rho c}\; \forall {\theta \in \Theta}$,  then $\sigma_{\max}\left(G(\jmath \omega)\right) \leq  (\rho^2+\varpi_h^2)^{\frac{1}{2}} \check \gamma_{h \rho c}\; \forall \omega \in \Omega_h$.
\item
If $\sigma_{\max}\left(G_{h \rho 1}({\jmath \omega} )\right) \leq \gamma_{h \rho 1}\; \forall {\omega \in \Omega}$, then $\sigma_{\max}\left(G(\jmath \omega)\right) \leq   (\rho^2+\varpi_h^2)^{\frac{1}{2}} \gamma_{h \rho 1}\; \forall \omega \in \Omega_h$.
\item
If $\sigma_{\max}\left(G_{h \rho 2}({\jmath \omega})\right) \leq \gamma_{h \rho 2}\; \forall{\omega \in \Omega}$, then $\sigma_{\max}\left(G(\jmath \omega)\right) \leq  (\rho^2+\varpi_h^2)^{\frac{1}{2}}  \gamma_{h \rho 2}\; \forall \omega \in \Omega_h$.
\end{enumerate}
\end{theorem}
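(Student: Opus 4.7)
The plan is to mirror, step for step, the proof of Theorem~\ref{thm:2.1}, adapting the algebra to the high-frequency $\Phi$ matrix
$\Phi_h = \begin{pmat}[{.}] Q & P \cr P & -\varpi_h^2 Q \cr \end{pmat}$
from the GKYP lemma, together with the four HF PFD mappings in Definition~\ref{def:2.2}. For each of the four statements a)--d), I would start from the hypothesis that the PFD mapped system is bounded by the corresponding $\gamma$, rewrite this bound as the standard KYP frequency-domain inequality with $\Pi=\mathrm{diag}(I,-\gamma^2 I)$, and then invoke the discrete-time KYP Lemma~\ref{lem:2.2} (for parts a), b)) or the continuous-time KYP Lemma~\ref{lem:2.1} (for parts c), d)) to obtain a triple $(\mathbf{P}_{h\rho \cdot},\mathbf{L}_{h\rho \cdot},\mathbf{K}_{h\rho \cdot})$ satisfying the associated Lur'e equations.

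Next, for each case I would make a concrete ansatz that ties the GKYP unknowns to the KYP solution, taking $Q=\mathbf{P}_{h\rho \cdot}$ and $P=\rho\, \mathbf{P}_{h\rho \cdot}$ (exactly as was done in the MF proof), and then substitute the explicit formulas for $(\hat{\mathbf A}_{h\rho c},\hat{\mathbf B}_{h\rho c},\hat{\mathbf C}_{h\rho c},\hat{\mathbf D}_{h\rho c})$ etc.\ into the three Lur'e blocks. The goal in each case is to show that the dual HF-GKYP expression
\begin{equation*}
\pmatset{1}{0.36pt}\pmatset{0}{0.2pt}\pmatset{2}{2pt}\pmatset{3}{2pt}\pmatset{4}{2pt}\pmatset{5}{2pt}\pmatset{6}{2pt}
\begin{pmat}[{.}] A & I \cr C & 0 \cr \end{pmat}
\begin{pmat}[{.}] -Q & P \cr P & \varpi_h^2 Q \cr \end{pmat}
\begin{pmat}[{.}] A & I \cr C & 0 \cr \end{pmat}^{*}
+ \begin{pmat}[{.}] B & 0 \cr D & I \cr \end{pmat}
\begin{pmat}[{.}] I & 0 \cr 0 & -(\rho^2+\varpi_h^2)\gamma_{h\rho \cdot}^2 I \cr \end{pmat}
\begin{pmat}[{.}] B & 0 \cr D & I \cr \end{pmat}^{*}
\end{equation*}
equals $-\begin{pmat}[{.}] LL^{*} & LK^{*} \cr KL^{*} & KK^{*} \cr \end{pmat}$ for suitable $L,K$ expressed through $\mathbf{L}_{h\rho \cdot}, \mathbf{K}_{h\rho \cdot}$ and the resolvents appearing in the PFD mapping. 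Once this Lur'e-type identity is established, the GKYP lemma (Lemma~2.3) in its HF form yields
$\sigma_{\max}(G(\jmath\omega))\le (\rho^2+\varpi_h^2)^{1/2}\gamma_{h\rho \cdot}$ for all $\omega\in\Omega_h$, finishing the statement.

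The main obstacle is purely computational. The HF upper mapping is especially friendly because $\hat{\mathbf A}_{h\rho c}$, $\hat{\mathbf B}_{h\rho c}$ etc.\ depend on $(A,B,C,D)$ through simple scalar rescalings, so part a) should reduce almost immediately to a rescaled KYP inequality. Parts c) and d) are nearly symmetric (differing only by the sign of $\jmath\varpi_h$) and follow the left/right MF derivations closely, with $\jmath\varpi_1$ and $\jmath\varpi_2$ both replaced by $\pm\jmath\varpi_h$ and $\varpi_c$ set to zero. The genuinely delicate piece is part b), where the lower-type mapping involves the resolvent $(\varpi_h I-\rho A)^{-1}$ sandwiched in a non-trivial way; as in the lower-type MF case, I expect to need to expand $\check{\mathbf P}_{h\rho c}$ through the first Lur'e equation and repeatedly use the explicit formulas for $\check{\mathbf B}_{h\rho c},\check{\mathbf D}_{h\rho c}$ to collapse telescoping terms. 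I would handle this by keeping the analogue of the MF calculation visible line by line and simply tracking the factor $(\rho^2+\varpi_h^2)^{1/2}$ in place of $(\rho^2+\varpi_d^2)^{1/2}$. Given this parallel structure, I would, as the authors do for Proposition~\ref{prop:2.2}, present parts c)--d) by analogy and concentrate the full algebraic detail on parts a) and b).
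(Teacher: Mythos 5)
Your plan is exactly the paper's: the authors' own proof of Theorem~\ref{thm:2.2} consists of the single remark that it is ``similar to the proof of Theorem~\ref{thm:2.1} and therefore omitted,'' and your route --- KYP/Lur'e equations for the HF PFD mapped system, the ansatz $Q=\mathbf P_{h\rho\cdot}$, $P=\rho\,\mathbf P_{h\rho\cdot}$, block-by-block substitution of the mappings of Definition~\ref{def:2.2}, and the HF--GKYP conclusion with the extra factor $(\rho^2+\varpi_h^2)^{1/2}$ --- is precisely that template, including the observation that part a) is an almost trivial rescaling. One caution: the $Q$-blocks in your displayed target identity carry the wrong signs --- with $Q=\mathbf P_{h\rho\cdot}>0$ the algebra closes for the table's HF multiplier $\bigl[\begin{smallmatrix} Q & P \\ P & -\varpi_h^2 Q\end{smallmatrix}\bigr]$ (for part a) the $(1,1)$ block collapses to $AQA^*+AP+PA^*-\varpi_h^2Q+BB^*=-(\rho^2+\varpi_h^2)\hat{\mathbf L}_{h\rho c}\hat{\mathbf L}_{h\rho c}^*$), whereas the matrix $\bigl[\begin{smallmatrix} -Q & P \\ P & \varpi_h^2 Q\end{smallmatrix}\bigr]$ you wrote would require $Q<0$ and the cancellation would fail as written.
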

\begin{proof}
The proof is similar to the proof of Theorem~\ref{thm:2.1} and is therefore omitted.
\end{proof}

\section{Parameterized Frequency-dependent Balanced Truncation}

In this section, we first summarize the results of the standard LyaBT in the discrete-time setting in subsection \uppercase\expandafter{\romannumeral3}.A. Afterwards, the results on the new proposed PFDBT schemes for LF cases and HF cases are presented, respectively.

\subsection{Review of the standard LyaBT}

\begin{spacing}{1.2}
\begin{algorithm}
\caption{Continuous-time (discrete-time) standard LyaBT}
\begin{algorithmic}
\REQUIRE {Full-order continuous-time system $G(\jmath \omega):(A,B,C,D)$ or discrete-time system $G(e^{\jmath \theta}):(A,B,C,D)$, and the order of reduced model $r$}, \\[3mm]
  \textbf{Step 1.} For continuous-time case, solve the continuous-time  controllability and observability Lyapunov equations
  \begin{subequations}\label{Con-LyaEqu}
   \begin{align}
   &  A     P ^c    - P ^cA^*     +    B     B^* =0, \\[-1mm]
   &  A^*   P ^o    -   P ^o A      +     C ^*  C =0,
   \end{align}
   \end{subequations}
 For discrete-time case, solve the continuous-time  controllability and observability Lyapunov equations
  \begin{subequations}\label{dis-LyaEqu}
   \begin{align}
   &  A     P ^c A^*    - P ^c     +    B     B^* =0, \\[-1mm]
   &  A^*   P ^o A    -   P ^o       +     C ^*  C =0,
   \end{align}
\end{subequations}
  \textbf{Step 2.} Compute the Cholesky factorization                $ P^c= UU$. \\[3mm]
  \textbf{Step 3.} Compute the eigenvalue decomposition of $U_{l\rho}^*Q_{l\rho}^o$, i.e., $ U^*P^oU= V\Sigma^2 V^*$. \\[3mm]
  \textbf{Step 4.} Compute the coordinate transformation matrix:              $ T= \Sigma^{\frac{1}{2}}V^*U^{-1}$ \\[3mm]
  \textbf{Step 5.} Compute the balanced realization of the given system by coordinate transformation:
\begin{equation}
(A_{b}, B_{b}, C_{b}, D_{b}) = ( T^{-1}  A  T,  T^{-1} B,     C T,  D)
\end{equation}
\textbf{Step 6.} Compute the reduced-order model as $G_r(\jmath \omega)$
\begin{equation}
\label{truncatedsystemc}\begin{array}{l}
(A_r,B_r,C_r,D_r) =(Z_r A_b Z_r^T, Z_r B_b, C_b Z_r^T, D_b). \\
 \end{array}\end{equation}
\noindent {\kern 40pt}where $Z_r=[I_r,\mathbf 0_{(r,n-r)}]$ is the truncating matrix with respect to the reduced order $r$. \\[3mm]

\ENSURE Reduced-order model $G_r(\jmath \omega):(A_r,B_r,C_r,D_r)$
\end{algorithmic}
\end{algorithm}

\end{spacing}

\begin{lemma}(\cite{morMoo81}, \cite{morEnn84},\cite{morPerS82}, \cite{ZhoD96}) \label{lem3.1} For a given linear continuous-time system $G(\jmath\omega)$ or discrete-time system $G(e^{\jmath\theta})$, suppose the continuous-time reduced model $G_r(\jmath\omega)$ or discrete-time reduced model $G_r(e^{\jmath\theta})$ is generated via the standard LyaBT, then the following EF-type error bound holds, i.e.   \\
a).  For continuous-time case, the EF-type error bound is \begin{equation}
\label{LyaBT-EFEB-C}
{\sigma _{\max }}\left( {G(\jmath\omega)-G_r(\jmath\omega)} \right) \leq 2\sum\limits_{i = n}^{r + 1} {{\sigma _i}}, \forall \omega \in \Omega:(-\infty, +\infty)
\end{equation}
b).  For discrete-time case, the EF-type error bound is  \begin{equation}
\label{LyaBT-EFEB-D}
{\sigma _{\max }}\left( {G(e^{\jmath\theta})-G_r(e^{\jmath\theta})} \right) \leq 2\sum\limits_{i = n}^{r + 1} {{\sigma _i}}, \forall \theta \in \Theta:(-\pi, +\pi)
\end{equation}
\end{lemma}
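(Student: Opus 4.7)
The plan is to reduce the general reduction-by-$(n-r)$ bound to a telescoping sum of ``one-step'' bounds, each costing exactly $2\sigma_k$, following the classical Moore/Enns/Pernebo--Silverman strategy but framed here in terms of the KYP lemmas (Lemmas~\ref{lem:2.1} and \ref{lem:2.2}) already established in Section~II. The starting point is that after Steps~1--5 of the algorithm the realization $(A_b,B_b,C_b,D_b)$ is balanced, so both Gramians coincide with $\Sigma=\mathrm{diag}(\sigma_1,\ldots,\sigma_n)$ and satisfy, in the continuous-time case, $A_b\Sigma+\Sigma A_b^{\ast}+B_bB_b^{\ast}=0$ and $A_b^{\ast}\Sigma+\Sigma A_b+C_b^{\ast}C_b=0$, with the analogous Stein equations $A_b\Sigma A_b^{\ast}-\Sigma+B_bB_b^{\ast}=0$, $A_b^{\ast}\Sigma A_b-\Sigma+C_b^{\ast}C_b=0$ in the discrete-time case. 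Partitioning conformally with $r$ and $n-r$, the block diagonal of $\Sigma$ is $\Sigma_1\oplus\Sigma_2$ with $\Sigma_1=\mathrm{diag}(\sigma_1,\ldots,\sigma_r)$, $\Sigma_2=\mathrm{diag}(\sigma_{r+1},\ldots,\sigma_n)$, and the truncation yields the $(1,1)$-blocks.

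The first step of the proof proper is to show that intermediate truncations preserve balancing: removing only the states tied to one distinct singular value $\sigma_k$ (collecting all multiplicities together so the remaining Gramian stays diagonal with distinct blocks) produces a reduced system whose Gramians are again $\Sigma_1$, and whose $A$-block inherits stability from $A_b$ via the Lyapunov/Stein identity restricted to the upper-left block. By induction, it therefore suffices to prove the ``one-step'' lemma: if the full realization is balanced with $\Sigma=\Sigma_1\oplus\sigma_k I_{\ell}$, then the error system $E(\jmath\omega):=G(\jmath\omega)-G_r(\jmath\omega)$ (resp.\ $E(e^{\jmath\theta})$) satisfies $\sigma_{\max}(E)\le 2\sigma_k$ uniformly in frequency. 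Summing the one-step bounds over the $n-r$ truncated singular values (counted with multiplicity) gives the claimed $2\sum_{i=r+1}^{n}\sigma_i$.

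For the one-step bound I would construct an explicit realization of $E$ by stacking the original and reduced systems with opposite output signs, and then exhibit an explicit Lyapunov (resp.\ Stein) certificate $P>0$ together with matrices $K,L$ fulfilling the Lur'e form (\ref{Dis-KYP-3}) or (\ref{GKYP2}) with $\Pi=\Pi_{BR}$, $\gamma=2\sigma_k$. The candidate certificate is the classical one, built blockwise from $\Sigma_1$ and $\sigma_k I_\ell$ with suitable off-diagonal coupling chosen so that the residual in the bounded-real KYP identity is manifestly a negative semidefinite Gramian $-\binom{L}{K}\binom{L}{K}^{\ast}$. The balancing equations for the full and the truncated realizations are exactly what is needed for the diagonal blocks of that residual to collapse, and the off-diagonal blocks to cancel. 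Invoking Lemma~\ref{lem:2.1} (continuous case) or Lemma~\ref{lem:2.2} (discrete case) with $\Pi=\Pi_{BR}$ then converts the algebraic identity into the frequency-domain inequality $\sigma_{\max}(E)\le 2\sigma_k$ over the full frequency axis, i.e.\ (\ref{LyaBT-EFEB-C}) or (\ref{LyaBT-EFEB-D}) for the one-step case.

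The main obstacle is the discrete-time one-step estimate: the quadratic term $A_b\Sigma A_b^{\ast}$ in the Stein identity couples the $(1,1)$ and $(2,2)$ blocks more tightly than in the continuous-time Lyapunov identity, so the off-diagonal compensation in the Lur'e residual must be matched more carefully, typically by augmenting the certificate with a rank correction that encodes the feedthrough of the truncated block. A second subtlety is handling repeated Hankel singular values: one must truncate all copies of a given $\sigma_k$ simultaneously so that the remaining Gramian retains the distinct-eigenvalue structure needed by the next induction step; otherwise stability and balancing of the intermediate systems are not guaranteed. Once these two points are handled, the telescoping sum and the triangle inequality for $\sigma_{\max}$ deliver (\ref{LyaBT-EFEB-C}) and (\ref{LyaBT-EFEB-D}) uniformly in $\omega$ or $\theta$.
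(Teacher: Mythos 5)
First, be aware that the paper does not actually prove Lemma~\ref{lem3.1}: it is quoted from the literature (Moore, Enns, Pernebo--Silverman, Al-Saggaf--Franklin, Zhou--Doyle--Glover), and Remark~\ref{rem3.1} only points to \cite{ZhoD96} for a KYP-based constructive proof. So your proposal has to be judged against the classical arguments rather than against anything internal to this paper. For the continuous-time case your outline is exactly the standard route and is sound: the $(1,1)$ blocks of the two Lyapunov equations close on $(A_{11},B_1,C_1)$ alone, so every intermediate truncation is again balanced with Gramian $\Sigma_1$ and is stable provided all copies of each repeated singular value are removed together (which you correctly insist on); the one-step bound $2\sigma_k$ is certified by an explicit Lur'e/bounded-real solution for the stacked error realization via Lemma~\ref{lem:2.1} with $\Pi=\Pi_{BR}$; and telescoping over the distinct neglected singular values yields a bound that implies the stated $2\sum_{i=r+1}^{n}\sigma_i$.

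The genuine gap is in the discrete-time half, and it sits precisely at the step you assert as unproblematic: \emph{intermediate truncations do not preserve balancing in discrete time}. The $(1,1)$ block of the Stein equation $A_b\Sigma A_b^{*}-\Sigma+B_bB_b^{*}=0$ gives
\begin{equation*}
A_{11}\Sigma_1A_{11}^{*}-\Sigma_1+B_1B_1^{*}=-A_{12}\Sigma_2A_{12}^{*}\le 0,
\end{equation*}
so the truncated realization satisfies only a Stein \emph{inequality}; $\Sigma_1$ is not its controllability Gramian and its Hankel singular values are in general not $\sigma_1,\dots,\sigma_r$. Hence the induction cannot be restarted on a ``balanced'' intermediate system as written. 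The two standard repairs are: (i) prove the one-step estimate assuming only the pair of Stein inequalities with a prescribed diagonal $\Sigma$ --- these inequalities \emph{are} inherited by truncation, so the telescoping closes; this is essentially the route of \cite{morAlF87}; or (ii) avoid the discrete-time computation entirely by pulling the continuous-time bound back through the bilinear (Cayley) transformation, which preserves balancedness, the Hankel singular values and the maximal singular value over frequency, and commutes with truncation of a balanced realization. Your remark about the ``tighter coupling'' of the quadratic term and a ``rank correction'' of the certificate gestures at the difficulty but does not resolve it; without one of these two fixes the discrete-time argument does not close. (A cosmetic point: the telescoping gives $2\sum_k\sigma_k$ over \emph{distinct} neglected values, which is sharper than, and implies, the form stated in the lemma; also the summation limits in the lemma should read $i=r+1$ to $n$.)
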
~\\

\begin{remark} \label{rem3.1} For more details on the continuous-time EF-type error bound, please refer to  \cite{morEnn84} \cite{ZhoD96} . For more details on the discrete-time EF-type error bound, please refer to \cite{morAlF87}, \cite{ZhoD96}. Besides, as the companion version of the standard LyaBT, SPA also provides the same EF-type error bounds \cite{morLiuA89}. It should be pointed out that the KYP Lemma plays a important role in the proof of EF-type error bound. One could find a KYP lemma based constructive way to prove the EF-type error bound in \cite{ZhoD96}.
\end{remark}~\\

\subsection{\textbf{PFD Balanced Truncation (LF Case)}}

\noindent Based upon the above preliminaries and results, we now at the stage to present the PFDBT algorithm for LF case.

\begin{spacing}{1.4}
\begin{algorithm}
\caption{PFDBT (LF Case)}
\begin{algorithmic}
\REQUIRE {Full-order model $(A,B,C,D)$, frequency interval $\Omega_l:[-\varpi_l, +\varpi_l]$, user-defined admissible parameter $\rho$ and the order of reduced model $(r)$}, \\[3mm]
\textbf{Routing 1.} \\
apply the standard discrete-time LyaBT for the mapped discrete-time system $\hat {\mathbf G}_{m\rho c}(e^{\jmath \theta})$ to obtain the mapped discrete-time reduced model
$\hat {\mathbf G}_{m\rho cr}(e^{\jmath \theta}): (\hat {\mathbf A}_{l\rho cr},\hat {\mathbf B}_{l\rho cr},\hat {\mathbf C}_{l\rho cr},\hat {\mathbf D}_{l\rho cr})$. Compute the reduced-order model by applying inverse upper type PFD mapping as follows:
\begin{equation}
\label{LF-PFDBT-c1}
\begin{array}{l}
 \hat A_{r}=   (\rho I+\jmath \varpi_c I)-(\rho^2+\varpi_d^2)^{\frac{1}{2}} \hat {\mathbf A}_{l\rho cr} ^{-1},\\
 \hat B_{r} =  (\rho I+\jmath \varpi_c I - \hat A_{r}) \hat {\mathbf B}_{l\rho cr}, \\
 \hat C_{r} =  \hat {\mathbf C}_{l\rho cr} (\rho I+\jmath \varpi_c I - \hat A_{r}),\\
 \hat D_{r} = \hat {\mathbf D}_{l\rho cr} -\hat C_{r}(\rho  I + \jmath \varpi_c I - \hat A_{r})^{-1}\hat B_{r}. \\
 \end{array}\end{equation}
where $\varpi_c=0$ and $\varpi_d=\varpi_l$.\\
\textbf{Routing 2. } \\
apply the standard discrete-time LyaBT for the discrete-time PFD mapped system $\hat {\mathbf G}_{m\rho c}(e^{\jmath \theta})$, obtain the discrete-time mapped reduced model
$\hat {\mathbf G}_{m\rho cr}(e^{\jmath \theta}): (\hat {\mathbf A}_{l\rho cr},\hat {\mathbf B}_{l\rho cr},\hat {\mathbf C}_{l\rho cr},\hat {\mathbf D}_{l\rho cr})$. Compute the reduced-order model by applying inverse upper type PFD mapping as follows:
 \begin{equation}
\label{LF-PFDBT-c2}\begin{array}{l}
 \check A_{r}= -\jmath \varpi_c I-   \varpi_d  (\rho^2+1)^{-\frac{1}{2}}(\rho(\rho^2+1)^{-\frac{1}{2}}-\check {\mathbf A}_{m\rho c r})^{-1},\\
 \check B_{r} = (\rho^2+1)^{\frac{1}{2}} (\jmath \varpi_c I - \check {A}_{r}) \check {\mathbf B}_{m\rho c r}, \\
 \check C_{r} = (\rho^2+1)^{\frac{1}{2}} \check {\mathbf C}_{m\rho c r}(\jmath \varpi_c I - \check {A}_{r}),\\
 \check D_{r} =  (\rho^2+1)^{\frac{1}{2}} \varpi_d \check {\mathbf D}_{l\rho cr} -\check C_{r}(\jmath \varpi_c I - \check A_{r})^{-1} \check B_{r}. \\
 \end{array}\end{equation}
where $\varpi_c=0$ and $\varpi_d=\varpi_l$.
 \ENSURE Reduced-order model: $G_r(\jmath\omega):(\hat A_{r},\hat B_{r},\hat C_{r},\hat D_{r})$ or  $G_r(\jmath\omega):(\check A_{r},\check B_{r},\check C_{r},\check D_{r})$.
\end{algorithmic}
\end{algorithm}
\end{spacing}

\begin{theorem}[LF-type error bound via LF case PFDBT]\label{thm:4.2} Given a linear continuous-time system $G(\jmath \omega )$ and a pre-known LF interval $\omega \in \Omega_{l}: [-\varpi_l,+\varpi_l]$. Suppose the reduced model ${G_r}(\jmath \omega)$ is generated via the LF case PFDBT algorithm, then the approximation performance over pre-specified frequency interval satisfys the following FF-type error bound:
\begin{equation}\begin{array}{l}
\label{LF-errorbound}
\sigma_{max}(G(\jmath \omega ) - {G_r}(\jmath \omega))  \le 2 (\rho^2+\varpi_l^2)^{\frac{1}{2}} \sum\limits_{i = r + 1}^{n} {{\sigma _{li}}}, {\kern 6pt} \omega \in \Omega_l=:[-\varpi_l,+\varpi_l]
 \end{array}.
\end{equation}
\end{theorem}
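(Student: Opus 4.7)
The plan is to reduce the finite-frequency error bound (\ref{LF-errorbound}) to an entire-frequency bound on an associated discrete-time error system, where Lemma~\ref{lem3.1} applies directly, and then pull the bound back through the PFD bounded real lemma (Theorem~\ref{thm:2.1}).

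\medskip
\noindent\textbf{Step 1: Commutativity of the PFD mapping with the error interconnection.} Let $(A_r, B_r, C_r, D_r)$ denote the realization of $G_r$ produced by Routing~1. The standard realization of the error system $G - G_r$ has block-diagonal state matrix $A_e = \mathrm{diag}(A, A_r)$ together with $B_e = [B^T, B_r^T]^T$, $C_e = [C,\,-C_r]$, $D_e = D - D_r$. Because $((\rho+\jmath\varpi_c)I - A_e)^{-1}$ is block diagonal, substitution into (\ref{upperPDFM}) decouples entry by entry, and a short direct check confirms that even the feedthrough $\hat D_e$ splits as the difference of the mapped feedthroughs. Consequently, the upper-type PFD mapping of $G - G_r$ is exactly $\hat G_{m\rho c} - \hat G_{m\rho c,r}$, where $\hat G_{m\rho c,r}$ denotes the upper-type PFD mapping of $G_r$. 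Since (\ref{LF-PFDBT-c1}) is by construction the algebraic inverse of (\ref{upperPDFM}), this mapping returns precisely the LyaBT-reduced realization $(\hat{\mathbf A}_{l\rho cr},\hat{\mathbf B}_{l\rho cr},\hat{\mathbf C}_{l\rho cr},\hat{\mathbf D}_{l\rho cr})$.

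\medskip
\noindent\textbf{Step 2: Discrete-time EF-type bound on the mapped error.} Since $(\hat{\mathbf A}_{l\rho cr},\ldots,\hat{\mathbf D}_{l\rho cr})$ is obtained by applying the standard discrete-time LyaBT to $\hat G_{m\rho c}$, Lemma~\ref{lem3.1}(b) gives immediately
\begin{equation*}
\sigma_{\max}\bigl(\hat G_{m\rho c}(e^{\jmath\theta}) - \hat G_{m\rho c,r}(e^{\jmath\theta})\bigr) \le 2\sum_{i=r+1}^n \sigma_{li}, \qquad \forall\,\theta \in \Theta,
\end{equation*}
where $\sigma_{li}$ are the truncated Hankel singular values of the discrete-time mapped system $\hat G_{m\rho c}$.

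\medskip
\noindent\textbf{Step 3: Lifting to the continuous-time LF bound.} I would then invoke Theorem~\ref{thm:2.1}(a) with the role of $G$ played by the error system $G - G_r$ (whose upper-type PFD mapping coincides with $\hat G_{m\rho c} - \hat G_{m\rho c,r}$ by Step~1) and with $\hat\gamma_{m\rho c} = 2\sum_{i=r+1}^n \sigma_{li}$. This yields
\begin{equation*}
\sigma_{\max}\bigl(G(\jmath\omega) - G_r(\jmath\omega)\bigr) \le 2(\rho^2+\varpi_d^2)^{1/2}\sum_{i=r+1}^n \sigma_{li}, \qquad \forall\,\omega \in \Omega_m.
\end{equation*}
Specializing to the LF case via $\varpi_c = 0$, $\varpi_d = \varpi_l$ collapses $\Omega_m$ to $\Omega_l = [-\varpi_l,+\varpi_l]$ and produces (\ref{LF-errorbound}) exactly. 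An entirely parallel argument built on Theorem~\ref{thm:2.1}(b), the lower-type PFD mapping, and the inverse formulas (\ref{LF-PFDBT-c2}) handles Routing~2.

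The main obstacle I foresee is the bookkeeping of Step~1, specifically the nonlinear-looking feedthrough term $\hat D_e$ in (\ref{upperPDFM}): although block-diagonality of $A_e$ makes the decoupling transparent for the $A$, $B$, $C$ pieces, one must carry out the short but non-trivial algebra that confirms $\hat D_e$ equals the difference of the feedthroughs of $\hat G_{m\rho c}$ and $\hat G_{m\rho c,r}$. In addition, one needs the standing hypothesis $\rho > \rho_m^{\ast}$ from Proposition~\ref{prop:2.1} so that $\hat{\mathbf A}_{m\rho c}$ is Schur stable (making discrete-time LyaBT applicable and producing a Schur-stable $\hat{\mathbf A}_{l\rho cr}$), together with invertibility of $\hat{\mathbf A}_{l\rho cr}$ so that the inverse mapping in (\ref{LF-PFDBT-c1}) is well-defined and the continuous-time reduced realization $(A_r,B_r,C_r,D_r)$ used in Step~1 exists.
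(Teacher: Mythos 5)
Your proposal is correct and follows essentially the same route as the paper's proof: map the error system $G-G_r$ through the upper-type PFD mapping (observing that it coincides with $\hat G_{m\rho c}-\hat G_{m\rho c,r}$), apply the discrete-time LyaBT bound of Lemma~\ref{lem3.1} to the mapped error, and lift the bound back via Theorem~\ref{thm:2.1}(a). Your explicit verification of the feedthrough decoupling in Step~1 and the standing hypotheses ($\rho>\rho_m^{\ast}$, invertibility of $\hat{\mathbf A}_{l\rho cr}$) are details the paper leaves implicit, but they do not change the argument.
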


\begin{proof} The error system between the original high-order system model $G(\jmath \omega)$ and
the truncated $(n-1)^{th}$ reduced system $G_{r}(\jmath \omega)$ can be
represented by
\begin{equation}\begin{array}{l}
\label{errorsystem}
E_r(\jmath \omega)=G(\jmath \omega)-G_r(\jmath \omega)=:\pmatset{1}{0.36pt}
  \pmatset{0}{0.2pt}
  \pmatset{2}{8pt}
  \pmatset{3}{8pt}
  \pmatset{4}{8pt}
  \pmatset{5}{4pt}
  \pmatset{6}{4pt}    \begin{pmat}[{|}]
        \mathcal A_{er}   &    \mathcal B_{er} \cr\-
        \mathcal C_{er}   &    \mathcal D_{er} \cr
      \end{pmat}    =
  \pmatset{1}{0.36pt}
  \pmatset{0}{0.2pt}
  \pmatset{2}{4pt}
  \pmatset{3}{4pt}
  \pmatset{4}{4pt}
  \pmatset{5}{2pt}
  \pmatset{6}{2pt}
 \begin{pmat}[{.|}]
      A_r   & 0  &   B_r\cr
      0     & A  &   B  \cr\-
       -C_r & C  &   D-D_r \cr
      \end{pmat}     \\
 \end{array}.\end{equation}\\
\noindent suppose the parameter matrices $(A_r,B_r,C_r,D_r)$ are computed via upper routine, then apply the upper case PFD mapping for the error system (\ref{errorsystem}). It can be concluded that the mapped error system can be represented by
\begin{equation}\begin{array}{l}
\label{mapederrorsystem}
\hat {\mathbf E}_{m\rho cr}(e^{\jmath \theta})=\hat {\mathbf G}_{m\rho c}(e^{\jmath \theta})-\hat {\mathbf G}_{m\rho c r}(e^{\jmath \theta})=:\pmatset{1}{0.36pt}
  \pmatset{0}{0.2pt}
  \pmatset{2}{8pt}
  \pmatset{3}{8pt}
  \pmatset{4}{8pt}
  \pmatset{5}{4pt}
  \pmatset{6}{4pt}    \begin{pmat}[{|}]
       \hat {\mathcal A}_{m\rho c er}   &    \hat {\mathcal B}_{m\rho c er} \cr\-
       \hat {\mathcal C}_{m\rho c er}   &    \hat {\mathcal D}_{m\rho c er}  \cr
      \end{pmat}    =
  \pmatset{1}{0.36pt}
  \pmatset{0}{0.2pt}
  \pmatset{2}{4pt}
  \pmatset{3}{4pt}
  \pmatset{4}{4pt}
  \pmatset{5}{2pt}
  \pmatset{6}{2pt}
 \begin{pmat}[{.|}]
      \hat {\mathbf A}_{m\rho c r}   & 0  &   \hat {\mathbf B}_{m\rho c r} \cr
      0     & \hat {\mathbf A}_{m\rho c}   &   \hat {\mathbf B}_{m\rho c}   \cr\-
       -\hat {\mathbf C}_{m\rho c r}  & \hat {\mathbf C}_{m\rho c}  &   \hat {\mathbf D}_{m\rho c} -\hat {\mathbf D}_{m\rho c r}  \cr
      \end{pmat}     \\
 \end{array}.\end{equation}\\
\noindent Since $\hat {\mathbf G}_{m\rho c r}(\jmath \omega)$ is the reduced model obtained by applying the standard LyaBT for the upper PFD mapped system $\hat {\mathbf G}_{m\rho c}(\jmath \omega)$. According to the Lemma 4, we have
\begin{equation}\begin{array}{l}
\label{LF-errorbound}
\sigma_{max}(\hat {\mathbf G}_{m\rho c}(e^{\jmath \theta})-\hat {\mathbf G}_{m\rho c r}(e^{\jmath \theta}))  \le 2 \sum\limits_{i = r + 1}^{n} {{\sigma _{l\rho i}}}, {\kern 6pt} \theta \in \Theta:=[-\pi, +\pi)
 \end{array}.
\end{equation}

 \noindent Noticing that the error system $\hat {\mathbf E}_{m\rho cr}(\jmath \omega)$ can be obtained by applying the upper type PFD mapping on error system $E_r(\jmath \omega)$, then we have
\begin{equation}\begin{array}{l}
\label{LF-errorbound}
\sigma_{max}(G(\jmath \omega ) - {G_r}(\jmath \omega))  \le 2 (\rho^2+\varpi_l^2)^{\frac{1}{2}} \sum\limits_{i = r + 1}^{n} {{\sigma _{l\rho i}}}, {\kern 6pt} \omega \in \Omega_l=:[-\varpi_l,+\varpi_l]
 \end{array}.
\end{equation}
\noindent according to Theorem \ref{thm:2.1}.  In the cases that the parameter matrices of the reduced model is computed via  routine2 of the PFDBT algorithm, one can prove the LF-type error bound similarly. Thus, the proof is completed. \\ \end{proof}

\begin{remark} \label{rem:3.2} As far as our knowledge, this is the first result that provides FF-type error bound in the framework of balanced truncation. Similar with the EF-type error bound (\ref{LyaBT-EFEB-C}) provided by LyaBT, the FF-type error bound (\ref{LF-errorbound}) is also very simple and \emph{a priori}. Comparing the values of EF-type error bound with FF-type error bound theoretically is difficult, however,  it is shown that the FF-type error bound could be smaller than the EF-type error bound by choosing a proper parameter $\rho$. To obtain a proper value of the parameter $\rho$, we suggest a simple line search over the admissible range of $\rho$. As shown by the examples in the sequel, one could find the proper parameter by observing the curves of FF-type error bound with respect to several different values of the parameter $\rho$. How to compute the optimal parameter rendering the FF-type error bound as the smallest value is still an open problem for further investigation.
\end{remark}

\begin{remark} \label{rem:3.2} It is well-known that the original model is required to be stable to apply the standard LyaBT, moreover, the stability will be preserved by the reduced model generated via the standard LyaBT.  The stability restriction on the original model is not needed for PFDBT. For non-stable original model, one could apply the PFDBT just by choosing a larger enough parameter rendering the PFD mapped matrices $\hat {\mathbf A}_{m\rho c}$ or $\check {\mathbf A}_{m\rho c}$ be Schur stable. At the same time, the PFDBT don't possesses the stability preservation property. In other words, the stability of reduced model cannot be theoretically guaranteed even the original model is stable. According to our numerical experiments, one could always obtain a stable reduced model in cases that the original model is stable by selecting a proper parameter (especially by letting the parameter $\rho$ large enough).
\end{remark}

\begin{remark} \label{rem:3.3} In algorithm 2, only the discrete-time PFD mapped systems and the discrete-time LyaBT procedures are involved. Obviously, if we resort to the continuous-time PFD mapped systems and the continuous-time LyaBT procedures in a similar way, another routines give rise to reduced models could be derived. Unfortunately, the parameter matrices of the reduced models generally will become complex matrices under such a circumstance. Besides, extending the PFDBT for LF case to the MF case is also feasible. Likewise, such an extension generally will leads to complex reduced models since $\varpi_c \neq 0$.
\end{remark}

\subsection{\textbf{PFD Balanced Truncation (HF Case)}}

Similarly with the LF case, we now present the HF case PFDBT algorithm and the results on HF-type error bound. \\

\begin{spacing}{1.4}
\begin{algorithm}
\caption{PFDBT(HF case)}
\begin{algorithmic}
\REQUIRE {Full-order model $(A,B,C,D)$, HF frequency range $\Omega_h: (-\infty, -\varpi_h]\cup [+\varpi_h, +\infty)$, user-defined admissible parameter $\rho$ and the order of reduced model $(r)$}, \\[3mm]
\textbf{Routing 1}. apply the standard discrete-time LyaBT for the mapped discrete-time system $\hat {\mathbf G}_{h\rho c}(e^{\jmath \theta})$ to obtain the mapped discrete-time reduced model
$\hat {\mathbf G}_{h\rho cr}(e^{\jmath \theta}): (\hat {\mathbf A}_{h\rho cr},\hat {\mathbf B}_{h\rho cr},\hat {\mathbf C}_{h\rho cr},\hat {\mathbf D}_{h\rho cr})$. Compute the reduced-order model by applying inverse upper type PFD mapping as follows:
\begin{equation}
\label{LF-PFDBT-c1}
\begin{array}{l}
 \hat A_{r}=   (\rho^2+\varpi_h^2)^{\frac{1}{2}} \hat {\mathbf A}_{h\rho cr} -\rho I,\\
 \hat B_{r} =  (\rho^2+\varpi_h^2)^{\frac{1}{2}} \hat {\mathbf B}_{h\rho cr}, \\
 \hat C_{r} =  (\rho^2+\varpi_h^2)^{\frac{1}{2}} \hat {\mathbf C}_{h\rho cr},\\
 \hat D_{r} =  (\rho^2+\varpi_h^2)^{\frac{1}{2}} \hat {\mathbf D}_{h\rho cr}. \\
 \end{array}\end{equation}

\textbf{Routing 2.} apply the standard discrete-time LyaBT for the mapped discrete-time system $\hat {\mathbf G}_{h\rho c}(e^{\jmath \theta})$ to obtain the mapped discrete-time reduced model
$\hat {\mathbf G}_{h\rho cr}(e^{\jmath \theta}): (\hat {\mathbf A}_{h\rho cr},\hat {\mathbf B}_{h\rho cr},\hat {\mathbf C}_{h\rho cr},\hat {\mathbf D}_{h\rho cr})$. Compute the reduced-order model by applying inverse upper type PFD mapping as follows:
 \begin{equation}
\label{LF-PFDBT-c2}\begin{array}{l}
 \check A_{r}=  \varpi_h(\rho^2+1)^{-\frac{1}{2}}  \check {\mathbf A}_{h\rho c r} (I+\rho(\rho^2+1)^{-\frac{1}{2}}\check {\mathbf A}_{h\rho c r})^{-1},\\
 \check B_{r} =  (\varpi_h I - \rho \check {A}_{r}) \check {\mathbf B}_{h\rho c r}, \\
 \check C_{r} =   \check {\mathbf C}_{h\rho c r}(\varpi_h I - \rho \check {A}_{r}) ,\\
 \check D_{r} =  \varpi_h(\rho^2+1)^{\frac{1}{2}} ( \check {\mathbf D}_{h\rho cr} -\rho(\rho^2+1)^{-\frac{1}{2}}\varpi_h^{-1} \check C_{r}(\varpi_h I - \rho \check A_{r})^{-1} \check B_{r}). \\
 \end{array}\end{equation}

\ENSURE Reduced-order model: $G_r(\jmath\omega):(\hat A_{r},\hat B_{r},\hat C_{r},\hat D_{r})$ or  $G_r(\jmath\omega):(\check A_{r},\check B_{r},\check C_{r},\check D_{r})$.
\end{algorithmic}
\end{algorithm}
\end{spacing}

\begin{theorem} [HF-type error bound via HF case PFDBT] \label{thm:4.3} Given a linear continuous-time system $G(\jmath\omega)$ and a pre-known HF frequency interval $\omega \in \Omega_{h}: (-\infty, -\varpi_h] \cup [+\varpi_h,+\infty)$. Suppose the reduced model $G_r(\jmath\omega)$ is generated via PFDBT, then the approximation performance over pre-specified frequency interval satisfy the following HF-type error bound:
\begin{equation}\begin{array}{l}
\label{LF-errorbound}
\sigma_{max}(G(\jmath \omega ) - {G_r}(\jmath \omega))  \le 2 (\rho^2+\varpi_h^2)^{\frac{1}{2}} \sum\limits_{i = r + 1}^{n} {{\sigma _{h\rho i}}}, {\kern 6pt} \omega \in \Omega_h=:(-\infty, -\varpi_h]\cup[+\varpi_h, +\infty)
 \end{array}.
\end{equation}
\end{theorem}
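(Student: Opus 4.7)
The plan is to mimic the strategy used in the proof of Theorem~\ref{thm:4.2}, transporting the argument through the HF version of the PFD mapping rather than the MF/LF version. First I would form the error system
\begin{equation*}
E_r(\jmath\omega)=G(\jmath\omega)-G_r(\jmath\omega)=\pmatset{1}{0.36pt}\pmatset{0}{0.2pt}\pmatset{2}{4pt}\pmatset{3}{4pt}\pmatset{4}{4pt}\pmatset{5}{2pt}\pmatset{6}{2pt}\begin{pmat}[{.|}] A_r & 0 & B_r\cr 0 & A & B\cr\- -C_r & C & D-D_r\cr\end{pmat},
\end{equation*}
exactly as in \eqref{errorsystem}. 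Suppose that the reduced-order parameter matrices are obtained through Routine~1, so that $(\hat A_r,\hat B_r,\hat C_r,\hat D_r)$ comes from inverting the upper-type HF PFD mapping applied to the reduced discrete-time system $\hat{\mathbf G}_{h\rho cr}$ produced by the standard discrete-time LyaBT applied to $\hat{\mathbf G}_{h\rho c}$.

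The crucial step is to verify that the upper-type HF PFD mapping \emph{commutes with the error-system construction}, i.e.\ that applying $\hat{\mathscr M}_{h\rho c}$ to the block realization of $E_r(\jmath\omega)$ yields the discrete-time system $\hat{\mathbf G}_{h\rho c}(e^{\jmath\theta})-\hat{\mathbf G}_{h\rho cr}(e^{\jmath\theta})$. Because the HF mapping in Definition~\ref{def:2.2}a) is affine in the coefficient matrices (a scaling of $(\rho I+A,B,C,D)$ by $(\rho^2+\varpi_h^2)^{-1/2}$), this commutation is more transparent than in the MF case: the mapping preserves the block-triangular structure of the error realization and thus
\begin{equation*}
\hat{\mathbf E}_{h\rho cr}(e^{\jmath\theta}):=\hat{\mathscr M}_{h\rho c}(E_r,\Omega_h)=\hat{\mathbf G}_{h\rho c}(e^{\jmath\theta})-\hat{\mathbf G}_{h\rho cr}(e^{\jmath\theta}).
\end{equation*}
This is the only nontrivial algebraic identity I would need to check in detail, and for the upper-type HF mapping it is essentially a direct substitution; for Routine~2 the same identity has to be verified with the lower-type mapping, where the $(\varpi_h I-\rho A)^{-1}$ factor must be commuted through the block-triangular error realization.

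Once this commutation is in hand, I would invoke Lemma~\ref{lem3.1} in its discrete-time form applied to $\hat{\mathbf G}_{h\rho c}$ and its BT reduction $\hat{\mathbf G}_{h\rho cr}$, obtaining
\begin{equation*}
\sigma_{\max}\bigl(\hat{\mathbf G}_{h\rho c}(e^{\jmath\theta})-\hat{\mathbf G}_{h\rho cr}(e^{\jmath\theta})\bigr)\le 2\sum_{i=r+1}^{n}\sigma_{h\rho i},\qquad\theta\in\Theta,
\end{equation*}
where $\sigma_{h\rho i}$ are the Hankel singular values of $\hat{\mathbf G}_{h\rho c}$. By the commutation identity above, the left-hand side equals $\sigma_{\max}(\hat{\mathbf E}_{h\rho cr}(e^{\jmath\theta}))$, so the hypothesis of Theorem~\ref{thm:2.2}a) is satisfied by the error system $E_r$ with $\hat\gamma_{h\rho c}=2\sum_{i=r+1}^{n}\sigma_{h\rho i}$. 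Applying Theorem~\ref{thm:2.2}a) to $E_r$ yields
\begin{equation*}
\sigma_{\max}\bigl(G(\jmath\omega)-G_r(\jmath\omega)\bigr)\le 2(\rho^2+\varpi_h^2)^{1/2}\sum_{i=r+1}^{n}\sigma_{h\rho i},\qquad\omega\in\Omega_h,
\end{equation*}
which is the claimed HF-type error bound. The Routine~2 case is handled identically, using Theorem~\ref{thm:2.2}b) in place of Theorem~\ref{thm:2.2}a) and the lower-type HF mapping.

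The main obstacle is the commutation identity between the HF PFD mapping and the error realization; everything else is a mechanical combination of the standard discrete-time LyaBT error bound with the HF PFD bounded-real lemma of Theorem~\ref{thm:2.2}. Assuming the validity of the HF PFD bounded real lemma (whose proof is only sketched in the excerpt), the remaining work is routine, and essentially parallel to the proof of Theorem~\ref{thm:4.2}.
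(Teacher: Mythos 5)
Your proposal is correct and follows essentially the same route as the paper, which itself only states that the proof proceeds "in a similar way" to Theorem~\ref{thm:4.2}: form the block realization of the error system, observe that the HF PFD mapping commutes with this construction, invoke the discrete-time LyaBT bound of Lemma~\ref{lem3.1}, and pull the bound back through Theorem~\ref{thm:2.2}. Your explicit remark that the commutation is immediate for the upper-type HF mapping (being an affine rescaling of $(A,B,C,D)$) is a useful clarification the paper omits.
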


\begin{proof} The proof can be completed in a similar way of the prove of Theorem \ref{thm:4.2}
\end{proof}

\section{Illustrating Examples}

\noindent In this section we demonstrate the validity of the PFD bounded real lemmas and the advantages of the PFDBT schemes through four examples.

\begin{example} \label{exp:1} Lets consider a simple linear continuous-time system (\ref{originalsystem}) with the following parameter matrices:
\begin{equation}
\small
\begin{array}{l}
 \pmatset{1}{0.36pt}
  \pmatset{0}{0.2pt}
  \pmatset{2}{4pt}
  \pmatset{3}{4pt}
  \pmatset{4}{6pt}
  \pmatset{5}{6pt}
  \pmatset{6}{6pt}    \begin{pmat}[{|}]
       A    &   B \cr\-
       C    &   D   \cr
      \end{pmat}    =
 \pmatset{1}{0.36pt}
  \pmatset{0}{0.2pt}
  \pmatset{2}{4pt}
  \pmatset{3}{4pt}
  \pmatset{4}{6pt}
  \pmatset{5}{6pt}
  \pmatset{6}{6pt}
 \begin{pmat}[{.|}]
  -4.1859   &   0.7195  & 1.8712\cr
   1.7797   &  -1.1872  & 1.1639\cr\-
   0.4528   & -2.4099   & 2.5606 \cr
      \end{pmat}.
 \end{array}\end{equation}
\noindent We are interested to apply the proposed PFD bounded real lemma for estimating the maximum singular value of this system over four different low-frequency ranges: $\Omega_l^1:[-0.1,0.1],\Omega_l^2:[-1,1],\Omega_l^3:[-10,10],\Omega_l^4:[-100,100]$.

 \begin{figure}[ht!]
    \centering
      \includegraphics[scale=0.58]{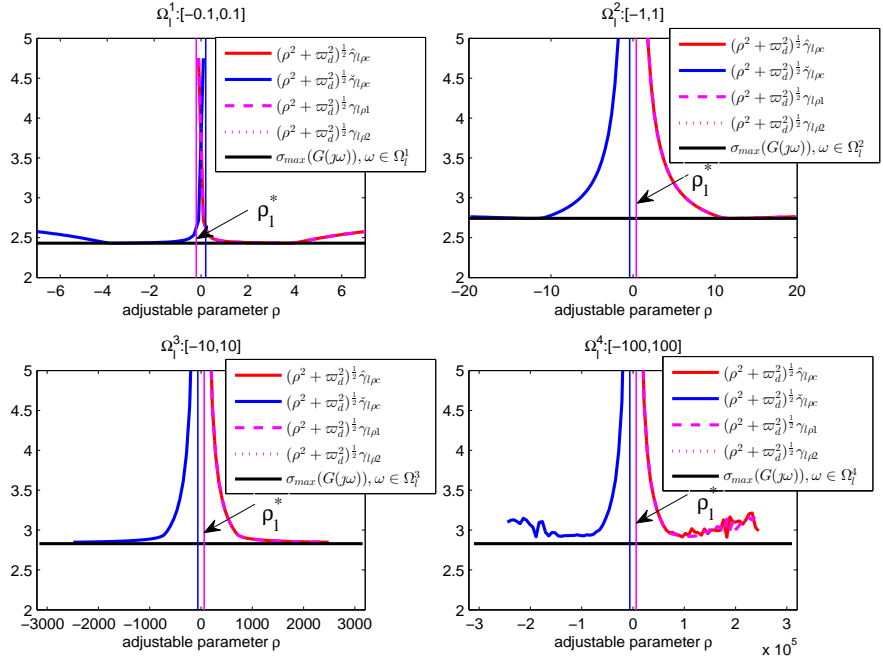}
               \caption{Estimating the maximum singular value of given system over specified frequency range via PFD bounded real lemma}
       \end{figure}

       \noindent As shown in Fig. 1, the estimated maximum singular values obtained by PFD bounded real lemma with any admissible parameter $\rho$ are always lager than the actual maximum singular values over the specified low-frequency ranges.  In particular, the gaps between the estimated maximum singular values and the actual maximum singular value may be very small if the adjustable parameter $\rho$ lies in an appropriate range. The results indicate that the validity and effectiveness of the proposed PFD bounded real lemma. \\
\end{example}

\begin{example} Lets consider a linear continuous-time system (\ref{originalsystem}) with the following parameter matrices:   \begin{equation}
\small
\begin{array}{l}
 \pmatset{1}{0.36pt}
  \pmatset{0}{0.2pt}
  \pmatset{2}{8pt}
  \pmatset{3}{8pt}
  \pmatset{4}{8pt}
  \pmatset{5}{8pt}
  \pmatset{6}{8pt}    \begin{pmat}[{|}]
       A    &   B \cr\-
       C    &   D   \cr
      \end{pmat}    =
 \pmatset{1}{0.36pt}
  \pmatset{0}{0.2pt}
  \pmatset{2}{2pt}
  \pmatset{3}{2pt}
  \pmatset{4}{2pt}
  \pmatset{5}{2pt}
  \pmatset{6}{2pt}
 \begin{pmat}[{.....|}]
   -4.7488  &  0.3264  &  1.9341  &  -1.2358 &   1.4344 &   1.0027 &  0.0971  \cr
   -0.8072  & -1.9578  &  -1.2402 &   0.4604 &  -1.3092 &   0.7351 &  -0.0346 \cr
    1.2614  & -0.9532  & -5.7282  &  1.4590  &  1.9886  & -1.7071  &  2.6406  \cr
    0.2184  & -0.8236  &  0.6495  & -4.7123  &  1.3120  &  0.2781  & -1.8819  \cr
   -1.4203  & -1.9980  & -0.6598  & -0.2915  & -3.4583  & -1.5371  & 1.9220   \cr
   -1.2009  & -1.6311  &  0.1655  & -1.3573  &  1.5405  & -3.5409  &  -0.4961 \cr\-
   1.9256   &  1.4937  & -0.4044  &  0.7905  & -0.4776  &  2.0169  & 0.9839   \cr
      \end{pmat}.
 \end{array}\end{equation}

\noindent  Consider two different frequency range $\Omega_l^1:[-1,1]$ and $\Omega_l^2:[-2,2]$. In order to show the differences between the standard LyaBT, SPA and the proposed PFDBT, the EF-type error bound via LyaBT(SPA),  the FF-type error bound via PFDBT as well as the actual approximation error are depicted by the following Fig. 2 and Fig. 3.

 \begin{figure}[ht!]
    \centering
      \includegraphics[scale=0.59]{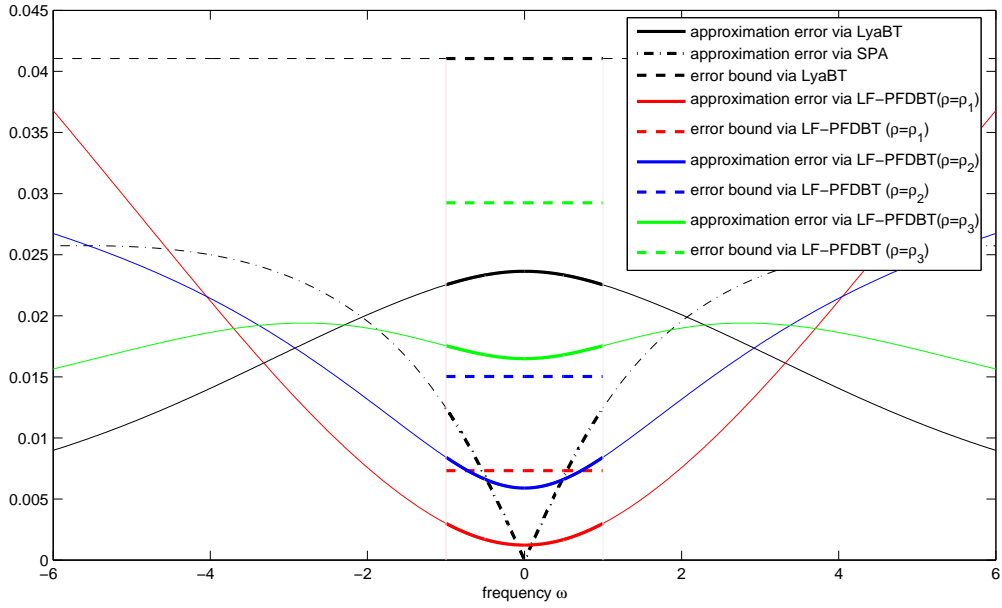}
       \caption{comparison between the standard LyaBT, SPA and the proposed PFDBT ($\Omega_l^1:[-1,1]$)}
       \end{figure}

        \begin{figure}[ht!]
    \centering
      \includegraphics[scale=0.59]{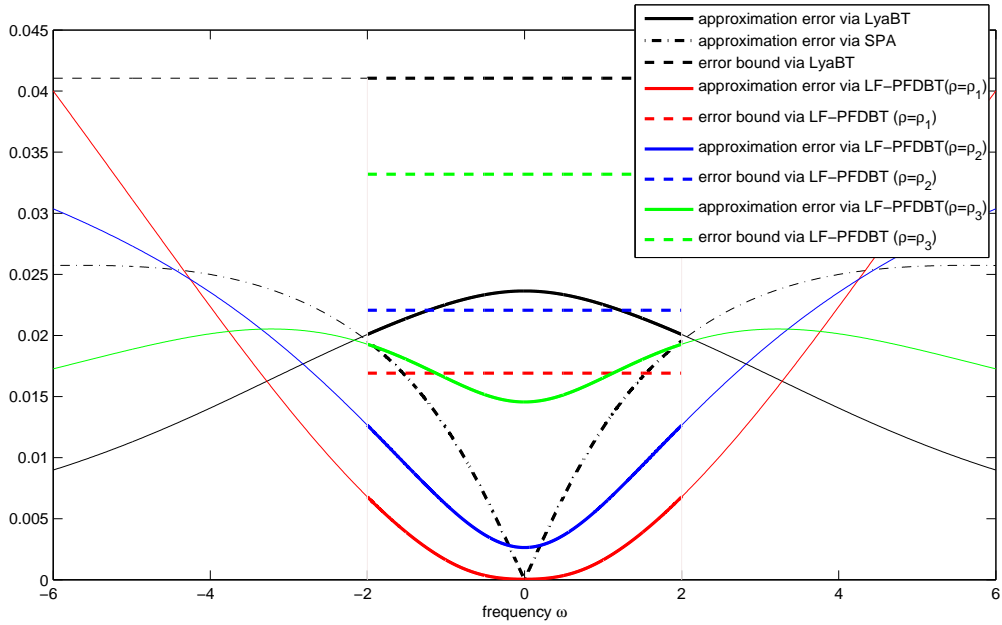}
                 \caption{comparison between the standard LyaBT, SPA, and the proposed PFDBT ($\Omega_l^2:[-2,2]$)}
       \end{figure}

\noindent To apply the proposed PFDBT, here we just randomly choose three different admissible values of the parameter $\rho$ ($\rho_1=4, \rho_2=7,\rho=20$). As Fig. 2 and Fig. 3 illustrate, the proposed PFDBT performs better than the standard LyaBT. In particular, the actual in-band approximation error resulted by PFDBT also could be smaller than the actual in-band error obtained by SPA, which is well-known as for good low-frequency approximation performance. More importantly, the PFDBT possesses an advantage on the in-band approximation error estimation. Obviously, the FF-type error bounds provided by PFDBT are smaller than the EF-type error bound provided by LyaBT(SPA). This property makes the proposed PFDBT more appealing for selecting the minimum order of the reduced model satisfying \emph{a priori} given error tolerance.\\
\end{example}

\begin{example}[The CD player benchmark example \cite{morChaP05}] \label{exp:3}
\noindent This original model of benchmark CD player example describes the dynamics between a swing arm on which a lens is mounted by means of two horizontal leaf springs.   The model has $120$ states, i.e., $n=120$ (Please refer to \cite{morChaP05} for more details). Suppose the interested frequency ranges are of low-frequency type, here we are intended to compare the achievable in-band error bound by applying the standard LyaBT and the proposed PFDBT.
 \begin{figure}[ht!]
    \centering
      \includegraphics[scale=0.58]{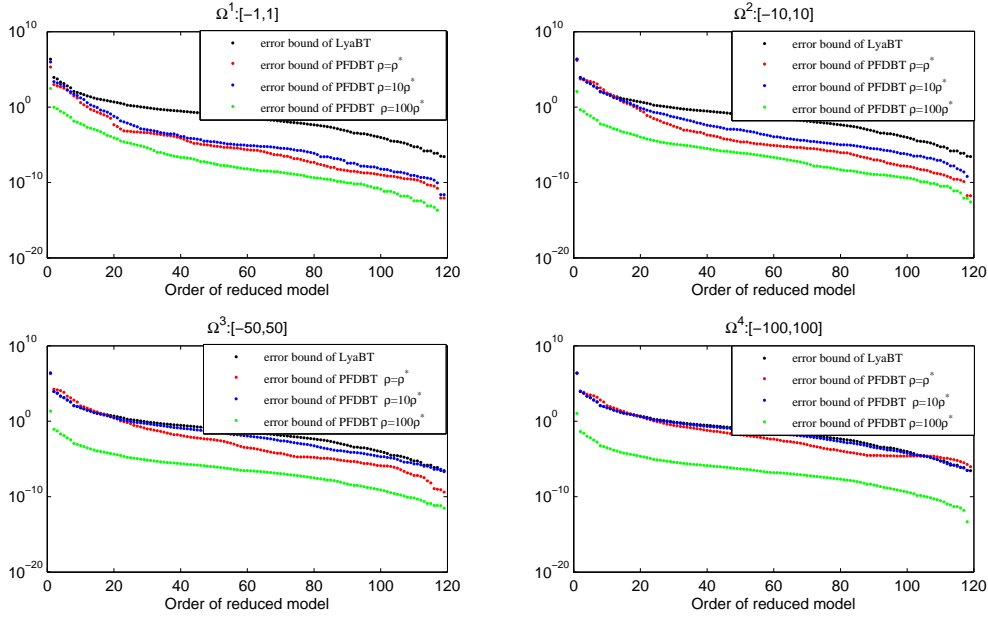}
                  \caption{comparison between the EF-type error bound via standard LyaBT and the FF-type error bound via PFDBT}
       \end{figure}

\noindent Given four different low-frequency ranges $\Omega_l^i,i=1,2,3,4$, the corresponding FF-type error bounds with different values ($\rho^{i*}, 10\rho^{i*},100\rho^{i*}, i=1,2,3,4$) of the adjustable parameter are depicted in Fig. 4, where $\rho^{i*}, i=1,2,3,4$ is the minimum value rendering the PFD mapped system $\hat G_{m \rho c} (e^{\jmath \theta})$ Schur stable.   For comparison, the EF-type error bounds obtained by standard LyaBT are also included.  From Fig. 4, it is clear that the PFDBT is possible to give rise to a smaller in-band error bound. Certainly, to what extend the in-band error bound can be improved is depended on the choice of parameter $\rho$.\\
\end{example}

\begin{example}[The ISS benchmark example \cite{morChaP05}] \label{exp:4} This is a model of component $1r$ (Russian service module) of the ISS. It has 270 states, 3 inputs and 3 outputs (Please refer to \cite{morChaP05}
for more details). Here we are interested to approximate the original model over a high-frequency $\Omega_h: (-\infty, -35] \cup [35, +\infty)$. Suppose there exists \emph{a priori} assigned error tolerance on the in-band approximation performance as follows,
\[\begin{array}{l}
\sigma_{max}(G(\jmath \omega ) - {G_r}(\jmath \omega))  \le 0.001, \omega \in \Omega_h: (-\infty, -35] \cup [35, +\infty)
 \end{array}
\]
 To decide the minimum order of reduce model satisfying the error tolerance, the FF-type error bound provided by PFDBT and the EF-type error bound provided by LyaBT are plotted in Fig. 5.
 \begin{figure}[ht!]
    \centering
      \includegraphics[scale=0.58]{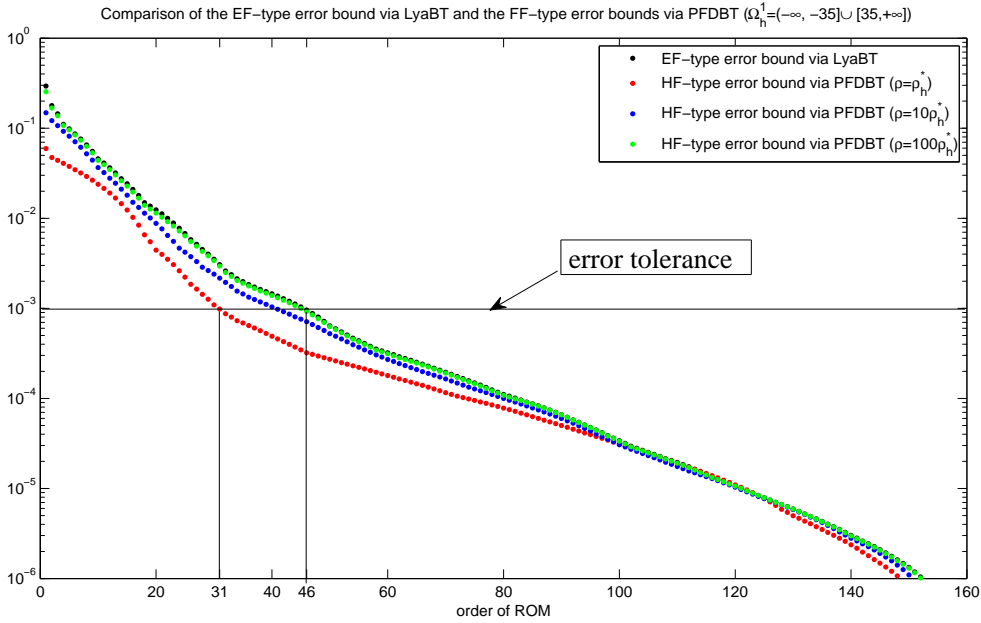}
               \caption{Deciding the minimum order of reduced model by using the error bounds}
       \end{figure}

\noindent As shown by Fig. 5, choosing the $31^{th}$ reduced order model is enough if we adopt the PFDBT. In contrast, $46^{th}$ reduced order model is required if we use the standard LyaBT.

     \begin{figure}[ht!]
    \centering
      \includegraphics[scale=0.58]{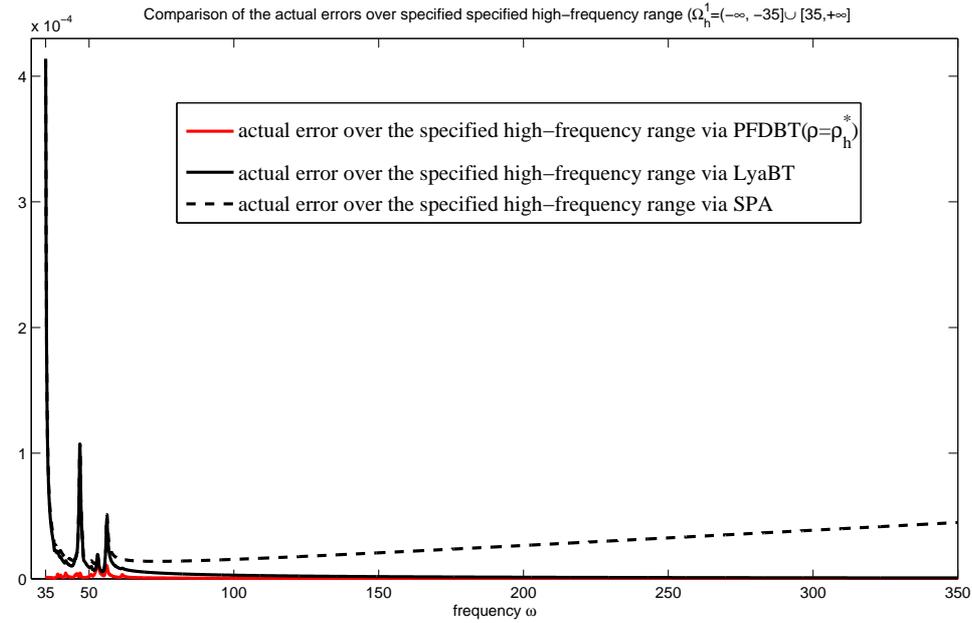}
                 \caption{Actual in-band approximation errors obtained by LyaBT, SPA, and the proposed PFDBT}
       \end{figure}
\noindent Fig. 6 illustrates the actual in-band approximation errors between the original model and the $31^{th}$ reduced models obtained via LyaBT, SPA and the proposed PFDBT,  where $\rho_h^{*}, i=1,2,3,4$ is the minimum value rendering the PFD mapped system $\hat G_{h \rho c} (e^{\jmath \theta})$ Schur stable.  Obviously, PFDBT yields the best in-band approximation performance. Besides, it is shown that both the EF-type error bound and the FF-type error bound are not tight. In fact, all the $31^{th}$ reduced models satisfy the in-band error tolerance. However, only the $31^{th}$ reduced model generated via PFDBT is pre-known to satisfy the in-band error tolerance.
\end{example}

\section{Conclusions and Future Work}
In this paper, we have proposed new parameterized frequency-dependent balanced truncation (PFDBT) schemes to solve some finite frequency (FF) MOR problems. Specifically, the merit of our approach is a family of PFD mapped systems of a given LTI system in the presence of a specified frequency range. We have shown that the finite-frequency maximum singular values of the given system can be bounded by the entire-frequency maximum singular value of the PFD mapped systems. Furthermore, PFDBT schemes solving the LF-MOR (lower frequency) and HF-MOR (higher frequency) problems while providing LF-type and HF-type error bounds are derived by utilizing the PFD bounded real lemmas.  Numerical examples illustrate the results with a comparison between the proposed approach and the standard BT and SPA methods. As future work, it would be interesting to study the MF-MOR (middle frequency) problem in a similar way, i.e., to develop a MF-case PFDBT scheme generating real reduced-order models while providing an MF-type error bound.

\bibliographystyle{IEEEtran}

\end{document}